\DeclareMathAlphabet{\mathpzc}{OT1}{pzc}{m}{it}
\newcommand{\set}[1]{\left\{#1\right\}}
\newcommand{\tuple}[1]{\left(#1\right)}
\newcommand{\MSOT}{\text{MSOT}\xspace}
\newcommand{\MSOTs}{\text{MSOTs}\xspace}
\newcommand{\FOT}{\text{FOT}\xspace}
\newcommand{\MSO}{\text{MSO}\xspace}
\newcommand{\MSOeq}{\text{MSO}$[=]$\xspace}
\newcommand{\FO}{\text{FO}$[<]$\xspace}
\newcommand{\MSOc}{\text{MSO}$_\con$\xspace}
\newcommand{\FOc}{\text{FO}${[<]}_\con$\xspace}
\newcommand{\con}{\mathsf{c}}
\newcommand{\F}{$\mathcal{F}$\xspace}
\newcommand{\Fc}{$\mathcal{F}_\con$\xspace}
\newcommand{\FOd}{\text{FO}$^2{[<]}$\xspace}
\newcommand{\FOdc}{\text{FO}$^2{[<]}_\con$\xspace}
\newcommand{\Bsig}[1]{$\mathcal B \Sigma _{#1}[<]$\xspace}
\newcommand{\Sig}[1]{$\Sigma _{#1}[<]$\xspace}
\newcommand{\Bsigc}[1]{$\mathcal B \Sigma _{#1}{[<]}_\con$\xspace}
\newcommand{\FOsuc}{\text{FO}$[+1]$\xspace}
\newcommand{\dom}{\text{dom}} 
\newcommand\sem[1]{\llbracket#1\rrbracket}
\newcommand{\pspace}{\textsc{PSpace}\xspace}
\newcommand{\ptime}{\textsc{PTime}\xspace}
\newcommand{\exptime}{\textsc{ExpTime}\xspace}
\newcommand\aut{\mathcal{A}}
\newcommand\laut{\mathcal{L}}
\newcommand\raut{\mathcal{R}}
\newcommand\trans{\mathcal{T}}
\newcommand\bim{\mathcal{B}}
\newcommand\var{\textbf{C}\xspace}
\newcommand\ap{\mathbf{A}}
\newcommand\id{\mathbf{I}}
\newcommand\fin{\mathbf{F}}
\newcommand\da{\mathbf{DA}}
\newcommand\jtrivial{\mathbf{J}}
\newcommand\out{\mathpzc{o}}
\newcommand\bout{\omega}
\newcommand\init{\mathpzc{i}}
\newcommand\final{\mathpzc{t}}
\newcommand\lfinal{\lambda}
\newcommand\rfinal{\rho}
\newcommand\dist[1]{\parallel\!\! {#1}\!\!\parallel}
\newcommand\cla[1]{[#1]}
\newcommand{\leftcan}{\mathit{Left}}
\newcommand{\rightcan}{\mathit{Right}}
\newcommand{\finer}{\sqsubseteq}
\newcommand{\bimeq}{\mathrel{\boxminus}}
\newcommand{\leftcong}{\mathrel{\leftharpoonup}}
\newcommand{\rightcong}{\mathrel{\rightharpoonup}}
\newcommand{\contribution}[1]{\underline{#1}}
\newcommand{\struct}{\mathcal{S}}
\newcommand{\lab}{label}
\newcommand{\good}{$\epsilon$-isolating\xspace}
\newcommand{\ie}{\emph{i.e.}\xspace}
\newcommand{\eg}{\emph{e.g.}\xspace}
\begin{document}

\title[rational transductions]{Logical and algebraic characterizations\texorpdfstring{\\}{} of rational transductions}

\author[E. Filiot]{Emmanuel Filiot\rsuper{a}}
\address{\lsuper{a}Universit\'e Libre de Bruxelles}
\email{efiliot@ulb.ac.be}
\thanks{This work was partially supported by the ANR \emph{ExStream}
    (ANR-13-JS02-0010) and DeLTA (ANR-16-CE40-0007) projects, 
    the ARC project \emph{Transform}
    (French speaking community of Belgium) and the Belgian FNRS CDR project
    \emph{Flare}. Emmanuel Filiot is research associate at
    F.R.S.-FNRS.}

\author[O. Gauwin]{Olivier Gauwin\rsuper{b}}
\address{\lsuper{b}Universit\'e de Bordeaux, LaBRI, CNRS}
\email{\{olivier.gauwin,nlhote\}@labri.fr}

\author[N. Lhote]{Nathan Lhote\rsuper{{a,b}}}

\keywords{rational word transductions, definability problems, first-order logic, algebraic characterizations}
\subjclass{Theory of computation\~ Transducers} 

\begin{abstract}
  Rational word languages can be defined by several equivalent means:
  finite state automata, rational expressions, finite congruences, or
  monadic second-order (MSO) logic.
  The robust subclass of aperiodic languages is defined by:
  counter-free automata, star-free expressions, aperiodic (finite) congruences,
  or first-order (FO) logic.
  In particular, their algebraic characterization by aperiodic congruences
  allows to decide whether a regular language is aperiodic.

  We lift this decidability result to rational transductions,
  i.e., word-to-word functions defined by finite state transducers.
  In this context, logical and algebraic characterizations
  have also been proposed.
  Our main result is that one can decide if a rational transduction (given as
  a transducer) is in a given decidable congruence class.
  We also establish a transfer result from logic-algebra equivalences over languages to equivalences over transductions.
  As a consequence, it is decidable if a rational transduction is first-order definable,
  and we show that this problem is PSPACE-complete.

\end{abstract}

\maketitle
\section*{Introduction}

\begin{table}
  \small
  \begin{tabular}{@{}l|l|l|l} 
    & Machine & Logic & Algebra \\
    & & & (Canonical object) \\ \toprule

    \multicolumn{3}{l}{\textit{Languages}} &
    \begin{minipage}{4.5cm}
      syntactic congruence =\\
      minimal automaton~\cite{Myhill57,Nerode58}
    \end{minipage}
    \\ \midrule

    regular & automata & \MSO \hfill\cite{Buchi60} & finite \\
    star-free & aperiodic automata & \FO \hfill\cite{McnaughtonP71} & aperiodic~\cite{Schutzenberger65} \\
    \var & \var-automata & \F & \var \\ \midrule

    \multicolumn{3}{l}{\textit{Sequential transductions}} &
    \begin{minipage}{4.5cm}
      syntactic congruence~\cite{Choffrut03}\\
      = minimal transducer
    \end{minipage}
    \\ \midrule

    sequential & sequential transducers & &
    finite \\
    aperiodic & ap.\ sequential transducers & & aperiodic \\
    \var & \var-sequential transducers & & \var \\
    \midrule

    \multicolumn{3}{l}{\textit{Rational transductions}} &
    \begin{minipage}{4.5cm}
    left/right synt.\ congruence,\\
    \contribution{minimal bimachines}
    \end{minipage}
    \\ \midrule

    rational & transducers & op-\MSOT \hfill\cite{Bojanczyk14} & finite~\cite{ReutenauerS91} \\
    aperiodic & ap.\ transducers & \contribution{op-\FOT} \hfill\cite{Bojanczyk14} & aperiodic \\
    \var & \var-transducers & \contribution{\F} & \var \\
    \midrule

    \multicolumn{3}{l}{\textit{Regular transductions}} & ? \\ \midrule 

    regular & 2-way trans., SST & \MSOT \hfill\cite{EngelfrietH01,AlurC10} & \\
    aperiodic & ap. 2-way trans., ap. SST & \FOT \hfill\cite{CartonD15,FiliotKT14} & \\
    \midrule

  \end{tabular}\\
  \var is any congruence class, and \F its associated logic.
  \emph{op} stands for \emph{order-preserving}.\\
  Our contributions concern the underlined objects.
  \caption{Logic-automata-algebra characterizations, for languages and transductions.\label{table:links}}
\end{table}

\subsection*{Logical and algebraic characterizations of rational languages}
The theory of rational languages over finite words is mature and rich with results that stem from their many different characterizations, such as: finite automata, rational expressions, monadic second-order logic (\MSO) and congruences of finite index, as illustrated in Table~\ref{table:links}.
One of the main features of the algebraic approach is the existence, for any rational language, of a canonical object: the \emph{syntactic congruence} (also known as Myhill-Nerode congruence) which is minimal in a strong sense that it is coarser than any congruence recognizing the language, and is related to the minimal automaton of a language.
Furthermore, many correspondences between some \emph{logical
  fragments} of \MSO and \emph{congruence varieties} (see
e.g.~\cite{Straubing94}) have been established. Congruence varieties
are sets of congruences of finite index with good closure properties
including closure under coarser congruences\footnote{If some
  congruence $\sim$ is in some variety $V$, any congruence coarser than
  $\sim$ is in $V$ as well.} which means that the syntactic congruence carries, in some sense, the intrinsic algebraic properties of a language.
The most prominent of the aforementioned results, which was obtained together by Sch\"utzenberger~\cite{Schutzenberger65} and McNaughton and Papert~\cite{McnaughtonP71}, is the correspondence between languages definable in \emph{first-order logic} (\FO) and languages recognizable by a finite \emph{aperiodic} congruence.
Since then many other such correspondences have been found (again, see~\cite{Straubing94}), for instance between the fragment of first-order logic with only two variables and the congruence variety known as $\da$~\cite{TherienW98}.
The strength of these results is that they give, through the syntactic congruence, an \emph{effective} way to decide if a regular language can be expressed in a given logical fragment. The goal of this article is to lift some of these logic-algebra correspondences from rational languages to rational transductions.

\subsection*{Rational transductions}
Transductions are (partial) functions from finite words to finite words.
At the computational level they are realized by transducers, which are automata given together with an output function mapping transitions to finite words.
Although transducers have been studied for almost as long as automata, far less is known about them. A more recent result of~\cite{EngelfrietH01} has sparked some renewed interest in their logical aspect.
This result states that transductions definable by MSO-transducers (\MSOT), a logical model of transducers defined by Courcelle in the general context of graph transductions~\cite{CourcelleE12}, exactly characterize the transductions realized by two-way transducers.
Another model of one-way transducers with registers, called \emph{streaming string transducers}, has also been shown to capture the same class of transductions called \emph{regular transductions}~\cite{AlurC10}.
The first steps of a logic-algebra relationship for transductions have been made by~\cite{FiliotKT14} and~\cite{CartonD15}, where first-order transducers (\FOT) are shown to be expressively equivalent to streaming string transducers (resp.\ two-way transducers) with an aperiodic transition congruence.
These characterizations are however not effective, and the class of regular functions still lacks a canonical object similar to the syntactic congruence, which would yield a way to decide if a given transducer realizes a transduction definable in first-order logic, for example.

\emph{Rational transductions} are the transductions realized by
one-way transducers, and they also coincide with transductions
definable by rational expressions over the product of two free monoids
(with component-wise concatenation and Kleene star)~\cite{BerstelB79}.
In this paper, we define subclasses of rational transductions by
restricting the class of \emph{transition congruence} of the
transducers which define them. The transition congruence of a
transducer is just defined as the transition congruence of its
underlying automaton, ignoring the output. The reason for choosing such
a definition is because we obtain nice correspondences between
classes of transition congruences of transducers and logical fragments
of MSO-transducers. First of all, the rational transductions are
exactly captured by a natural restriction of \MSOT, called
\emph{order-preserving} \MSOT\cite{Bojanczyk14,Filiot15}.  Its first-order fragment was shown to
coincide with rational transduction definable by a one-way transducer
with aperiodic transition congruence in~\cite{Bojanczyk14}. Other
correspondences, for instance with  \FOd, are shown in this paper.

Let us point out that the characterization of first-order rational
functions as the functions defined by aperiodic transducers~\cite{Bojanczyk14}
is not effective since it does not
provide a way to decide if a given transduction is definable in
first-order logic. In~\cite{ChoffrutG14} a different kind of
characterization of (non-functional) transductions over a unary alphabet is given, in
terms of rational expressions rather than in terms of transition
congruence.

A first result on the existence of a canonical machine for
transductions is given in~\cite{Choffrut03}, where the author defines a
construction of a syntactic congruence for sequential transducers,
\emph{i.e.} transducers with a deterministic underlying (input) automaton, which define a strict subclass of rational transductions.
This syntactic congruence, like for languages, is minimal in a
stronger sense than just yielding a sequential transducer with the
minimal number of states: it is minimal in the algebraic sense that it
is coarser than any transition congruence of a sequential transducer
realizing the transduction.
A second such result is the existence of a canonical machine for rational functions, shown in~\cite{ReutenauerS91}.
However, as the authors mentioned in this article, this canonical machine is not minimal in any of the ways described above.
In the present paper we refine the approach of~\cite{ReutenauerS91} in order to obtain minimal machines in terms of both number of states and of transition congruence.
This approach is based on a computational model of transductions called \emph{bimachines}, which captures the rational functions.
A bimachine is a deterministic model of transductions, introduced in~\cite{Schutzenberger61} and further studied (and named) in~\cite{Eilenberg74}, which can be seen as a sequential transducer with regular look-ahead, where the look-ahead is given by a co-deterministic automaton.

In contrast to considering restrictions of the transition congruence
of the underlying automata of transducers, a different way of defining subclasses of rational transductions,
taking into account the outputs, has been proposed in~\cite{CadilhacKLP15}
and~\cite{CadilhacCP17}. It is based on the natural notion of continuity,
\ie, preserving a class of languages by inverse image. In~\cite{CadilhacKLP15}, the authors were able to effectively
characterize sequential transductions definable by AC$^0$ circuits. In~\cite{CadilhacCP17} the authors show how to decide if a transduction
is continuous with respect to many usual varieties and they also
compare the two notions of continuity and realizability.

\subsection*{Contributions}
We first restate in our context the results from~\cite{Choffrut03} which give, for any sequential transduction, a syntactic congruence and thus a minimal transducer. Hence we show that for any congruence class\footnote{We call congruence class a set of congruences of finite index closed under intersection and coarser congruences.} $\var$, $\var$-sequentiality (\ie, realizability by a sequential transducer with a transition congruence in $\var$) can be reduced to deciding if the syntactic congruence belongs to $\var$.
However the syntactic congruence of a sequential transduction may not
capture all the algebraic properties of the transduction, in the
following way: there exists a congruence class $\var$ and a sequential
transduction which is $\var$-rational (realizable by a transducer with
transition congruence in $\var$, i.e., a $\var$-transducer) yet not \var-sequential. On the other hand we show that it is not the case for $\ap$ (the class of aperiodic congruences): any sequential $\ap$-rational function is $\ap$-sequential.

We refine techniques used in~\cite{ReutenauerS91} to give an algebraic characterization of rational functions.
It is known~\cite{ReutenauerS95}, for a congruence class \var, that \var-bimachines and unambiguous \var-transducers realize the same class of transductions.
It is also known that a functional transducer can be disambiguated, however the known disambiguation algorithms do not preserve the class of the transition congruence.
We nevertheless are able to show that unambiguous \var-transducers are as expressive as functional \var-transducers.

The most important result of this paper is an effective way to compute, for any rational function, a finite set of minimal bimachines. These bimachines are minimal in the strong sense that they have the coarsest transition congruences among all bimachines (and transducers) realizing the function. As a corollary this gives a way to decide if a rational transduction can be realized by a transducer with transition congruence in any given decidable congruence class \var, just by checking if one of the minimal bimachines is a \var-bimachine.

We define a logical formalism to associate with any logical fragment \F of \MSO a class of \F-definable transductions.
For a logical fragment of \MSO equivalent to (\ie, recognizing the same languages as) a congruence class, we give sufficient conditions under which this equivalence carries over to transductions. To the best of our knowledge these conditions are satisfied by all logical fragments which are equivalent to a congruence class and which have access to the linear order predicate.
In particular, this gives a way to decide if a given rational function is definable in \FO or in \FOd.

An entire section is devoted to the particular case of aperiodic transductions, where we show that this class satisfies a stronger property than an arbitrary class: a transduction is aperiodic if and only if \emph{all} its minimal bimachines are aperiodic. This gives a decision procedure with better complexity, just by ``local'' minimization instead of computing all the minimal bimachines. Namely we show that deciding whether a transduction, given as a bimachine, is aperiodic is \pspace-complete.

\subsection*{Comparison with previous papers}
This article compiles the results of two previous conference articles~\cite{FiliotGL16, FiliotGL16-2}, as well as some new results.
The effective computation of the set of minimal bimachines of a rational function is given in~\cite{FiliotGL16} and, as a consequence, the decidability of the first-order definability problem. We have shown afterwards, in~\cite{FiliotGL16-2}, that the first-order definability problem is actually \pspace-complete.
This article details the proofs of these results and exhibits
some new interesting properties.
For instance, testing whether $f$ is a
$\var$-transduction can be achieved by testing
a unique bimachine $\bim_{f,\var}$.
The equivalence (in terms of expressiveness) between unambiguous
$\var$-transducers and functional $\var$-transducers is a new result,
as well as the decidability of the definability in \FOd and \Bsig{1}.


\tableofcontents

\newpage
\section{Rational languages and rational transductions}%
\label{sec:rational_languages}

\subsection{Rational languages}

\subsubsection{Words and languages}
An \emph{alphabet} $\Sigma$ is a finite set of symbols called \emph{letters}, a \emph{word} over $\Sigma$ is an element of the free monoid $\Sigma^ *$, whose neutral element, the \emph{empty word}, is denoted by $\epsilon$.
The \emph{length} of a word $w\in \Sigma^ *$ is denoted by $|w|$ with $|\epsilon|=0$.
For a non-empty word $w$ and two positions $1 \leq i\leq j \leq |w|$ we denote by $w[i]$ the $i$th letter of $w$ and by $w[i:j]$ the factor of $w$ beginning with the $i$th letter and ending with the $j$th letter of $w$.
For two words $u,v$, we write $u\preceq v$ if $u$ is a prefix of $v$ and in this case we denote by $u^{-1}v$ the unique word $v'$ such that $uv'=v$.
The \emph{longest common prefix} of two words $u,v$ is denoted by $u\wedge v$ and the \emph{prefix distance} between $u$ and $v$ is defined as $\dist{u,v}=|u|+|v|-2|u\wedge v|$.
A \emph{language} $L$ is a set of words and $\bigwedge L$ denotes the longest common prefix of all the words in $L$ with the convention that $\bigwedge \varnothing=\epsilon$.

\subsubsection{Finite automata}
A \emph{finite automaton} (or simply automaton) over an alphabet $\Sigma$ is a tuple $\aut=\tuple{Q,\Delta,I,F}$ where $Q$ is a finite set of \emph{states}, $\Delta\subseteq Q\times \Sigma \times Q$ is the \emph{transition relation}, and $I,F\subseteq Q$ denote the set of \emph{initial states} and the set of \emph{final states}, respectively.
A \emph{run} of $\aut$ over a word $w$ is a word $r=q_0\ldots q_{|w|}$ over $Q$ such that $(q_i,w[i+1],q_{i+1})\in \Delta$ for all $i\in \set{0,\ldots, |w|-1}$.
The run $r$ is \emph{accepting} if $q_0\in I$ and $q_{|w|}\in F$.
A word $w$ is \emph{accepted} by $\aut$ if there exists an accepting run over it, and the language  \emph{recognized} by $\aut$ is the set of words accepted by $\aut$ and is denoted by $\sem{\aut}$.
We will use the notation $p\xrightarrow w _\aut q$ (or just $p\xrightarrow w q$ when it is clear from context) to denote that there exists a run $r$ of $\aut$ over $w$ such that $r[1]=p$ and $r[|r|]=q$.
An automaton $\aut$ is called \emph{deterministic} if its set of initial states is a singleton and for any two transitions $(p,\sigma,q_1),(p,\sigma,q_2)\in \Delta$ it holds that $q_1=q_2$.
An automaton is \emph{unambiguous} if any word has at most one accepting run over it.
We call an automaton \emph{complete} if for any $p\in Q,\sigma\in \Sigma$ there exists $q\in Q$ such that $(p,\sigma,q)\in \Delta$.
A state $q$ of an automaton $\aut$ is said to be \emph{accessible} if there exists a word $w$ and an initial state $q_0$, such that $q_0\xrightarrow w _\aut q$. We say by extension that the automaton $\aut$ is \emph{accessible} if all its states are accessible.
Finally, a language is called \emph{rational} if it is recognized by an automaton.

\subsection{Algebraic characterization of rational languages}

\subsubsection{Congruences}
It is well known that rational languages are equivalently recognized by congruences of finite index. Let us define these notions.
Let $\sim$ be an equivalence relation on $\Sigma^*$.
The equivalence class of a word $w$ is denoted by $\cla w_\sim$ (or just $\cla w$ when it is clear from context).
We say that $\sim$ has finite index if the quotient $\Sigma^*/_{\sim}=\set{\cla w_\sim \mid w\in \Sigma^*}$ is finite.
Let $\sim_1$, $\sim_2$ be two equivalence relations, we say that $\sim_1$ is \emph{finer} than $\sim_2$ (or that $\sim_2$ is \emph{coarser} than $\sim_1$) if any equivalence class of $\sim_2$ is a union of equivalence classes of $\sim_1$, or equivalently for any two words $u,v$, if $u\sim_1 v$ then $u\sim_2v$. We write ${\sim_1} \finer {\sim_2}$ to denote that $\sim_1$ is finer than $\sim_2$.

An equivalence relation $\sim$ over $\Sigma^*$ is a \emph{right congruence} (resp.\ a \emph{left congruence}) if for any words $u,v\in \Sigma^ *$ and letter $\sigma\in \Sigma$ we have $u\sim v \Rightarrow u\sigma \sim v\sigma$ (resp.\ $u\sim v \Rightarrow \sigma u \sim \sigma v$).
A \emph{congruence} is defined as both a left and a right congruence.
When $\sim$ is a congruence, the quotient $\Sigma^*/_\sim$ is naturally endowed with a monoid structure, with multiplication $\cla u_\sim \cdot \cla v_\sim=\cla {uv}_\sim$ and identity $\cla \epsilon_\sim$.
We say that a congruence $\sim$ \emph{recognizes} a language $L$ if $L$ is a union of equivalence classes of $\sim$.
Let us note that the intersection of two right (resp.\ left) congruences $\sim_1,\sim_2$ over the same alphabet is also a right (resp.\ left) congruence which we denote by $\sim_1\sqcap\sim_2$.

\begin{exa}
Important examples of congruences which we use
throughout the article are: the \emph{syntactic congruence}
$\equiv_L$ of a language $L$ and the \emph{transition congruence}
$\approx_\aut$ of an automaton $\aut$ with set of states $Q$. When
$\aut$ is deterministic with initial state $q_0$, we also define
$\sim_\aut$, the \emph{right transition congruence} of $\aut$. These
relations are defined as follows, with $u,v$ ranging over all words:

\[
\begin{array}{rclclrcl}
u & \equiv_L & v & \Leftrightarrow & ( \forall x,y\in\Sigma^*, &
                                                                  xuy\in
                                                                  L &
    \Leftrightarrow & xvy\in L ) \\

u &\approx_\aut & v & \Leftrightarrow & ( \forall p,q\in Q, &
                                   p\xrightarrow{u}_\aut q
                                   & \Leftrightarrow &
                                                       p\xrightarrow{v}_\aut q ) \\
u &\sim_\aut & v & \Leftrightarrow & (\forall q\in Q, &
                                   q_0\xrightarrow{u}_\aut q
                                   & \Leftrightarrow & q_0\xrightarrow{v}_\aut q )
\end{array}
\]
If $\aut$ recognizes a language $L$, then its transition congruence $\approx_\aut$ recognizes the same language.
The relation $\equiv_L$ recognizes the language $L$ and is the coarsest among congruences which do so. From the two previous remarks, this well-known fact follows: a language is rational if and only if its syntactic congruence has finite index.

Let $\aut$ be a complete and deterministic automaton, then an equivalence class $\cla w_{\sim_\aut}$ can be identified with the state of $\aut$ reached by reading the word $w$. In the following we will often make this identification, implicitly assuming that $\aut$ is complete (an automaton can be made complete in \ptime), and will write $\cla w_{\aut}$ rather than $\cla w_{\sim_\aut}$ to simplify notations. Furthermore, for two deterministic automata $\aut_1,\aut_2$ such that ${\sim_{\aut_1}}\finer {\sim_{\aut_2}}$, we will say by extension that $\aut_1$ is finer than $\aut_2$ and write $\aut_1\finer\aut_2$. For example the minimal automaton of a language $L$ is the coarsest deterministic automaton recognizing $L$
(up to isomorphism).
\end{exa}

\subsubsection{Congruence classes}

A \emph{congruence class} (class for short) $\var$ associates to any finite alphabet $\Sigma$, a set $\var(\Sigma)$ of congruences of finite index over $\Sigma$, such that $\var(\Sigma)$ is (1) closed under intersection and (2) closed under taking coarser congruences.
Note that we will often abuse the notation and write $\var$ instead of $\var(\Sigma)$.

An automaton is called a \emph{$\var$-automaton} if its transition congruence is in $\var$.
A language is a \emph{$\var$-language} if it is recognized by a $\var$-automaton.
Note that since a class is stable by taking coarser congruences, a language is a $\var$-language if and only if its syntactic congruence is in $\var$. Let $\mathcal L(\var)$ denote the set of $\var$-languages.

\begin{rem}
Recognizability by a congruence of finite index is equivalent to the notion of recognizability by a \emph{stamp} (\ie, a surjective morphism from a free monoid to a finite monoid).
Indeed, a congruence $\sim$ over an alphabet $\Sigma$ yields a natural (surjective) morphism $\cla \cdot _\sim:\Sigma^*\rightarrow \Sigma^*/_\sim$ which of course recognizes the same language.
Conversely, given a morphism $\mu:\Sigma^*\rightarrow M$ with $M$ a finite monoid, one can define the congruence $\sim_M$ by $u\sim_M v$ if $\mu(u)=\mu(v)$.
In~\cite{PinS05}, the authors define $\mathcal C$-varieties of stamps (a generalization of the notion of monoid varieties), which are in particular congruence classes as we define them in this paper.
We choose to consider this notion of congruence classes simply because our results hold in this more general framework.

Note that in~\cite{FiliotGL16} we had chosen the term of \emph{congruence variety} instead of \emph{congruence class} which was ill-suited, as was kindly pointed out by Jean-\'Eric Pin, since the term variety stems from an equational theory, which does not exist in a context as general as the one of congruence classes.
\end{rem}

\subsubsection{Definability problem and decidable classes}

Given a set of languages $\mathcal V$, the \emph{$\mathcal V$-definability} problem asks whether a language $L$, given by an automaton, belongs to $\mathcal V$.

A congruence of finite index over an alphabet $\Sigma$ can be given by
a morphism $\mu:\Sigma^ *\rightarrow M$, with $M$ a finite monoid (two
words are equivalent if they have the same image by $\mu$).
The morphism $\mu$ can itself be given explicitly by a function $m:\Sigma\rightarrow M$. In the following decision problems we assume that congruences are given that way.

The \emph{$\var$-membership problem} asks if a congruence of finite
index, given as a morphism, is in $\var$.
In particular, the $\mathcal L(\var)$-definability problem reduces to the $\var$-membership problem through the syntactic congruence.
A class is called \emph{decidable} if its membership problem is decidable.

\begin{exas}
We give several examples of decidable congruence classes.
\begin{itemize}
\item The set $\fin$ of all congruences of finite index is of course a class.

\item A simple example of a congruence class is the class $\id$ of \emph{idempotent} congruences.
A congruence $\sim$ is called idempotent if for any word $w$, $w\sim w^2$, and this property is indeed stable by intersection and taking coarser congruences.

\item A central example of congruence class in this paper is the class $\ap$ of \emph{aperiodic} congruences.
A congruence $\sim$ is aperiodic if there exists an integer $n$ such that for any word $w$, $w^n\sim w^{n+1}$.
Again, one can easily check that this property is stable by intersection and taking coarser congruences.
It is shown in~\cite{DiekertG08} that the aperiodicity problem for a language, given as an automaton, is \pspace-complete.
Moreover, even if the given  automaton is deterministic, this problem remains \pspace-hard~\cite{ChoH91}.

\item Let us also mention the congruence class $\da$ which was shown to recognize the languages definable in first-order logic with two variables~\cite{TherienW98}. A congruence of finite index $\sim$ is in $\da$ if there exists an integer $n$ such that for any words $u,v,w$, ${(uvw)}^{n}v(uvw^n)\sim{(uvw)}^n$.

\end{itemize}

\end{exas}

\subsection{Rational transductions}

\subsubsection{Transductions and finite transducers}

A \emph{transduction} over an alphabet $\Sigma$ is a partial function\footnote{In this paper transductions are functions although this term may refer to any relation in the literature.} $f:\Sigma^*\rightarrow \Sigma^*$, and its domain is denoted by $\dom(f)$.

A \emph{finite transducer}\footnote{This type of transducer is sometimes called \emph{real-time}~\cite{Sakarovitch09} since in general a transition of a transducer may be labelled by any word.} (or simply \emph{transducer}) over $\Sigma$ is a tuple $\trans=\tuple{\aut,\out,\init,\final}$ where $\aut=\tuple{Q,\Delta,I,F}$ is the \emph{underlying automaton} of $\trans$, $\out:\Delta\rightarrow\Sigma^*$ is the \emph{output function}, $\init:I\rightarrow\Sigma^*$ is the \emph{initial output function} and $\final:F\rightarrow\Sigma^*$ is the \emph{final output function}.
Let $u$ be a word on which there exists a run $q_0\ldots q_{|u|}$ of $\aut$, and let $v=\out(q_0,u[1],q_1)\cdots\out(q_{|u|-1},u[|u|],q_{|u|})$, then we write $q_0\xrightarrow {u\mid v} _\trans q_{|u|}$ to denote the existence of such a run.
If $q_0\in I$ and $q_{|u|}\in F$, let $w=\init(q_0)v\final(q_{|u|})$, then we say that the pair $(u,w)$ is \emph{realized} by $\trans$.
We denote by $\sem \trans$ the set of pairs realized by $\trans$.

A transducer is called \emph{functional} if it realizes a transduction (\ie, a partial function), this property is decidable in \ptime (see \eg~\cite{BerstelB79}) and in this case we denote $(u,v)\in \sem\trans$ by $\sem\trans(u)=v$.
A transducer $\trans$ is called \emph{unambiguous} (resp.\ \emph{sequential}) if its underlying automaton is unambiguous (resp.\ deterministic) and in both cases $\trans$ is functional.
It is known that any functional transducer is equivalent to some unambiguous transducer (see \eg~\cite{BerstelB79}).
Finally, a transduction is called \emph{rational} (resp.\ \emph{sequential}) if it is realized by a functional (resp.\ sequential) transducer.

\subsubsection{$\var$-transducers}

Let $\var$ be a congruence class.
A \emph{$\var$-transducer} is a transducer whose underlying automaton is a $\var$-automaton.
A transduction is called \emph{$\var$-rational} (resp.\ \emph{$\var$-sequential}) if it is realized by a functional (resp.\ sequential) $\var$-transducer.
Note that a $\var$-rational transduction is sometimes called a $\var$-transduction for short.

\section{Algebraic characterization of sequential transductions}%
\label{sec:algebra_sequential}

Sequential transductions can be characterized, as it was shown in~\cite{Choffrut03}, by a syntactic congruence which, like for automata, yields a unique minimal underlying machine.
Given a congruence class $\var$ we thus prove that a sequential transduction is $\var$-sequential if and only if its minimal transducer is a $\var$-transducer, which provides a decision procedure for $\var$-sequentiality in the case of a decidable congruence class.
The second result of this section is that a sequential $\ap$-transduction is also $\ap$-sequential, as determinization preserves aperiodicity.
These results are depicted in Figure~\ref{fig:sequential}.
\begin{figure}

  \begin{tikzpicture}[baseline=0, inner sep=0, outer sep=0, minimum size=0pt, scale=0.32]
  \tikzstyle{cross} = [minimum size=4pt, path picture={
      \draw[black] (path picture bounding box.south east) -- (path picture bounding box.north west) (path picture bounding box.south west) -- (path picture bounding box.north east);
  }]
  \tikzstyle{projection} = [->, >=stealth, shorten >=1pt, thick, rounded corners=5]
  \tikzstyle{pointzone} = [->, >=stealth, shorten >=1pt, dash pattern=on5pt off3pt, looseness=1]
  \tikzstyle{intersections} = [pattern=north east lines, pattern color=gray!65];

\begin{scope}

  \draw (2,12) node {transductions};
  \draw (2, 2) node {transducers};
  \draw (0,7) -- (45,7);

  \newcommand\Vzone{(16,12) ellipse ( 8cm and 3.5cm)}
  \newcommand\Seqzone{(26,12) ellipse (12cm and 4cm)}
  \newcommand\Azone{(36,12) ellipse ( 8cm and 3.5cm)}
  \begin{scope} 
    \clip \Seqzone;
    \fill[intersections] \Vzone;
    \fill[intersections] \Azone;
  \end{scope}
  \draw \Vzone;   \draw (13.5,14.5) node {$\var$};
  \draw \Seqzone; \draw (  26,14.8) node {sequential};
  \draw \Azone;   \draw (38.5,14.5) node {$\ap$};

  \draw (14,2) ellipse ( 6cm and 3.5cm); \draw (13.5,-.5) node {$\var$};
  \draw (26,2) ellipse (12cm and   4cm); \draw (26,-.8) node {sequential};
  \draw (36,2) ellipse ( 8cm and 3.5cm); \draw (38.5,-.5) node {$\ap$};

  \draw [fill=white] (18,12) ellipse (2.8cm and 2.5cm); \draw (18,13.3) node {$\var$-seq.};
  \draw (18,11) node [cross] (nodef1)    {}; \draw (18,12) node {$f_1$};
  \draw (17.5, 3) node [cross] (nodeminf1) {}; \draw (17.5, 2) node {$\min(f_1)$};
  \draw [projection] (nodef1) -- (nodeminf1);
  \draw (22,11) node [cross] (nodef2)    {}; \draw (22,12) node [fill=white] {$f_2$};
  \draw (23, 3) node [cross] (nodeminf2) {}; \draw (23, 2) node {$\min(f_2)$};
  \draw (12, 3) node [cross] (nodeT2)    {}; \draw (12, 2) node {$T_2$};
  \draw [projection] (nodef2) -- (nodeminf2);
  \draw [projection] (nodef2) -- (nodeT2);
  \draw (13,20) node {$\not=\emptyset$ for $\var=\id$ (Prop~\ref{prop:not-V-seq})};
  \path (20,20) edge [pointzone, bend left] (21,14);
  \draw (20,7) node [fill=white] {Th.~\ref{thm:v-seq}};

  \draw [fill=white] (34,12) ellipse (2.8cm and 2.5cm); \draw (34,13.3) node {$\ap$-seq.};
  \draw (34,11) node [cross] (nodef3) {}; \draw (34,12) node {$f_3$};
  \draw (40, 3) node [cross] (nodeT3) {}; \draw (40, 2) node {$T_3$};
  \draw [projection] (nodef3) -- (nodeT3);
  \draw (34, 3) node [cross] (nodedetT3) {}; \draw (34, 2) node {$\mathrm{det}(T_3)$};
  \draw [->, >=stealth, shorten >=1pt] (nodeT3) -- (nodedetT3) node[above=1mm,midway] {Th.~\ref{thm:a-seq}};
  \draw (35.5,20) node {$=\emptyset$ (Theorem~\ref{thm:a-seq})};
  \path (30,20) edge [pointzone, bend right] (31,14);

\end{scope}
\end{tikzpicture}


  \caption{Situation for sequential transductions.\label{fig:sequential}}
\end{figure}
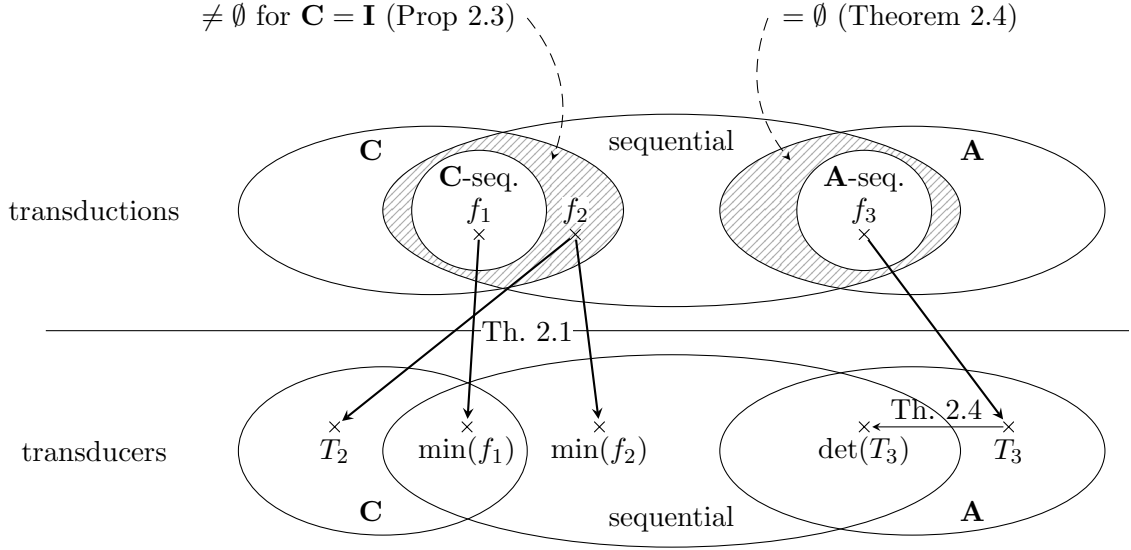

\subsection{Minimization of sequential transducers}%
\label{subsec:min-seq}

We describe the minimal sequential transducer given in~\cite{Choffrut03} and refer the reader to the original paper for a proof that the obtained transducer realizes the same transduction.
We then show that the underlying automaton of the minimal transducer is minimal in the strong algebraic sense that it is the coarsest among all the underlying automata of sequential transducers realizing the same transduction.
We will thus obtain a decision procedure for $\var$-sequentiality of
sequential transductions:

\begin{thm}%
\label{thm:v-seq}
Let $\var$ be a decidable congruence class.
It is decidable whether a sequential transduction, given by a transducer, is $\var$-sequential.
\end{thm}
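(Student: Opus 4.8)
The plan is to reduce $\var$-sequentiality to a $\var$-membership test on a single canonical object, namely the right transition congruence of the minimal sequential transducer. First I would invoke the minimization construction of \cite{Choffrut03}: given a sequential transducer for $f$, one computes a canonical minimal sequential transducer $\min(f)$ realizing the same transduction. The key structural fact I would establish (and which the subsection heading promises) is that the underlying deterministic automaton of $\min(f)$ is minimal in the strong \emph{algebraic} sense, i.e. its right transition congruence $\sim_{\min(f)}$ is coarser than the right transition congruence $\sim_\trans$ of \emph{any} sequential transducer $\trans$ realizing $f$. In symbols, for every sequential transducer $\trans$ with $\sem\trans = f$ we have ${\sim_\trans} \finer {\sim_{\min(f)}}$.

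Granting that minimality, the decision procedure is immediate. Suppose $f$ is $\var$-sequential, witnessed by a sequential $\var$-transducer $\trans$; then ${\sim_\trans}\in\var$ and ${\sim_\trans}\finer{\sim_{\min(f)}}$, so by closure of the class $\var$ under coarser congruences we get ${\sim_{\min(f)}}\in\var$, hence $\min(f)$ is itself a $\var$-transducer. Conversely, if $\min(f)$ is already a $\var$-transducer then $f$ is $\var$-sequential by definition. Therefore $f$ is $\var$-sequential if and only if the single congruence $\sim_{\min(f)}$ lies in $\var$. Since $\var$ is decidable, one computes $\min(f)$ (which is effective by \cite{Choffrut03}), extracts the morphism presenting $\sim_{\min(f)}$, and runs the $\var$-membership test; this decides $\var$-sequentiality.

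The main obstacle is the algebraic minimality claim itself, not the reduction. It is not enough that $\min(f)$ has the fewest states; I must show its right congruence is the coarsest among \emph{all} sequential transducers for $f$, so I would argue that $\sim_{\min(f)}$ coincides with the canonical right congruence intrinsic to the transduction $f$. Concretely, for a sequential transduction one defines an equivalence on prefixes $u$ by comparing their \emph{residual} behaviour: $u$ and $u'$ are equivalent when, after factoring out the longest common prefix of outputs, the induced residual functions $w \mapsto (f(u\cdot))$ and $w\mapsto (f(u'\cdot))$ agree up to the appropriate output shift (using the $u^{-1}v$ and $u\wedge v$ machinery from the preliminaries). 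This residual congruence is a right congruence recognizing $f$, and any sequential transducer realizing $f$ refines it because reaching the same state forces identical residual behaviour; thus it is the coarsest, and one checks that the Choffrut construction produces exactly this congruence. I would verify that this congruence has finite index precisely when $f$ is sequential, so the object is well defined, and that its presentation is computable from the input transducer, which closes the effectivity gap needed for the decidability statement.
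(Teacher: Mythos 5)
Your overall strategy is the same as the paper's: compute Choffrut's minimal sequential transducer, prove it is algebraically minimal (coarser than every sequential transducer realizing $f$), and conclude by closure of $\var$ under coarser congruences. However, there is a genuine gap in the central deduction, caused by conflating two distinct objects. In this paper a $\var$-transducer is one whose \emph{two-sided} transition congruence $\approx_\aut$ (defined by $u\approx_\aut v$ iff $p\xrightarrow{u}_\aut q \Leftrightarrow p\xrightarrow{v}_\aut q$ for all states $p,q$) belongs to $\var$; the class $\var(\Sigma)$ contains only genuine congruences, i.e.\ equivalences compatible with concatenation on \emph{both} sides. The right transition congruence $\sim_\trans$ that your argument manipulates is in general only a right congruence, so the assertion that ``${\sim_\trans}\in\var$'' is not meaningful, and neither is the application of closure under coarser congruences to deduce ``${\sim_{\min(f)}}\in\var$''. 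For the same reason, the final membership test must be run on the morphism onto the transition monoid of the minimal transducer's automaton $\aut_f$, not on the right congruence identifying its states.

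What is missing is precisely the bridge that the paper isolates as Proposition~\ref{prop:finer-seq}: if ${\sim_\aut}\finer{\sim_{\aut_f}}$ (a comparison of \emph{right} transition congruences), $\aut_f$ is accessible, and ${\approx_\aut}\in\var$, then ${\approx_{\aut_f}}\in\var$. Its proof is short but not vacuous: from $u\approx_\aut v$ one gets $wu\approx_\aut wv$ for every $w$ (this is where two-sidedness of $\approx_\aut$ is used), hence $wu\sim_\aut wv$, hence $wu\sim_{\aut_f} wv$, and accessibility of $\aut_f$ then yields $u\approx_{\aut_f}v$; so ${\approx_\aut}\finer{\approx_{\aut_f}}$, and closure under coarser congruences is now applied to two-sided congruences, as it must be. Your residual-congruence argument for ${\sim_\trans}\finer{\sim_{\min(f)}}$ is essentially the paper's computation (via $\widehat f$ and longest common prefixes of outputs from a state), so once you insert this bridging step and run the $\var$-test on $\approx_{\aut_f}$ rather than on $\sim_{\min(f)}$, the proof is complete and coincides with the paper's.
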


\subsubsection*{Minimal transducer}

Let $f$ be a transduction and let us define $\trans_f=\tuple{\aut_f,\out_f,\init_f,\final_f}$ the minimal sequential transducer realizing $f$ with $\aut_f=\tuple{Q_f,\Delta_f,I_f,F_f}$.
The main idea of the procedure is to output the letters as soon as possible, and then define a right congruence which states that two words are equivalent if the outputs are the same for any continuation.

For this we need to define a new transduction $\widehat{f}: \Sigma^*\rightarrow \Sigma^*$ by: for any word $u$, $\widehat{f}(u)=\bigwedge \set{f(uw)\mid\ w\in u^{-1}\dom(f)}$, which captures this ``as soon as possible'' idea by outputting the longest common prefix of the images of all words beginning with $u$.
The \emph{syntactic congruence} of $f$ is defined by: $u \sim_f v$ if (1) for any $w\in \Sigma^*$, $uw\in \dom(f) \Leftrightarrow vw\in \dom(f)$ and (2) for any $w\in u^{-1}\dom(f)$, ${\widehat{f}(u)}^{-1} f(uw)={\widehat{f}(v)}^{-1} f(vw)$ for any words $u,v$.
The first condition only ensures that the congruence recognizes the domain of the transduction, while the second states that the output due to the continuation $w$ is the same after reading $u$ and $v$.

The automaton $\aut_f$ is defined naturally from the congruence $\sim_f$:
\begin{itemize}
\item $Q_f=\Sigma^*/_{\sim_f}$
\item $\Delta_f=\set{\tuple{\cla u,\sigma,\cla {u\sigma}}\mid\ u\in \Sigma^*,\sigma\in\Sigma}$
\item $I_f=\set{\cla \epsilon}$
\item $F_f=\set{\cla u\mid\ u\in \dom(f)}$
\end{itemize}
The outputs are defined using the transduction $\widehat{f}$.
\begin{itemize}
\item $\out\tuple{\cla u,\sigma,\cla {u\sigma}}={\widehat{f}(u)}^{-1}\widehat{f}(u\sigma)$
\item $\init(\cla \epsilon)=\widehat{f}(\epsilon)$
\item $\final(\cla u)= {\widehat{f}(u)}^{-1} f(u)$ for $u\in \dom(f)$
\end{itemize}

\noindent
Before proving Theorem~\ref{thm:v-seq} we need to check that taking a coarser automaton preserves the congruence class.

\begin{prop}%
\label{prop:finer-seq}
Let $\var$ be a congruence class. Let $\aut_1,\aut_2$ be two deterministic automata such that $\aut_1\finer\aut_2$ and $\aut_2$ is accessible.
If $\aut_1$ is a $\var$-automaton then so is $\aut_2$.
\end{prop}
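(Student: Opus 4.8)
The plan is to reduce everything to a single containment of congruences and then invoke closure of $\var$ under coarser congruences. Recall that being a $\var$-automaton refers to the two-sided \emph{transition} congruence $\approx_\aut$, whereas the relation $\aut_1\finer\aut_2$ only asserts that the \emph{right} transition congruences satisfy ${\sim_{\aut_1}}\finer{\sim_{\aut_2}}$. So the statement is not purely formal: from information about the right congruences we must recover information about the full transition congruences. Since $\var$ is closed under taking coarser congruences and $\approx_{\aut_1}\in\var$ by hypothesis, it suffices to prove ${\approx_{\aut_1}}\finer{\approx_{\aut_2}}$, \ie that for all words $u,v$, $u\approx_{\aut_1}v$ implies $u\approx_{\aut_2}v$.

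First I would establish the following characterization of the transition congruence of a complete, deterministic, accessible automaton $\aut$ in terms of its right congruence (as is standard in this setting, I assume all automata complete, so that $\cla w_\aut$ is identified with the state reached by reading $w$):
\[
u \approx_\aut v \quad\Longleftrightarrow\quad \forall x\in\Sigma^*,\ xu \sim_\aut xv.
\]
The forward direction is immediate and needs no accessibility: if $u$ and $v$ induce the same state map, then for every $x$ reading $xu$ and $xv$ from the initial state reaches the same state, \ie $xu\sim_\aut xv$. The backward direction is where accessibility is used: given a state $p$, pick $x$ with $\cla x_\aut=p$; then $xu\sim_\aut xv$ says precisely that $u$ and $v$ lead to the same state from $p$, and since every state of $\aut$ is of this form, $u$ and $v$ agree on all states, \ie $u\approx_\aut v$.

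Then I would combine the two directions asymmetrically. Assume $u\approx_{\aut_1}v$. Applying the accessibility-free forward direction to $\aut_1$ gives $xu\sim_{\aut_1}xv$ for every $x$. By the hypothesis ${\sim_{\aut_1}}\finer{\sim_{\aut_2}}$ this yields $xu\sim_{\aut_2}xv$ for every $x$. Now applying the backward direction to $\aut_2$---which is legitimate exactly because $\aut_2$ is accessible---gives $u\approx_{\aut_2}v$. This proves ${\approx_{\aut_1}}\finer{\approx_{\aut_2}}$, and closure of $\var$ under coarser congruences finishes the proof. The main obstacle is conceptual rather than computational: one must notice that $\finer$ on deterministic automata is defined through the right congruences, bridge the gap to the two-sided congruences via the displayed characterization, and observe that this characterization is used in its easy direction for $\aut_1$ (so accessibility of $\aut_1$ is not needed) and in its accessibility-dependent direction for $\aut_2$, which explains precisely why the hypothesis is placed on $\aut_2$.
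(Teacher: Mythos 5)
Your proof is correct and follows essentially the same route as the paper: reduce to showing ${\approx_{\aut_1}}\finer{\approx_{\aut_2}}$, pass from $u\approx_{\aut_1}v$ to $xu\sim_{\aut_1}xv$ for all $x$, transfer via ${\sim_{\aut_1}}\finer{\sim_{\aut_2}}$, and use accessibility of $\aut_2$ to recover $u\approx_{\aut_2}v$. The only cosmetic difference is that you package the two transfer steps as an explicit characterization lemma ($u\approx_\aut v \Leftrightarrow \forall x,\ xu\sim_\aut xv$, with accessibility needed only for the backward direction), whereas the paper performs the same steps inline using the congruence property of $\approx_{\aut_1}$.
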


\begin{proof}
Let $\aut_1\finer \aut_2$ such that $\aut_1$ is a $\var$-automaton and $\aut_2$ is accessible.
We only need to show that ${\approx_{\aut_1}} \finer {\approx_{\aut_2}}$ since a congruence class is stable by taking coarser congruences.
Let $u\approx_{\aut_1} v$, then for any word $w$, $wu\approx_{\aut_1}wv$ and in particular $wu\sim_{\aut_1}wv$, hence $wu\sim_{\aut_2}wv$. Since $\aut_2$ is accessible, we have $u\approx_{\aut_2} v$.
\end{proof}

\begin{proof}[Proof of Theorem~\ref{thm:v-seq}]
Let $\var$ be a congruence class.
Let us show that a transduction $f$ is $\var$-sequential if and only if $\trans_f$ is a $\var$-transducer.
This is enough to prove the theorem since $\trans_f$ can be computed, in \ptime, from any sequential transducer realizing $f$, according to~\cite{Choffrut03}.

The ``if'' direction is trivial since in particular $\trans_f$ is a sequential $\var$-transducer realizing $f$.
Now, let $\trans=(\aut,\out,\init,\final)$ be a sequential $\var$-transducer realizing $f$, we want to show that $\aut_f$ is a $\var$-automaton.
Since $\aut_f$ is by definition accessible, we only need to show, according to Proposition~\ref{prop:finer-seq}, that $\aut\finer\aut_f$.

Let $u\sim_{\aut} v$, let us show that $u\sim_f v$. Since $\aut$ recognizes $\dom(f)$ we already have that for any word $w$, $uw\in \dom(f) \Leftrightarrow vw\in \dom(f)$.
Let $w$ be, if it exists, a word such that $uw\in \dom(f)$ and let $p\xrightarrow{u|x}_{\trans}q \xrightarrow{w|z}_{\trans} r$ denote the corresponding accepting run, and similarly for $v$: $p\xrightarrow{v|y}_{\trans}q \xrightarrow{w|z}_{\trans} r$.
We have $\widehat{f} (u)=\init(p)x z'$ and $\widehat{f} (v)=\init(p)y z'$ with $z'\preceq z\final(r)$ the longest common prefix of all outputs from state $q$ to an accepting state.
Finally,
\begin{align*}
{\widehat{f}(u)}^{-1} f(uw) &=\tuple{\init(p)x z'}^{-1}\init(p)x z\final(r)\\
&=z'^{-1}z\final(r)\\
&=\tuple{\init(p)y z'}^{-1}\init(p)y z\final(r)\\
&={\widehat{f}(v)}^{-1} f(vw)
 \qedhere
\end{align*}
\end{proof}

\subsection{Determinization preserves aperiodicity}
We have shown how to decide if a transduction is $\var$-sequential.
One could wonder if a sequential transduction can be $\var$-rational but not $\var$-sequential.
We answer by the affirmative, and show that for the case of aperiodicity this cannot happen.

\begin{prop}%
\label{prop:not-V-seq}
There exists a congruence class $\var$ and a sequential $\var$-transduction which is not $\var$-sequential.
\end{prop}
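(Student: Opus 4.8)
The plan is to observe that the proposition only requires \emph{some} congruence class, so rather than fighting to make a fixed natural class work, I would \emph{tailor} $\var$ to a single cleverly chosen pair of transducers. The key leverage is the proof of Theorem~\ref{thm:v-seq}: it establishes not merely the decidability statement but the stronger fact that $\aut_f$ is the \emph{coarsest} underlying automaton among all sequential transducers realizing $f$ (for any sequential $\var$-transducer $\aut$ one shows $\aut\finer\aut_f$). Consequently, if $\approx_{\aut_f}\notin\var$ then \emph{no} sequential transducer realizing $f$ lies in $\var$: indeed a sequential $\var$-transducer $\aut$ would give ${\approx_\aut}\finer{\approx_{\aut_f}}$, and coarsening-closure of $\var$ would force ${\approx_{\aut_f}}\in\var$. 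So my whole task reduces to producing a \emph{sequential} $f$ together with a functional transducer $U$ realizing $f$ whose transition congruence is \emph{not finer} than that of $\aut_f$, i.e. $U$ merges two input words that $\aut_f$ must keep apart. The crucial point, which makes this possible, is that the transition congruence $\approx_U$ records only which pairs of states are connected and \emph{ignores the output strings}, whereas $\sim_f$ (hence $\aut_f$) is forced by the ``as soon as possible'' discipline to separate words according to their residual output behaviour.

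Concretely, over $\Sigma=\{a,b,c\}$ I would take $f$ with domain $\{a,b\}^*$ defined by $f(wb)=wb$ and $f(wa)=wc$ (replace a final $a$ by $c$, leave a final $b$ untouched). This $f$ is sequential via a delay-one deterministic transducer that holds the last input letter and flushes it through the final output function ($a\mapsto c$, $b\mapsto b$); thus $\trans_f$ and $\aut_f$ are well defined and computable. In $\aut_f$ the states after reading $a$ and after reading $b$ are distinct (their final contributions $c$ and $b$ differ), and reading $a$ (resp. $b$) sends every state to the ``held $a$'' (resp. ``held $b$'') state, so $a\not\approx_{\aut_f} b$. Now let $U$ have states $q$ (initial) and $e$ (final, with no outgoing transitions), and transitions $q\xrightarrow{a\mid a}q$, $q\xrightarrow{b\mid b}q$, $q\xrightarrow{a\mid c}e$ and $q\xrightarrow{b\mid b}e$. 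The unique accepting run must take a transition into $e$ exactly on the last letter, so $U$ is unambiguous and realizes $f$. But reading $a$ and reading $b$ induce the \emph{same} transition relation $\{(q,q),(q,e)\}$ on the states of $U$, whence $a\approx_{U} b$. Therefore ${\approx_{U}}\not\finer{\approx_{\aut_f}}$.

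Finally I would set $\var(\Sigma)=\{\,\theta \mid {\approx_{U}}\finer\theta\,\}$, the set of all finite-index congruences over $\Sigma$ coarser than $\approx_{U}$ (and let $\var$ be, say, all finite-index congruences on every other alphabet). This is a legitimate congruence class: if ${\approx_{U}}\finer\theta_1$ and ${\approx_{U}}\finer\theta_2$ then ${\approx_{U}}\finer(\theta_1\sqcap\theta_2)$, giving closure under intersection, and closure under coarsening is immediate. By construction $\approx_{U}\in\var$, so $U$ is a functional $\var$-transducer and $f$ is $\var$-rational; yet ${\approx_{U}}\not\finer{\approx_{\aut_f}}$ means $\approx_{\aut_f}\notin\var$, so by the coarsest-sequential argument above $f$ is not $\var$-sequential. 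I expect the only genuinely delicate steps to be two bookkeeping verifications rather than any deep difficulty: first, checking carefully that $\sim_f$ really does separate $a$ from $b$ (and hence that $\aut_f$ is non-trivial with $a\not\approx_{\aut_f}b$), which amounts to unwinding the definitions of $\widehat f$ and $\sim_f$ on the relevant short words; and second, confirming that the tailored family $\var(\cdot)$ satisfies the axioms of a congruence class on \emph{every} alphabet. The conceptual heart—that output-blind transition congruences let a look-ahead transducer identify words the as-soon-as-possible minimal transducer is forced to distinguish—is exactly what drives the separation, and it is also what explains why the analogous phenomenon \emph{fails} for $\ap$ in the next subsection.
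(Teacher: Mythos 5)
Your proposal is correct, but it follows a genuinely different route from the paper's proof. The paper picks a fixed natural class, the idempotent congruences $\id$, and a unary-alphabet example (Figure~\ref{i-trans}) realizing $f(a)=\epsilon$ and $f(a^n)=a$ for $n\geq 2$; it then rules out $\id$-sequentiality by a short direct argument: in a hypothetical sequential $\id$-transducer, $a\approx aa$ forces a loop $q\xrightarrow{a|a}q$ after the run $p\xrightarrow{a|\epsilon}q$, whence $f(aaa)=aa$, a contradiction. You instead work generically: you extract from the proofs of Theorem~\ref{thm:v-seq} and Proposition~\ref{prop:finer-seq} that ${\approx_{\aut_f}}\notin\var$ already excludes $\var$-sequentiality (legitimate, since both precede this proposition), you exhibit a functional transducer $U$ whose transition congruence merges $a$ and $b$ while $\sim_f$ separates them, and you then tailor $\var$ as the principal filter of congruences coarser than $\approx_U$. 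All three steps check out: your $U$ is unambiguous and realizes $f$; unwinding $\sim_f$ gives $\widehat f(a)=\epsilon$ and $\widehat f(b)=b$, so the residuals at $w=\epsilon$ are $c$ versus $\epsilon$, hence $a\not\sim_f b$ and $a\not\approx_{\aut_f}b$ (a small slip: the final output of $\trans_f$ at $[b]$ is $\epsilon$, not $b$, since that letter is emitted eagerly on the incoming transition; the separation stands); and the principal filter satisfies the paper's per-alphabet axioms of closure under intersection and coarsening. The trade-off is this: your argument isolates the general mechanism --- transition congruences are output-blind, so an unambiguous transducer may identify words that the earliest sequential transducer must distinguish, and the weak per-alphabet axioms of congruence classes then let you manufacture a class witnessing exactly this --- reducing the counterexample hunt to comparing two congruences. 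The paper's choice buys naturality: $\id$ is closed under inverse images of morphisms, so its counterexample survives even the stronger variety-style axioms of \cite{PinS05}, whereas your filter class does not (the inverse image of $\approx_U$ under $a\mapsto a$, $b\mapsto c$, $c\mapsto c$ is not coarser than $\approx_U$); moreover, the paper reuses its example later in Proposition~\ref{prop:can-bim-not-C}, so the concrete, natural instance does extra work there.
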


\begin{proof}

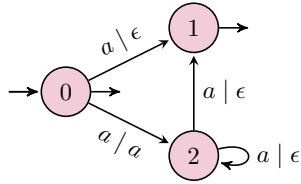
\begin{figure}[t]
\centering

\begin{tikzpicture}[->,>=stealth',shorten >=1pt,auto,node distance=1.6cm,
 semithick,scale=0.85,every node/.style={scale=0.85}]
 \tikzstyle{every state}=[fill=purple!20!white,minimum size=2em]
\node[initial,state, initial by arrow, accepting by
  arrow,initial text=,accepting text=](q0) at (0,0) {0} ;
\node[accepting by arrow, state] (q1) at (2,1) {1} ;
\node[state] (q2) at (2,-1) {2} ;

\draw[->,>=stealth] (q0)--(q1) node[midway,above,sloped] {$a\mid \epsilon$};
\draw[->,>=stealth] (q0)--(q2) node[midway,below,sloped] {$a\mid a$};
\draw[->,>=stealth] (q2)--(q1) node[midway,right] {$a\mid \epsilon$};
\draw[] (q2) edge[loop right] node[midway, right]{$a\mid\epsilon$} (q2);
\end{tikzpicture}

\caption{$\id$-transducer.}\label{i-trans}
\end{figure}


 Let us consider the congruence class $\id$ of idempotent congruences.
We give in Figure~\ref{i-trans} an example of an unambiguous $\id$-transducer which realizes a sequential transduction $f$.
Let us show that $f$ is not $\id$-sequential.
We assume by contradiction that there exists a sequential $\id$-transducer realizing $f$.
Let $p\xrightarrow{a|\epsilon}q$ be the corresponding accepting run over $a$.
Since $f(a) = \epsilon$ we have $\init(p)=\final(q)=\epsilon$.
Moreover, since $f(aa)=a$ and $a\sim aa$, we have $q\xrightarrow{a|a}q$. Hence $f(aaa)=aa$ which yields a contradiction.
\end{proof}

In this section we show however that this is not the case for the congruence class $\ap$ of aperiodic congruences.
This fact is shown by exhibiting a determinization algorithm which, when it takes as an input an $\ap$-transducer realizing a sequential transduction, yields a sequential $\ap$-transducer.
As it was shown in~\cite{BealC02}, a transducer defining a sequential transduction can be determinized using a modified subset construction  where the automaton has to store delays between outputs in its states.
This procedure does not preserve the congruence class in general but does preserve aperiodicity, which allows us to state the following theorem:
\begin{thm}%
\label{thm:a-seq}
A sequential transduction is $\ap$-sequential if and only if it is $\ap$-rational.
\end{thm}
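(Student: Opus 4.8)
One direction is immediate: if a transduction is $\ap$-sequential, it is realized by a sequential $\ap$-transducer, which is in particular a functional $\ap$-transducer, so it is $\ap$-rational. The content of the theorem is the converse. The plan is to start from an arbitrary functional $\ap$-transducer $\trans$ realizing a \emph{sequential} transduction $f$, and to run the determinization-with-delays procedure of \cite{BealC02} on $\trans$. Since $f$ is sequential, this procedure terminates and produces a sequential transducer $\det(\trans)$ realizing $f$. The whole point is then to verify that, when the input transducer has an aperiodic transition congruence, the output sequential transducer does as well, i.e. that $\det(\trans)$ is an $\ap$-transducer.

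I would first recall precisely the state structure produced by the subset-with-delays construction: a state of $\det(\trans)$ is a set of pairs $\set{(q_1,d_1),\ldots,(q_k,d_k)}$ where the $q_i$ are states of the underlying automaton $\aut$ of $\trans$ reachable on the current input prefix, and each $d_i$ is a finite \emph{delay} word recording the suffix of output that has been read along the run to $q_i$ but not yet emitted. The key combinatorial fact is that the set of states $\set{q_1,\ldots,q_k}$ appearing in a reachable determinized state is exactly the image of the input prefix under the classical subset construction on $\aut$, and the delays $d_i$ are bounded (this boundedness is precisely what guarantees termination for a sequential $f$, and is established in \cite{BealC02}).

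The crux is to show that the transition congruence $\approx_{\det(\trans)}$ is aperiodic, given that $\approx_{\aut}$ is. The natural strategy is to exhibit, for a suitable $n$, that reading $w^{n}$ and $w^{n+1}$ induces the same state-to-state transition in $\det(\trans)$ for every word $w$. One takes $n$ large enough to work simultaneously at two levels: at the level of the reachable-state-sets (the subset construction on an aperiodic automaton is aperiodic, since aperiodicity of $\approx_\aut$ lifts through the powerset transition monoid) and at the level of the delays. For the state-set component one uses that $w^{n} \approx_\aut w^{n+1}$ forces the same set of runs, hence the same reachable set, after $w^{n}$ and after $w^{n+1}$. For the delay component one argues that once the underlying run structure stabilizes under pumping of $w$, the accumulated delays also stabilize: since $f$ is a function and the delays are bounded, iterating $w$ one more time beyond the stabilization threshold can neither change which states are reached nor shift the normalized delays, because the extra factor $w$ acts on each surviving run in a way that is idempotent once $\approx_\aut$-aperiodicity has been reached and the common output prefix has been factored out.

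The main obstacle I expect is exactly this second level: controlling the delays under pumping. Aperiodicity of $\approx_\aut$ only directly controls the \emph{reachability pattern} of runs, not the output words, so one must show that the delay words do not behave periodically even though the outputs along $w^{n}$ and $w^{n+1}$ differ by the block $\out$ associated with one copy of $w$. The resolution is to use functionality: in a determinized state all delays share the same normalization (the longest common prefix has been emitted), and on the set of runs that survive indefinitely under pumping, the per-copy output contribution must be identical for every surviving run (otherwise the prefix distances between pairs of outputs would grow unboundedly, contradicting boundedness of delays / sequentiality of $f$). Hence adding one more copy of $w$ shifts every delay by the same word and then re-normalizes to the same tuple, so the determinized transition on $w^{n}$ and on $w^{n+1}$ coincides. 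Choosing $n$ to be the maximum of the aperiodicity index of $\approx_\aut$ (lifted to the subset construction) and the delay-stabilization threshold then yields $w^{n}\approx_{\det(\trans)}w^{n+1}$ for all $w$, so $\det(\trans)$ is a sequential $\ap$-transducer, completing the proof.
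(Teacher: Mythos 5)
Your high-level strategy is exactly the paper's: feed the functional $\ap$-transducer realizing the sequential transduction $f$ into the determinization-with-delays construction of B\'eal--Carton, and prove that aperiodicity of $\approx_\aut$ is inherited by the resulting sequential transducer. Your first level (the state-set component stabilizes, because the transition congruence of the subset automaton is coarser than $\approx_\aut$) is fine and is also how the paper begins. The genuine gap is at your second level, which is where the entire difficulty of the theorem sits. Your key claim --- that on the runs surviving pumping of $w$ ``the per-copy output contribution must be identical for every surviving run'', else delays grow unboundedly --- is false. Boundedness of the delays only forces the per-copy outputs to be \emph{conjugate} words with compatible phases, not equal ones. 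Concretely, take $p\xrightarrow{a\mid aab}_\trans p$ and $q\xrightarrow{a\mid aba}_\trans q$ with $\init(p)=\epsilon$, $\init(q)=a$, $\final(p)=a$, $\final(q)=\epsilon$: the determinized state is $\set{(p,\epsilon),(q,a)}$ forever, $f(a^n)=(aab)^na$ is sequential, the underlying automaton is aperiodic, yet the per-copy contributions $aab$ and $aba$ differ. Worse, if the per-copy contributions really were all equal to a single word $\alpha$, then two distinct normalized delays (say $\epsilon$ and $b$, with $\alpha=a$) would produce renormalized tuples $(a^n,ba^n)$ growing without bound; so your claimed invariant is not merely unproven, it is incompatible with the very situation it is supposed to describe, and the conclusion you draw from it (``every delay shifts by the same word and renormalizes to the same tuple'') does not follow even if one grants it.

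What the proposal never rules out is precisely the hard case: the reachable state set can stabilize under pumping of $w$ while the normalized delay tuples cycle with some period $k>1$. Excluding this is the whole content of the paper's proof: it takes a \emph{minimal} counter $R_0\xrightarrow{u}R_1\xrightarrow{u}\cdots\xrightarrow{u}R_{k-1}\xrightarrow{u}R_0$ in the determinized machine, uses aperiodicity of $\aut$ to show all $R_j$ carry the same underlying states, then derives word equations relating the delays $\beta_{i,j}$, the emitted outputs $\alpha_j$, and the outputs along runs of $\trans$ --- first for \emph{looping} states $q_l$ (those with $q_l\xrightarrow{u^t}_\aut q_l$), then propagated to all states --- and finally concludes $k=1$ by a length computation combined with a case analysis on whether the emitted word around the cycle is empty, exploiting arbitrarily long common suffixes and the longest-common-prefix normalization built into the construction. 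Some argument of this kind (conjugacy combinatorics on delays, not equality of per-copy outputs) is indispensable; without it your plan does not yet prove the converse direction.
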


\subsubsection*{Determinization algorithm}
Let us give the determinization algorithm from~\cite{BealC02}.
The idea of the algorithm is similar to the subset construction for automata, but taking the outputs into account:
on a transition from a subset to another, the output is the longest common prefix for all the possible transitions.
The rest of the outputs have to be remembered in the states themselves.
Since not all rational transductions are sequential, the algorithm may not terminate.
However it is shown in~\cite{BealC02} that if the transduction is sequential, the algorithm does terminate.

Now let us describe the algorithm:
Let $\trans=(\aut,\out,\init,\final)$ be a transducer realizing a transduction $f$, with underlying automaton $\aut=(Q,\Delta,I,F)$.
We give a construction of $\trans'=(\aut',\out',\init',\final')$ a transducer realizing $f$, with $\aut'=(Q',\Delta',\set{S_0},F')$ being deterministic.
Let $j=\bigwedge\{\init(q)\mid\ q\in I\}$.
Then $S_0=\{(q,w)\mid\ q\in I\text{ and } \init(q)=jw\}$.
From the initial state we build the states and the transitions of $\aut'$ inductively.
Let $S_1$ be a state already constructed and let $\sigma\in \Sigma$.
We define $R_2=\{(p,vu)\mid\ (q,v)\in S_1 \text{ and } q\xrightarrow{\sigma\mid u}_\trans p\}$.
Let $s=\bigwedge\{w\mid\ (q,w)\in R_2\}$.
Then we define a new state of $Q'$, $S_2=\{(q,w)\mid\ (q,sw)\in R_2\}$ and add the transition to $\Delta'$ and the output of the transition:
$S_1\xrightarrow{\sigma\mid s}_{\trans'}S_2$.
Assuming that $f$ is sequential, the construction must terminate, and we only have left to describe:
\begin{itemize}
\item $\init'(S_0)=j$
\item $F'=\{S\in Q'\mid\ \exists q\ \in F, w\in \Sigma^*,\ (q,w)\in S\}$
\item $\final'(S)=w\final(q)$ such that $q\in F$ and $(q,w)\in S$
\end{itemize}
The definition of $\final'$ may seem ambiguous but it is well-defined due to $f$ being functional.


\begin{proof}[Proof of Theorem~\ref{thm:a-seq}]

Let $\trans=\tuple{\aut,\init,\out,\final}$, with $\aut=\tuple{Q,\Delta,I,F}$, be an $\ap$-transducer realizing a sequential function $f:\Sigma^*\rightarrow \Sigma^*$. Let $\trans'=\tuple{\aut',\init',\out',\final'}$, with $\aut'=\tuple{Q',\Delta',{S_0},F'}$,  be the sequential transducer obtained from $\trans$ by the powerset construction with delays described above. We will show that $\aut'$ is \emph{counter-free}, \ie, for any state $S$, any word $u$, any integer $n>0$, if $S\xrightarrow{u^n}_{\aut'} S$ then $S\xrightarrow{u}_{\aut'} S$.
This condition is sufficient for an automaton to be aperiodic (see \eg~\cite{DiekertG08}). $\aut$ is aperiodic so there is an integer $n$ such that $\forall u\in \Sigma^*$, $u^n\approx_\aut u^{n+1}$.

Let $u\in \Sigma^+$ be a word, let $k$ be a positive integer and let
\[
    R_0\xrightarrow{u|\alpha_0}_{\trans'} R_1 \xrightarrow{u|\alpha_1}_{\trans'} \ldots \xrightarrow{u|\alpha_{k-2}}_{\trans'} R_{k-1} \xrightarrow {u|\alpha_{k-1}}_{\trans'} R_0
\]
denote a counter in $\trans'$. Let us assume that $k$ is the size of the smallest such counter, which means that all $R_j$s are pairwise distinct, we want to show $k=1$.

Let $\Gamma_0:=\alpha_0\cdots \alpha_{k-1}$, for $1 \leq j < k$ let $\Gamma_j:= \alpha_j\cdots\alpha_{k-1}\alpha_0\cdots \alpha_{j-1}$ and let us note that $\Gamma_j\alpha_j=\alpha_j\Gamma_{j+1\mod k}$.
Let $R=\set{q_1, \ldots,q_m}$ denote the states appearing in $R_0$. For $0\leq j < k$, the states of $R_j$ are exactly the states which can be reached in $\aut$ from some state of $R_0$ by reading $u^{kn+j}\approx_\aut u^{kn}$. This means that the states of $R_j$ are the same as the states of $R_0$, namely $q_1, \ldots,q_m$.
Thus let $R_j=\set{(q_1,\beta_{1,j}), \ldots, (q_m, \beta_{m,j})}$.

Let $i,i' \in \set{1,\ldots,m} $, and let $q_{i} \xrightarrow{u|\alpha_{i,i'}}_\trans q_{i'}$ denote a run in $\trans$ when it exists. By definition of $\trans'$, we have for any $0\leq j < k$: \[\beta_{i,j}\alpha_{i,i'}=\alpha_{j}\beta_{i',j+1 \mod k}\]
Let $q_{i_0}\xrightarrow{u}_\aut q_{i_1} \ldots q_{i_{t-1}} \xrightarrow{u}_\aut q_{i_t}$ such that $t=ks$ is a multiple of $k$. Thus we obtain for any $0\leq j,j' < k$:
\[ \begin{array}{rcl}
\beta_{i_0,j}\alpha_{i_0,i_1}\cdots \alpha_{i_{t-1},i_t} &=& \Gamma_{j}^s \beta_{i_t,j} \\
\beta_{i_0,j'}\alpha_{i_0,i_1}\cdots \alpha_{i_{t-1},i_t} &=& \Gamma_{j'}^s \beta_{i_t,j'} \\
\end{array}\]


Since $Q$ is finite there must be a state $q_l\in R$, such that $q_l$ loops by reading a power of $u$, meaning that there is an integer $t$ such that $q_l\xrightarrow{u^t}_\aut q_l$, and we call such a state a \emph{looping state}. For a large enough $t$ we can assume by aperiodicity that $t$ is of the form $t=ks+1$. Let $l,i_1,\ldots,i_{t-1},l$ denote the state indices of the previous run from $q_l$ to $q_l$ over $u^t$. Let $\Phi:= \alpha_{l,i_1}\cdots \alpha_{i_{t-1},l}$. We have for $0\leq j < k$:
\[ \begin{array}{rclr}
\beta_{l,j}\Phi &=&\Gamma_{j}^s\alpha_j\beta_{l,j+1}& (1) \\
\beta_{l,j}\Phi^k &=&\Gamma_{j}^{ks+1}\beta_{l,j} & (2)\\
\end{array}
\]
From (2) we have $|\Phi|=(ks+1)\frac{|\Gamma_0|}{k}$. From (1) we thus obtain: $ |\beta_{l,j+1}|-|\beta_{l,j}|=\frac {|\Gamma_0|}{k}-|\alpha_j| $.
Notice that this holds for any looping state $q_l$ but the difference does not depend on the state itself.

Let us now consider $q_i$, a state which is not necessarily a looping state.
Any state must be reachable from some looping state, since all states in $R$ can be reached from some state of $R$ by an arbitrarily large power of $u$. Let $q_l$ be a looping state which can reach $q_i$ by a run over $u^{ks}$. Again let $l,i_1',\ldots,i_{ks-1}',i$ denote the sequence of indices of such a run and let $\Psi:=\alpha_{l,i_1'}\cdots \alpha_{i_{ks-1}',i}$. We have for $0\leq j < k$:
\[ \begin{array}{rclr}
\beta_{l,j}\Psi &=&\Gamma_{j}^{s}\beta_{i,j} & (3)\\
\beta_{l,j+1}\Psi &=&\Gamma_{j+1}^{s}\beta_{i,j+1} & (4)\\
\end{array}
\]
Taking the lengths of words of (4) and (5), and taking the difference between the two equalities we obtain:
$|\beta_{i,j+1}|-|\beta_{i,j}|= |\beta_{l,j+1}|-|\beta_{l,j}|=\frac {|\Gamma_0|}{k}-|\alpha_j| $ which again does not depend on $i$. Thus we obtain that for any state $q_i\in R$, looping or not, $|\beta_{i,j+1}|-|\beta_{i,j}|=\frac {|\Gamma_0|}{k}-|\alpha_j|$.

If we assume that $\Gamma_0=\epsilon$, then in particular $|\beta_{i,j+1}|=|\beta_{i,j}|$ for any state $q_i$. From (1), we have that $\beta_{l,j}=\beta_{l,j+1}$ for any looping state $q_l$. Then combining (4) and (5) we obtain $\beta_{i,j}=\beta_{i,j+1}$ for any state $q_i$. Hence all $R_j$s are identical which means that $k=1$.

Let us now assume that $\Gamma_0\neq \epsilon$. Since equations (4) and (5) can have an arbitrarily large common suffix, we know that for any state $q_i$, either $\beta_{i,j}$ is a suffix of $\beta_{i,j+1}$ or \emph{vice versa}.
Note that whether $\beta_{i,j}$ or$\beta_{i,j+1}$ is a suffix of the other does \emph{not} depend on $i$ since $|\beta_{i,j+1}|-|\beta_{i,j}|=\frac {|\Gamma_0|}{k}-|\alpha_j|$, and furthermore the size of $\gamma_i$ does not depend on $i$ either. If $\beta_{i,j}=\gamma_i\beta_{i,j+1}$, since (4) and (5) can have an arbitrarily large common suffix, we have that $\gamma_i$ is a suffix of $\Gamma_{j+1}^{ks}$ which does not depend on $i$.
Hence $\gamma_i$ is a common prefix of $\beta_{i',j}$ for all $i'\in \set{1,\ldots,m}$, which means that $\gamma_i=\epsilon$ by definition of $\trans'$. Thus all $R_j$s are equal which means that $k=1$. Similarly, if $\beta_{i,j}$ is a suffix of $\beta_{i,j+1}$, then $\gamma_i$ is a suffix of $\Gamma_{j}^{ks}$, and with the same reasoning, we conclude that $k=1$.
\end{proof}


\section{Bimachines}%
\label{sec:bimachines}

Bimachines are a model of computation as expressive as (functional) transducers, that was introduced by~\cite{Schutzenberger61} and further studied (and named) by~\cite{Eilenberg74}.
One of the main features of bimachines is their completely deterministic nature.
In order to express all the rational transductions a bimachine needs, as its name suggests, two automata:
A right automaton, which reads words deterministically from right to left and a left automaton (which is just a deterministic automaton).
The roles of the two automata are completely symmetrical, however the right automaton can be seen intuitively as a regular look-ahead for the left automaton.

Using a result from~\cite{ReutenauerS91} we show that for a given bimachine, one can minimize the left automaton with respect to the right one and \emph{vice versa}.
We give a \ptime algorithm for bimachine minimization (in the vein of Moore's DFA minimization algorithm), but underline the fact that a given transduction does not have a unique minimal bimachine, in general.

Finally, for a rational transduction $f$, we describe the canonical bimachine from~\cite{ReutenauerS91} (\ie, it does not depend on the description of $f$), which relies on the existence of a canonical right automaton.
Intuitively, this right automaton represents the ``minimal'' look-ahead information needed to realize the transduction sequentially.

\subsection{Bimachines and transductions}

\subsubsection{Right automaton}
Formally a \emph{right automaton} over an alphabet $\Sigma$ is an automaton $\raut=(Q,\Delta,I,F)$ such that $I$ is a singleton and its transitions are backward deterministic, meaning that for any two transitions $(p_1,\sigma,q),(p_2,\sigma,q)\in \Delta$ it holds that $p_1=p_2$.
The only difference with the classical notion of automaton lies in the definition of accepting runs.
A run $r$ over a right automaton is called \emph{accepting} if $r[1]$ is final and $r[|r|]$ is initial.
Therefore a right automaton can be thought of as reading words from right to left, deterministically.
A run $r$ of $\raut$ over the word $w$ will be denoted by $r[1]\xleftarrow{w}_{\raut}r[|r|]$ to emphasize that $\raut$ is a right automaton, and a transition $(p,\sigma,q)$ will be depicted by an arrow from $q$ to $p$.

\subsubsection{Left congruence}
The \emph{left transition congruence} associated with a right automaton $\raut=(Q,\Delta,\set{r_0},F)$ is defined by $u\sim_\raut v$ if $\forall r\in Q,\ r\xleftarrow{u}_{\raut}r_0 \Leftrightarrow r\xleftarrow{v}_{\raut}r_0$.
Exactly like  for left automata, we assume that $\raut$ is (co-)complete and identify $\cla u _{\sim_\raut}$ (often denoted by $\cla u _{\raut}$) with the unique state $r\in Q$ such that $r\xleftarrow{u}_{\raut}r_0$. 
We also say that a right automaton $\raut_1$ is finer than $\raut_2$ (denoted by $\raut_1\finer\raut_2$) if ${\sim_{\raut_1}}\finer{\sim_{\raut_2}}$.

\subsubsection{Bimachine}
A \emph{bimachine} over an alphabet $\Sigma$ is a tuple $\bim=\tuple{\laut,\raut,\bout,\lfinal,\rfinal}$ where $\laut=\tuple{Q_\laut,\Delta_\laut,\set{l_0},F_\laut}$ is a left (\ie, deterministic) automaton, $\raut=\tuple{Q_\raut,\Delta_\raut,\set{r_0},F_\raut}$ is a right automaton, $\bout:Q_\laut\times\Sigma\times Q_\raut\rightarrow \Sigma^ *$ is the \emph{output function}, $\lfinal:F_\raut \rightarrow \Sigma^*$ is the \emph{left final function} and $\rfinal:F_\laut \rightarrow \Sigma^*$ is the \emph{right final function}.
The two automata $\laut$ and $\raut$ are required to recognize the same language.

We extend naturally the function $\bout$ to $Q_\laut\times\Sigma^*\times Q_\raut$ as follows:
for all $l\in Q_\laut,r\in Q_\raut$, $\bout(l,\epsilon,r)=\epsilon$ and for all $u,v\in \Sigma^*$, $l'\in Q_\laut,r'\in Q_\raut$ such that $l\xrightarrow u_\laut l'$ and $r' \xleftarrow v _\raut r$ then $\bout(l,uv,r)=\bout(l,u,r')\bout(l',v,r)$.

We define  $\sem\bim$ the transduction \emph{realized} by $\bim$ such that $\dom( \sem\bim)=\sem \laut=\sem \raut$, and for any word $u\in \sem \laut$ if $l$ and $r$ are the final states of the runs over $u$ of $\laut$ and $\raut$ respectively, then $\sem\bim(u)=\lfinal(r)\bout(l_0,u,r_0)\rfinal(l)$.

\begin{exa}
Let us give an example of a bimachine $\bim=\tuple{\laut,\raut,\bout,\lfinal,\rfinal}$ realizing the function $\mathsf{swap}$ over the alphabet $\set{a,b}$, which swaps the first and last letters of a word.
Formally, $\mathsf{swap}(\sigma w\tau)=\tau w\sigma$ for
$\sigma,\tau\in\Sigma$ and $w\in\Sigma^ *$, and $\mathsf{swap}(w)=w$
if $|w|<2$. The automata of $\bim$ are given in Figure~\ref{fig:ex-bim}; the role of the left automaton is to remember the first letter of the word and the role of the right automaton is to remember the last letter.
The functions $\lfinal$ and $\rfinal$ both map any state to $\epsilon$, and the output function $\bout$ maps any triple of the form $(l_0,\sigma,r_\tau)$ or $(l_\tau,\sigma,r_0)$ to $\tau$ and any other triple $(l,\sigma,r)$ to $\sigma$, for any letters $\sigma,\tau$.
An execution of $\bim$ over the word $aabb$ is illustrated on Figure~\ref{fig:ex-bim}.
\end{exa}

\begin{figure}


\begin{center}
\begin{tikzpicture}[->,>=stealth',shorten >=1pt,auto,node distance=1.6cm,
 semithick,scale=0.76,every node/.style={scale=0.78}]
 \tikzstyle{every state}=[fill=lightgray!30!white,text=black,minimum size=2em]

\node[initial,state, initial by arrow, accepting by
  arrow,initial text=,accepting text=] (l0) at (0,0) {$l_0$} ;
\node[accepting by arrow, state] (la) at (2,1) {$l_a$} ;
\node[accepting by arrow, state] (lb) at (2,-1) {$l_b$} ;

\draw[->,>=stealth] (l0)--(la) node[midway,above,sloped] {$a$};
\draw[->,>=stealth] (l0)--(lb) node[midway,below,sloped] {$b$};
\draw[] (la) edge[loop above] node[midway, above]{$a,b$} (la);
\draw[] (lb) edge[loop above] node[midway, above]{$a,b$} (lb);


\node[initial right,state, initial by arrow, accepting left,initial text=,accepting text=] (l0) at (16,0) {$r_0$} ;
\node[accepting left, state] (la) at (14,1) {$r_a$} ;
\node[accepting left, state] (lb) at (14,-1) {$r_b$} ;
\draw[->,>=stealth] (l0)--(la) node[midway,above,sloped] {$a$};
\draw[->,>=stealth] (l0)--(lb) node[midway,below,sloped] {$b$};
\draw[] (la) edge[loop above] node[midway, above]{$a,b$} (la);
\draw[] (lb) edge[loop above] node[midway, above]{$a,b$} (lb);

\node[minimum size=20pt] (l0) at (4,1) {$l_0$} ;
\node[minimum size=20pt] (l1) at (6,1) {$l_a$} ;
\node[minimum size=20pt] (l2) at (8,1) {$l_a$} ;
\node[minimum size=20pt] (l3) at (10,1) {$l_a$} ;
\node[minimum size=20pt] (l4) at (12,1) {$l_a$} ;

\node (u1) at (5,0) {$a$} ;
\node (u2) at (7,0) {$a$} ;
\node (u3) at (9,0) {$b$} ;
\node (u4) at (11,0) {$b$} ;

\node[minimum size=20pt] (r0) at (12,-1) {$r_0$} ;
\node[minimum size=20pt] (r1) at (10,-1) {$r_b$} ;
\node[minimum size=20pt] (r2) at (8,-1) {$r_b$} ;
\node[minimum size=20pt] (r3) at (6,-1) {$r_b$} ;
\node[minimum size=20pt] (r4) at (4,-1) {$r_b$} ;

\node (v1) at (5,-2) {$b$} ;
\node (v2) at (7,-2) {$a$} ;
\node (v3) at (9,-2) {$b$} ;
\node (v4) at (11,-2) {$a$} ;

\draw[thick,->,>=stealth] (l0) --(l1);
\draw[thick,->,>=stealth] (l1) --(l2);
\draw[thick,->,>=stealth] (l2) --(l3);
\draw[thick,->,>=stealth] (l3) --(l4);

\draw[thick,->,>=stealth] (r0) --(r1);
\draw[thick,->,>=stealth] (r1) --(r2);
\draw[thick,->,>=stealth] (r2) --(r3);
\draw[thick,->,>=stealth] (r3) --(r4);



\draw[rotate around={45:(5,0)}] (4.8,-1.7) rectangle (5.2,1.7);
\draw[rotate around={45:(7,0)}] (6.8,-1.7) rectangle (7.2,1.7);
\draw[rotate around={45:(9,0)}] (8.8,-1.7) rectangle (9.2,1.7);
\draw[rotate around={45:(11,0)}] (10.8,-1.7) rectangle (11.2,1.7);

\draw[dotted] (5,-0.3) -- (5,-1.8);
\draw[dotted] (7,-0.3) -- (7,-1.8);
\draw[dotted] (9,-0.3) -- (9,-1.8);
\draw[dotted] (11,-0.3) -- (11,-1.8);

\node at (4.85,-0.6) {$\omega$};
\node at (6.85,-0.6) {$\omega$};
\node at (8.85,-0.6) {$\omega$};
\node at (10.85,-0.6) {$\omega$};

\node at (4,0) {input:};
\node at (4,-2) {output:};

\node at (1,-2) {Left automaton};
\node at (15,-2) {Right automaton};


\end{tikzpicture}
\end{center}


\caption{Automata of a bimachine $\bim$, and a run of $\bim$ on the word $aabb$.\label{fig:ex-bim}}
\end{figure}
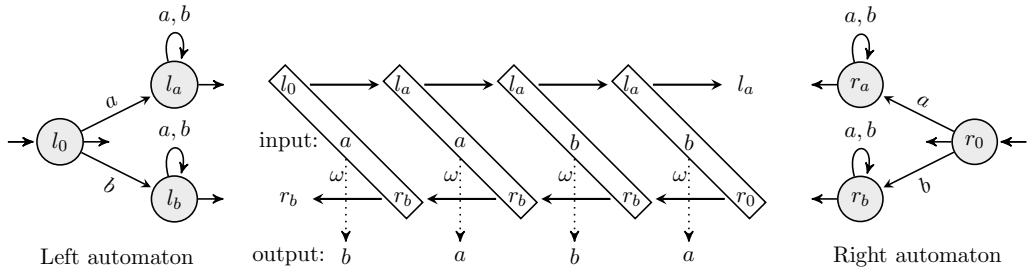

Finally, for $i=1,2$, let $\bim_i$ be a bimachine with $\laut_i$ and
$\raut_i$ as left and right automata. We say that $\bim_1$ is finer
than $\bim_2$, denoted by $\bim_1\finer \bim_2$, if $\laut_1\finer
\laut_2$ and $\raut_1\finer \raut_2$. A bimachine $\bim$ realizing a
function $f$ is \emph{minimal} if for any other bimachine $\bim'$ realizing
$f$, either $\bim'\finer \bim$ or $\bim'$ is incomparable with $\bim$.
In other words, a bimachine is minimal if there does not exist a  strictly coarser equivalent bimachine.
If $M$ is a set of bimachines, we denote by $\text{Min}(M)
\subseteq M$
the set of minimal bimachines among $M$.

\subsection{Bimachine minimization}%
\label{subsec:bim-min}
Sequential transducers can be minimized by producing the outputs as soon as possible, and this can be done in \ptime (see~\cite{Choffrut03} and Section~\ref{subsec:min-seq}).
For transducers in general, no such procedure is known.
Nevertheless, it was shown in~\cite{ReutenauerS91} that, for a bimachine, the left
automaton can be minimized when the right automaton is fixed.
This is called left minimization, and it is done in the same ``as soon as possible''
spirit but using the look-ahead information given by the right
automaton. Symmetrically, one can minimize the right automaton when
the left automaton is fixed (called right minimization).
By applying left minimization followed by right
minimization, we obtain a minimal bimachine.
We describe this minimization, generalize some properties from~\cite{ReutenauerS91}, and provide a \ptime algorithm for bimachine
minimization.

For $i=1,2$, let $\bim_i$ be a bimachine realizing $f$ with $\laut_i$ and
$\raut_i$ as left and right automata. We say that $\bim_1$ is finer
than $\bim_2$, denoted by $\bim_1\finer \bim_2$, if $\laut_1\finer
\laut_2$ and $\raut_1\finer \raut_2$.
If $\bim_1\finer\bim_2$, $\bim_2\finer\bim_1$ and $\sem{\bim_1}=\sem{\bim_2}$ then we say that $\bim_1$ and $\bim_2$ are equal up to state renaming and output shifting and we denote it by $\bim_1\bimeq\bim_2$.
A bimachine $\bim$ realizing a
function $f$ is \emph{minimal} if for any other bimachine $\bim'$ realizing
$f$, either $\bim'\finer \bim$ or $\bim'$ is incomparable with $\bim$.
In other words, a bimachine is minimal if there does not exist a  strictly coarser equivalent bimachine.
If $M$ is a set of bimachines, we denote by $\text{Min}(M)
\subseteq M$
the set of minimal bimachines among $M$.

\subsubsection{Left and right minimization}%
\label{subsubsec:lr-minimization}
Let $f$ be a transduction realized by a bimachine $\bim$ with a right automaton $\raut$.
We describe the construction of a bimachine
$\leftcan_f(\bim)=\tuple{\leftcan_f(\raut),\raut,\bout,\lfinal,\rfinal}$
which realizes $f$ and has the minimal left automaton among all
bimachines realizing $f$ with right automaton
$\raut$. We will often write $\leftcan (\bim)$ and $\leftcan(\raut)$ when the transduction is clear from context.
Intuitively, $\bim$ can be seen as a sequential transducer processing
input words the positions of which are annotated by the class (in the
transition congruence of $\raut$) of the suffix from that
position. Then, left minimization can be seen as minimizing this
sequential transducer. While it is a good intuition behind the
construction, the explicit annotation is however not necessary and we rather give a direct construction that
does not modify the input words.

We first define a family of new transductions, for any words $u,w$ let:
\[
    \widehat{f}_{\cla w_{\raut}}(u)=\bigwedge\set{f(uv)|\ v\in \cla w_{\raut}\cap u^{-1}\dom (f)}
\]
The word $\widehat{f}_{\cla w_{\raut}}(u)$ is the longest possible output from reading $u$, with the look-ahead information that the suffix is in $\cla w_\raut$.
The states of $\leftcan(\raut)$ will be the classes of the right congruence $\sim_L$ defined for any words $u,v$ by $u\sim_L v$ if:
\[
    \left\lbrack \begin{array}{ll}
    \forall w\in \Sigma^*,\ uw\in \dom(f) \Leftrightarrow vw\in \dom(f) \text{ and,}\\
    \text{if } uw\in \dom(f), \text{ then } {\widehat{f}_{\cla w_{\raut}}(u)}^{-1}f(uw)={\widehat{f}_{\cla w_{\raut}}(v)}^{-1}f(vw)
    \end{array}\right.
\]

\noindent
We define the automaton $\leftcan(\raut)=\tuple{Q,\Delta,\set{l_0},F}$ from the congruence $\sim_L$:
\begin{itemize}
\item $Q=\Sigma^*/_{\sim_L}$
\item $\Delta=\set{(\cla w_{\sim_L},\sigma,\cla{w\sigma}_{\sim_L})|\ w\in\Sigma^*,\sigma\in \Sigma}$
\item $l_0=\cla \epsilon_{\sim_L}$
\item $F=\set{\cla w_{\sim_L}|\ w\in \dom(f)}$

\end{itemize}
Now we can define the outputs of the bimachine:
\begin{itemize}
\item $\bout(\cla u_{\sim_L},\sigma,\cla w_\raut)={\widehat{f}_{\cla {\sigma w}_{\raut}}(u)}^{-1}\widehat{f}_{\cla w_{\raut}}(u\sigma)$
\item $\lfinal(\cla w_\raut)=\widehat{f}_{\cla w_{\raut}}(\epsilon)$ for $w\in \dom(f)$
\item $\rfinal(\cla u_{\sim_L})={\widehat{f}_{\cla \epsilon_{\raut}}(u)}^{-1} f(u)$ for $u\in \dom(f)$
\end{itemize}
Since we take a definition of $\sim_L$ slightly different from~\cite{ReutenauerS91}, we show, using the same idea as in~\cite{Choffrut03} for sequential transductions, that $\sim_L$ is
indeed a right congruence and that the outputs of the bimachine are
well-defined.
\begin{lem}\label{lem:well-defined}
$\sim_L$ is a right congruence and the output functions of
$\leftcan_f(\bim)$ are well-defined.
\end{lem}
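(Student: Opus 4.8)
The plan is to verify two things: that $\sim_L$ is a right congruence, and that the output functions $\bout$, $\lfinal$, $\rfinal$ are well-defined (i.e.\ independent of the chosen representatives $u$ of $\cla u_{\sim_L}$ and $w$ of $\cla w_\raut$).

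First I would establish that $\sim_L$ is a right congruence. Suppose $u\sim_L v$; I must show $u\sigma \sim_L v\sigma$ for every letter $\sigma$. The key observation is that the defining condition of $\sim_L$ for the pair $(u\sigma, v\sigma)$ tested against a continuation $w$ is exactly the condition for the pair $(u,v)$ tested against the continuation $\sigma w$. Concretely, $u\sigma w \in \dom(f) \Leftrightarrow uw' \in \dom(f)$ with $w'=\sigma w$, so the domain condition transfers immediately. For the output condition, I would unfold $\widehat f_{\cla w_\raut}(u\sigma)$ and relate it to $\widehat f_{\cla{\sigma w}_\raut}(u)$: since $\raut$ is a right (backward deterministic) automaton, $\cla{\sigma w}_\raut$ is determined by $\sigma$ and $\cla w_\raut$, so reading $\sigma$ shifts the look-ahead class in a well-defined way. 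Then $\widehat f_{\cla w_\raut}(u\sigma)^{-1} f(u\sigma w) = \widehat f_{\cla{\sigma w}_\raut}(u)^{-1} f(u \cdot \sigma w)$ up to the prefix output on $\sigma$, which cancels identically on both the $u$ and $v$ sides because $u\sim_L v$ applied to the continuation $\sigma w$ gives the matching residuals. This reduces the $\sim_L$-equivalence of $u\sigma$ and $v\sigma$ to that of $u$ and $v$.

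Next I would check well-definedness of the outputs. For $\bout(\cla u_{\sim_L},\sigma,\cla w_\raut)=\widehat f_{\cla{\sigma w}_\raut}(u)^{-1}\widehat f_{\cla w_\raut}(u\sigma)$, I need that replacing $u$ by an $\sim_L$-equivalent $u'$ and $w$ by an $\sim_\raut$-equivalent $w'$ leaves the value unchanged. The dependence on $w$ only through $\cla w_\raut$ is immediate since $\widehat f_{\cla w_\raut}$ is indexed by the class. For the dependence on $u$, I would argue that the quantity $\widehat f_{\cla{\sigma w}_\raut}(u)^{-1}\widehat f_{\cla w_\raut}(u\sigma)$ is an ``incremental output'' that is forced to agree whenever $u\sim_L u'$: intuitively this is the extra output produced upon reading $\sigma$ under look-ahead $\cla w_\raut$, and the $\sim_L$ condition guarantees the residual outputs $\widehat f_{\cla{w}_\raut}(\cdot)^{-1}f(\cdot w)$ coincide. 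A small subtlety is to handle the case where some $\widehat f$ value is a proper prefix situation versus emptiness; I would use the prefix/residual arithmetic ($u^{-1}v$ notation) carefully, relying on the fact established in the sequential case (Section~\ref{subsec:min-seq}) that these longest-common-prefix based residuals behave compatibly. The well-definedness of $\lfinal$ and $\rfinal$ follows the same pattern, being the boundary cases $u=\epsilon$ and $w=\epsilon$ respectively.

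The main obstacle I expect is the output well-definedness for $\bout$, specifically showing the incremental output is independent of the representative $u$. The congruence property itself is a fairly mechanical ``shift the continuation by $\sigma$'' argument, but the output values involve nested longest-common-prefix operations ($\widehat f_{\cla w_\raut}(u)=\bigwedge\set{f(uv)\mid v\in\cla w_\raut\cap u^{-1}\dom(f)}$) and I will need to track how $\widehat f_{\cla{\sigma w}_\raut}(u)$ relates to $\widehat f_{\cla w_\raut}(u\sigma)$ when the look-ahead refines. The cleanest route is probably to mirror the computation already carried out in the proof of Theorem~\ref{thm:v-seq}, where the analogous residual cancellation $z'^{-1}z\final(r)$ was shown representative-independent, and adapt it by replacing the unconditioned $\widehat f$ with the look-ahead-indexed $\widehat f_{\cla w_\raut}$ throughout.
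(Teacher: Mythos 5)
Your proposal is correct and takes essentially the same approach as the paper's proof. The paper makes your ``prefix output on $\sigma$'' precise as $\bigwedge_{z\in\cla w_\raut} g(\sigma z)$, where $g(z)=\widehat f_{\cla z_\raut}(u)^{-1}f(uz)$ is the residual function shared by $u\sim_L v$: it factors $\widehat f_{\cla w_\raut}(u\sigma)=\widehat f_{\cla{\sigma w}_\raut}(u)\cdot\bigwedge_{z\in\cla w_\raut}g(\sigma z)$ and cancels this common prefix, which simultaneously gives the right-congruence property and the representative-independence of $\bout$, with $\rfinal$ handled by the $w=\epsilon$ instance of $\sim_L$, exactly as you outline.
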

\begin{proof}
Let $u\sim_L v$, $\sigma$ be a letter and let $w$ be a word in $ {(u\sigma)}^{-1}\dom(f)$.
We want to show that ${\widehat{f}_{\cla w_{\raut}}(u\sigma)}^{-1}f(u\sigma w)={\widehat{f}_{\cla w_{\raut}}(v\sigma)}^{-1}f(v\sigma w)$.
We know that ${\widehat{f}_{\cla {z}_{\raut}}(u)}^{-1}f(uz)={\widehat{f}_{\cla z_{\raut}}(v)}^{-1}f(vz)$ for $z\in u^{-1}\dom(f)$, and let us denote this word by $g(z)$.
Then $f(u\sigma w)=\widehat{f}_{\cla {\sigma w}_{\raut}}(u)g(\sigma w)$ and we have:
\[
    \begin{array}{rcl}
    {\widehat{f}_{\cla w_{\raut}}(u\sigma)}^{-1}f(u\sigma w)&=&\tuple{\bigwedge_{z\in\cla w_\raut } f(u\sigma z)}^{-1}f(u\sigma w)\\
    &=&\tuple{\bigwedge_{z\in\cla w_\raut } \widehat{f}_{\cla {\sigma w}_{\raut}}(u)g(\sigma z)}^{-1}f(u\sigma w)\\
    &=&\tuple{\widehat{f}_{\cla {\sigma w}_{\raut}}(u)\bigwedge_{ z\in\cla w_\raut } g(\sigma z)}^{-1}f(u\sigma w)\\
    &=&\tuple{\bigwedge_{ z\in\cla w_\raut } g(\sigma z)}^{-1}{\widehat{f}_{\cla {\sigma w}_{\raut}}(u)}^{-1}f(u\sigma w)\\
    &=&\tuple{\bigwedge_{ z\in\cla w_\raut } g(\sigma z)}^{-1}{\widehat{f}_{\cla {\sigma w}_{\raut}}(v)}^{-1}f(v\sigma w)\\
    &=&{\widehat{f}_{\cla w_{\raut}}(v\sigma)}^{-1}f(v\sigma w)
     \end{array}
\]
Hence $u\sigma \sim_L v\sigma$ and $\sim_L$ is indeed a right
congruence.

We only have left to show that the output functions
$\omega,\lambda,\rho$ are
well-defined \ie, they do not depend on the choices of $u$ and
$w$. First, it is rather immediate by definition of $\widehat{f}_{\cla
  {w}}$ that this function does not depend on $w$, and therefore by
definition of $\omega$ and $\lambda$, these two functions do not
depend on the choice of $w$. Now, let us show that $\omega$ does not
depend on the choice of $u$ either, \ie, ${\widehat{f}_{\cla {\sigma
    w}_{\raut}}(u)}^{-1}\widehat{f}_{\cla w_{\raut}}(u\sigma)={\widehat{f}_{\cla {\sigma w}_{\raut}}(v)}^{-1}\widehat{f}_{\cla w_{\raut}}(v\sigma)$ for $u\sim_L v$.
\[
    \begin{array}{rcl}
{\widehat{f}_{\cla {\sigma w}_{\raut}}(u)}^{-1}\widehat{f}_{\cla w_{\raut}}(u\sigma)&=& {\widehat{f}_{\cla {\sigma w}_{\raut}}(u)}^{-1} \widehat{f}_{\cla {\sigma w}_{\raut}}(u)\bigwedge_{ z\in\cla w_\raut } g(\sigma z)\\
&=& \bigwedge_{ z\in\cla w_\raut } g(\sigma z)\\
&=&{\widehat{f}_{\cla {\sigma w}_{\raut}}(v)}^{-1}\widehat{f}_{\cla w_{\raut}}(v\sigma)
\end{array}\]
Finally, by applying the definition of $\sim_L$ to $w=\epsilon$, one
immediately gets $\rfinal(\cla u_{\sim_L})=\rfinal(\cla
v_{\sim_L})$ for $u,v$ such that $u\sim_L v$ and $u,v\in\dom(f)$.
\end{proof}

Symmetrically, one can define $\rightcan_f(\laut)$ and
$\rightcan_f(\bim)$.
The correctness and effectiveness of these constructions was shown in~\cite{ReutenauerS91}:

\begin{lem}[\cite{ReutenauerS91}]\label{lem:correctness-bim}
    $\rightcan_f(\bim)$ and $\leftcan_f(\bim)$ are equivalent to
    $\bim$, and are both computable.
\end{lem}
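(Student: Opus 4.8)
The plan is to prove the two assertions (equivalence and computability) for $\leftcan_f(\bim)$, the case of $\rightcan_f(\bim)$ following by the left–right symmetry of bimachines. By Lemma~\ref{lem:well-defined} we already know that $\sim_L$ is a right congruence and that the output functions are well-defined, so $\leftcan_f(\bim)=\tuple{\leftcan(\raut),\raut,\bout,\lfinal,\rfinal}$ is a legitimate bimachine whose two automata both recognize $\dom(f)$: the right automaton because $\bim$ realizes $f$, and the left one because the first clause of $\sim_L$ forces every class to lie entirely inside or outside $\dom(f)$. Note that the construction depends only on $f$ and on $\raut$. It thus remains to establish that $\sem{\leftcan_f(\bim)}(u)=f(u)$ for all $u\in\dom(f)$, and that the construction is effective.

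For the equivalence I would fix $u=\sigma_1\cdots\sigma_n\in\dom(f)$, write $u_i=\sigma_1\cdots\sigma_i$ and $w_i=\sigma_{i+1}\cdots\sigma_n$, and unfold $\bout(l_0,u,r_0)$ with the extension rule into $\prod_{i=1}^{n}\bout(\cla{u_{i-1}}_{\sim_L},\sigma_i,\cla{w_i}_\raut)$, the left state before position $i$ being $\cla{u_{i-1}}_{\sim_L}$ and the right state after it $\cla{w_i}_\raut$. Since $\sigma_i w_i=w_{i-1}$ and $u_{i-1}\sigma_i=u_i$, setting $T_i:=\widehat f_{\cla{w_i}_\raut}(u_i)$ turns the $i$-th factor into $T_{i-1}^{-1}T_i$, so the product telescopes to $T_0^{-1}T_n$. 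Legitimacy of these quotients rests on the chain $T_0\preceq T_1\preceq\cdots\preceq T_n\preceq f(u)$: each $T_{i-1}\preceq T_i$ holds because $v'\mapsto\sigma_i v'$ sends $\cla{w_i}_\raut\cap u_i^{-1}\dom(f)$ into $\cla{w_{i-1}}_\raut\cap u_{i-1}^{-1}\dom(f)$ with $f(u_i v')=f(u_{i-1}\sigma_i v')$ (using that $\sim_\raut$ is a left congruence), so the family whose longest common prefix is $T_{i-1}$ contains that of $T_i$ and the longest common prefix can only grow; and $T_n=\widehat f_{\cla\epsilon_\raut}(u)\preceq f(u)$ because $\epsilon\in\cla\epsilon_\raut$. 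As $T_0=\lfinal(\cla u_\raut)$ and $\rfinal(\cla u_{\sim_L})=T_n^{-1}f(u)$, the bimachine semantics then yield $\sem{\leftcan_f(\bim)}(u)=T_0\,(T_0^{-1}T_n)\,(T_n^{-1}f(u))=f(u)$.

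For computability the only substantial point is that $\sim_L$ has finite index, which I would obtain by showing $\sim_\laut\finer\sim_L$ for the left automaton $\laut$ of $\bim$. The first clause of $\sim_L$ is immediate since $\laut$ recognizes $\dom(f)$. For the second, given $v\in\cla w_\raut$ I would decompose $f(uv)=\sem\bim(uv)$, via the extension rule applied to $\bim$, as $P\cdot S(v)$, where the prefix $P$ gathers the left-final output and the output produced over $u$ (with the right automaton already in class $\cla w_\raut$) and is $v$-independent — because $\cla{uv}_\raut=\cla{uw}_\raut$ for every $v\in\cla w_\raut$, again by the left-congruence property of $\sim_\raut$ — whereas the tail $S(v)$, gathering the output over $v$ and the right-final output, depends on $u$ only through the left state $\cla u_\laut$ (as does the range $v\in\cla w_\raut\cap u^{-1}\dom(f)$). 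Since $P$ is then a common prefix of the family defining $\widehat f_{\cla w_\raut}(u)$, the quotient $\widehat f_{\cla w_\raut}(u)^{-1}f(uw)$ reduces to $\big(\bigwedge_v S(v)\big)^{-1}S(w)$, which depends on $u$ only through $\cla u_\laut$ — exactly the second clause of $\sim_L$. Hence $\sim_L$ is coarser than $\sim_\laut$, so of finite index, and its classes can be computed by merging states of $\laut$ under this finitely checkable criterion while the output words $\widehat f_{\cla w_\raut}(u)$ are read off from the transducer for $f$ (quantified by the \ptime procedure of Section~\ref{subsec:bim-min}); the construction $\rightcan_f(\laut)$ is dual. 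I expect the prefix chain above and the $v$-independence of the decomposition $P\cdot S(v)$ to be the delicate steps, the latter being the genuine obstacle, as it is precisely what keeps the minimized left automaton finite.
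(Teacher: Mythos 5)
The paper never proves this lemma itself: it is imported from \cite{ReutenauerS91} with a bare citation, so there is no internal proof to compare against, and your reconstruction has to stand on its own. It does, essentially. For the equivalence, your telescoping argument --- unfolding $\bout(l_0,u,r_0)$ into $\prod_{i=1}^{n}T_{i-1}^{-1}T_i$ with $T_i=\widehat f_{\cla{w_i}_\raut}(u_i)$, and justifying the prefix chain $T_0\preceq T_1\preceq\cdots\preceq T_n\preceq f(u)$ by the inclusion of the defining families under $v'\mapsto\sigma_i v'$ (which uses exactly that $\sim_\raut$ is a left congruence) --- is sound, and the endpoint identifications $T_0=\lfinal(\cla{u}_\raut)$ and $T_n^{-1}f(u)=\rfinal(\cla{u}_{\sim_L})$ are correct, so the semantics collapse to $f(u)$ as claimed; this is precisely the computation the paper outsources to the citation. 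For computability, your key step --- the decomposition $f(uz)=P\cdot S(z)$ with $P$ independent of $z\in\cla{w}_\raut$ and $S$ depending on $u$ only through $\cla{u}_\laut$ --- is, almost word for word, the paper's separate Proposition~\ref{prop:l-r-min} ($\laut\finer\leftcan(\raut)$), which the paper proves by the same factorization with $\alpha(\cla{u}_\laut,\cla{w}_\raut)=\bigwedge\set{\bout(\cla{u}_\laut,z,\cla{\epsilon}_\raut)\rfinal(\cla{uz}_\laut)\mid z\sim_\raut w}$. So you obtain finite index of $\sim_L$ by the same route the paper takes, just packaged inside this lemma rather than stated as a standalone proposition.

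The one soft spot is the final step of computability: you assert that the merging criterion on states of $\laut$ is \emph{finitely checkable}, but as written it still quantifies over all continuations $w$, and you do not reduce it to a decidable test --- you defer to the \ptime procedure of Section~\ref{subsec:bim-min}. The deferral is not circular (the correctness proof of Theorem~\ref{thm:ptime-min}, namely ${\sim_*}={\sim_L}$, does not rely on this lemma), and it is a reasonable way to discharge a claim the paper itself only cites; but a fully self-contained argument would need to close this step explicitly, e.g.\ by the Moore-style refinement $\sim_0,\sim_1,\ldots$ and the induction showing it stabilizes at $\sim_L$, or by reducing the state-equivalence test to an equivalence check between finitely many effectively given transductions.
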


The following proposition was shown in the case of total transductions in~\cite{ReutenauerS91}.
In our setting (\ie, when both automata recognize the domain) we are able to extend it to arbitrary transductions.
\begin{prop}%
\label{prop:l-r-min}
Let $\bim$ be a bimachine with automata $\laut$ and $\raut$.
Then $\laut\finer\leftcan(\raut)$ and $\raut\finer\rightcan(\laut)$.
\end{prop}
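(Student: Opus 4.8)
The plan is to prove the two inequalities $\laut\finer\leftcan(\raut)$ and $\raut\finer\rightcan(\laut)$; by the symmetry of the bimachine definition (the roles of the left and right automata are completely interchangeable), it suffices to establish the first, $\laut\finer\leftcan(\raut)$, and the second follows by the mirror argument. Recall that $\leftcan(\raut)$ is built from the right congruence $\sim_L$, so proving $\laut\finer\leftcan(\raut)$ amounts to showing that ${\sim_\laut}\finer{\sim_L}$, \emph{i.e.} that whenever two words reach the same state of $\laut$, they are also $\sim_L$-equivalent. Since $\leftcan(\raut)$ is accessible by construction, it is enough to compare the right transition congruences directly.

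First I would take two words $u,v$ with $u\sim_\laut v$, meaning the runs of $\laut$ on $u$ and $v$ end in the same state $l$. The goal is to verify the two defining clauses of $\sim_L$. The domain clause is immediate: since $\laut$ recognizes $\dom(f)$ and $u,v$ reach the same state, for every word $w$ we have $uw\in\dom(f)\Leftrightarrow vw\in\dom(f)$. For the output clause, fix $w$ with $uw\in\dom(f)$. The key idea, exactly as in the sequential case in Section~\ref{subsec:min-seq}, is to analyze the output of the bimachine $\bim$ along the two runs. Writing $l_0\xrightarrow{u}_\laut l$ and $l_0\xrightarrow{v}_\laut l$ for the left runs, and letting $r$ be the right state reached by reading the suffix after $u$ (equivalently after $v$, since the suffix $w$ is the same and $\raut$ is backward deterministic), I would express both $\widehat f_{\cla w_\raut}(u)$ and $f(uw)$ in terms of the bimachine outputs $\bout$, the left-final contributions, and the longest common prefix over all continuations in $\cla w_\raut$. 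Because $u$ and $v$ lead to the \emph{same} left state $l$, the outputs produced on any common continuation from $l$ are identical, so the residuals $\widehat f_{\cla w_\raut}(u)^{-1}f(uw)$ and $\widehat f_{\cla w_\raut}(v)^{-1}f(vw)$ coincide.

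The main obstacle, and the step requiring the most care, is the bookkeeping of the output shifts: $\widehat f_{\cla w_\raut}(u)$ is the longest common prefix of $f(uz)$ over all $z\in\cla w_\raut$, so I must show that the ``deferred'' output common to all such continuations factors out cleanly and depends only on the state $l$ (and on $\cla w_\raut$), not on which of $u$ or $v$ was used to reach $l$. This is precisely the computation already carried out in the proof of Theorem~\ref{thm:v-seq}, where one writes $\widehat f(u)=\init(p)xz'$ with $z'$ the longest common prefix of all outputs from the reached state onward; here the same telescoping works with $\bout$ in place of $\out$ and with the look-ahead class $\cla w_\raut$ refining which continuations are considered. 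Once this factorization is in place, the equality of residuals is a direct cancellation, giving $u\sim_L v$ and hence ${\sim_\laut}\finer{\sim_L}$, which is exactly $\laut\finer\leftcan(\raut)$. The statement $\raut\finer\rightcan(\laut)$ is then obtained by reading the whole argument from right to left.
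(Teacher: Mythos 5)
Your proposal is correct and takes essentially the same route as the paper's proof: reduce to showing ${\sim_\laut}\finer{\sim_L}$, settle the domain clause because $\laut$ recognizes $\dom(f)$, and factor $\widehat f_{\cla w_\raut}(u)$ as the common prefix (left-final output plus the output along $u$) times a longest common prefix of continuation outputs that depends only on $\cla u_\laut$ and $\cla w_\raut$, so the residual $\widehat f_{\cla w_\raut}(u)^{-1}f(uw)$ cancels to an expression depending only on the left state. The paper carries out exactly this bookkeeping, naming the deferred output $\alpha(\cla u_\laut,\cla w_\raut)$, and obtains $\raut\finer\rightcan(\laut)$ by the same symmetry you invoke.
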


\begin{proof}
Let $\bim=\tuple{\laut,\raut,\bout,\lfinal,\rfinal}$ be a bimachine realizing $f$.
We will show that $\laut\finer\leftcan(\raut)$ and the proof for the right automaton is symmetrical.
Let $u\sim_\laut v$ for two words $u,v$, we want to show that $u\sim_L v$ (using the same notation as above).
Since $\laut$ recognizes $\dom(f)$, we know that for any word $w$, $uw\in \dom(f) \Leftrightarrow vw\in\dom(f)$.
We only need to show that for any word $w\in u^ {-1}\dom(f)$, ${\widehat{f}_{\cla w_{\raut}}(u)}^{-1}f(uw)={\widehat{f}_{\cla w_{\raut}}(v)}^{-1}f(vw)$.
Let $w\in u^ {-1}\dom(f)$, we have that $f(uw)=\lfinal(\cla{uw}_\raut)\bout(\cla \epsilon_\laut,u,\cla w_\raut)\bout(\cla u_\laut,w,\cla \epsilon_\raut)\rfinal(\cla {uw}_\laut)$.
We obtain $\widehat{f}_{\cla w_\raut}(u)=\lfinal(\cla{uw}_\raut)\bout(\cla \epsilon_\laut,u,\cla w_\raut)\alpha(\cla u_\laut,\cla w_\raut) $ where:
\[\alpha(\cla u_\laut,\cla w_\raut)=\bigwedge \set{\bout(\cla u_\laut,z,\cla \epsilon_\raut)\rfinal(\cla {uz}_\laut)|\ z\sim_\raut w}\]
Finally we have:
\begin{align*}
{\widehat{f}_{\cla w_{\raut}}(u)}^{-1}f(uw)&={\alpha(\cla u_\laut,\cla w_\raut)}^{-1}\bout(\cla u_\laut,w,\cla \epsilon_\raut)\rfinal(\cla {uw}_\laut)\\
&={\alpha(\cla v_\laut,\cla w_\raut)}^{-1}\bout(\cla v_\laut,w,\cla \epsilon_\raut)\rfinal(\cla {vw}_\laut)\\
&={\widehat{f}_{\cla w_{\raut}}(v)}^{-1}f(vw)
\qedhere
\end{align*}
\end{proof}
The following result is an immediate consequence of the latter proposition:
\begin{cor}\label{cor:l-r-min}
    Let $\bim = (\laut,\raut,\omega,\lambda,\rho)$ be a bimachine. Then
    \[
    \begin{array}{llllll}
    \leftcan(\bim) & \in & \text{Min}\{ \bim' =
    (\laut',\raut,\omega',\lambda',\rho')\mid \sem{\bim'} =
      \sem{\bim}\} \\
    \rightcan(\bim) & \in & \text{Min}\{ \bim' =
    (\laut,\raut',\omega',\lambda',\rho')\mid \sem{\bim'} =
      \sem{\bim}\}
    \end{array}
    \]
\end{cor}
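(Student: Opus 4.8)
The plan is to show that Corollary~\ref{cor:l-r-min} follows almost immediately from Proposition~\ref{prop:l-r-min} by unwinding the definitions of $\leftcan$, $\rightcan$, and $\text{Min}$. I focus on the first inclusion, the second being symmetric. Recall that $\leftcan(\bim)=\tuple{\leftcan(\raut),\raut,\omega,\lambda,\rho}$ shares the same right automaton $\raut$ as $\bim$, and by Lemma~\ref{lem:correctness-bim} it realizes the same transduction $f=\sem{\bim}$. Hence $\leftcan(\bim)$ is itself a member of the set $S=\set{\bim'=(\laut',\raut,\omega',\lambda',\rho')\mid \sem{\bim'}=\sem{\bim}}$, so the content of the claim is really that $\leftcan(\bim)$ is \emph{minimal} within $S$.

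To prove minimality, I would take an arbitrary $\bim'=(\laut',\raut,\omega',\lambda',\rho')\in S$ and show that either $\bim'\finer\leftcan(\bim)$ or $\bim'$ is incomparable with $\leftcan(\bim)$; equivalently, that no element of $S$ is \emph{strictly coarser} than $\leftcan(\bim)$. The key observation is that every bimachine $\bim'\in S$ has the \emph{same} right automaton $\raut$, so comparisons $\bim'\finer\leftcan(\bim)$ reduce to comparisons of the left automata only: $\bim'\finer\leftcan(\bim)$ iff $\laut'\finer\leftcan(\raut)$ (the right components being identical and thus trivially comparable). Now Proposition~\ref{prop:l-r-min} applied to $\bim'$ — whose right automaton is precisely $\raut$ — yields $\laut'\finer\leftcan(\raut)$. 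This says $\leftcan(\raut)$ is coarser than the left automaton of \emph{every} member of $S$, so no member can be strictly coarser than $\leftcan(\bim)$; in other words $\leftcan(\bim)$ is a minimal element of $S$, which is exactly $\leftcan(\bim)\in\text{Min}(S)$.

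The only mild subtlety, and the step I would state carefully, is the reduction of the fineness order $\finer$ on bimachines to the fineness order on left automata when the right automaton is held fixed. Since $\bim_1\finer\bim_2$ is defined by $\laut_1\finer\laut_2$ \emph{and} $\raut_1\finer\raut_2$, and here every $\raut$ is literally the same automaton (so $\raut\finer\raut$ holds by reflexivity of $\finer$), the bimachine order is governed entirely by the left automata. Thus $\laut'\finer\leftcan(\raut)$ gives $\bim'\finer\leftcan(\bim)$ outright; there is no room for $\leftcan(\bim)$ to be strictly refined by any $\bim'\in S$. I do not anticipate a genuine obstacle here: once Proposition~\ref{prop:l-r-min} is in hand, the corollary is a direct translation of ``$\leftcan(\raut)$ is coarser than all left automata realizing $f$ over $\raut$'' into the language of $\text{Min}$. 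The symmetric argument, using $\raut\finer\rightcan(\laut)$ from the same proposition, establishes the second inclusion for $\rightcan(\bim)$.
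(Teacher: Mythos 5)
Your proof is correct and follows essentially the same route as the paper: apply Proposition~\ref{prop:l-r-min} to an arbitrary $\bim'$ in the set (whose right automaton is $\raut$ and which realizes $\sem{\bim}$) to conclude $\laut'\finer\leftcan(\raut)$, hence $\bim'\finer\leftcan(\bim)$, and symmetrically for $\rightcan(\bim)$. You merely make explicit two points the paper leaves implicit — membership of $\leftcan(\bim)$ in the set via Lemma~\ref{lem:correctness-bim}, and the reduction of the bimachine order $\finer$ to the order on left automata when the right automaton is held fixed — which is fine.
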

\begin{proof}
    By definition, $\leftcan(\bim)$ has $\leftcan(\raut)$ and
    $\raut$ as left and right automata respectively. Let $\bim'$ be a
    bimachine with $\laut'$ and $\raut$ as left and right automata
    such that $\sem{\bim'} = \sem{\bim}$. By
    Proposition~\ref{prop:l-r-min}, one gets $\laut'\finer
    \leftcan(\raut)$. The result for $\rightcan(\bim)$ is shown symmetrically.
\end{proof}

\subsubsection{Bimachine minimization}

By applying left and right minimization successively, we show that we
obtain a minimal bimachine. According to
Lemma~\ref{lem:correctness-bim}, these two steps are effective, and we
call this procedure \emph{bimachine minimization}.

\begin{prop}\label{prop:minimalbim}
    Let $\bim$ be a bimachine realizing a transduction $f$. Then $\leftcan(\rightcan(\bim))$
and $\rightcan(\leftcan(\bim))$ are minimal bimachines realizing
$f$. Moreover, $\bim\finer \leftcan(\rightcan(\bim))$ and $\bim\finer
\rightcan(\leftcan(\bim))$.
\end{prop}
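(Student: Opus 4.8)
The plan is to prove the two symmetric claims; I focus on $\rightcan(\leftcan(\bim))$ and its minimality, the other being obtained by swapping the roles of left and right throughout. First I would record the easy part: the inequalities $\bim \finer \leftcan(\rightcan(\bim))$ and $\bim \finer \rightcan(\leftcan(\bim))$. Applying Proposition~\ref{prop:l-r-min} once to $\bim$ gives $\raut \finer \rightcan(\laut)$, so $\rightcan(\bim)$ is coarser than $\bim$ on the right automaton while keeping $\laut$ on the left; applying Proposition~\ref{prop:l-r-min} again to $\rightcan(\bim)$ (whose left automaton is still $\laut$ and whose right automaton is $\rightcan(\laut)$) yields $\laut \finer \leftcan(\rightcan(\laut))$. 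Combining these two facts gives $\bim \finer \leftcan(\rightcan(\bim))$, and the other inequality is symmetric. This step is routine unfolding of the definition of $\finer$ for bimachines.

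The substance is the minimality of $\rightcan(\leftcan(\bim))$. Write $\bim_1 = \leftcan(\bim)$, with automata $\leftcan(\raut)$ and $\raut$, and $\bim_2 = \rightcan(\bim_1)$, with automata $\leftcan(\raut)$ and $\rightcan(\leftcan(\raut))$. I would argue by contradiction: suppose some bimachine $\bim'$ realizing $f$ satisfies $\bim_2 \finer \bim'$, i.e. $\bim'$ is coarser on both automata; I must show $\bim' \bimeq \bim_2$, equivalently that $\bim'$ cannot be strictly coarser. The key observation is that right minimization does not disturb the left automaton: $\bim_2$ and $\bim_1$ share the left automaton $\leftcan(\raut)$. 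By Corollary~\ref{cor:l-r-min}, $\bim_1 = \leftcan(\bim)$ is left-minimal among all bimachines with right automaton $\raut$ realizing $f$, and $\leftcan(\raut)$ is the \emph{coarsest} possible left automaton once the right automaton is fixed to be at least as fine as $\raut$. The plan is to show that $\leftcan$ is monotone in the right automaton — that a coarser right automaton yields a left automaton no finer than $\leftcan(\raut)$ — so that $\leftcan(\raut)$ is already maximally coarse on the left, and symmetrically $\rightcan(\leftcan(\raut))$ is maximally coarse on the right given the fixed left automaton.

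Concretely, I would suppose $\leftcan(\raut) \finer \laut'$ and $\rightcan(\leftcan(\raut)) \finer \raut'$ for $\bim' = (\laut',\raut',\omega',\lambda',\rho')$ realizing $f$, and aim to conclude equality of congruences in both coordinates. For the right coordinate this is immediate from Corollary~\ref{cor:l-r-min} applied to $\bim_2$: since $\bim_2$ has left automaton $\leftcan(\raut)$ and is right-minimal among bimachines sharing that left automaton, any realizing $\bim'$ with left automaton coarser than $\leftcan(\raut)$ and right automaton coarser than $\rightcan(\leftcan(\raut))$ forces $\rightcan(\leftcan(\raut)) \bimeq \raut'$. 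The remaining, and genuinely delicate, step is the left coordinate: I must show that coarsening the right automaton from $\raut$ to (something as coarse as) $\raut'$ cannot force the left automaton strictly coarser than $\leftcan(\raut)$ while still realizing $f$. \textbf{The main obstacle} is precisely this interaction: left minimization was computed relative to the \emph{fixed} right automaton $\raut$, and a genuinely coarser right look-ahead could a priori permit a strictly coarser left automaton, so I cannot simply invoke Corollary~\ref{cor:l-r-min}. The plan is to establish a monotonicity lemma stating that if $\raut_1 \finer \raut_2$ then $\leftcan_f(\raut_2) \finer \leftcan_f(\raut_1)$ — coarsening the look-ahead makes the minimal left automaton \emph{finer}, not coarser — by comparing the look-ahead-indexed transductions $\widehat f_{\cla w_{\raut_1}}$ and $\widehat f_{\cla w_{\raut_2}}$ and showing the congruence $\sim_L$ relative to the finer right automaton $\raut_1$ distinguishes no more words than the one relative to $\raut_2$. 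This monotonicity, together with the two applications of Corollary~\ref{cor:l-r-min}, closes the argument.
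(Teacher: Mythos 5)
Your overall architecture is sound, and part of it genuinely diverges from the paper's proof. The ``moreover'' inequalities via two applications of Proposition~\ref{prop:l-r-min} are exactly right. For the minimality of $\rightcan(\leftcan(\bim))$, your treatment of the left coordinate through a monotonicity lemma ($\raut_1 \finer \raut_2 \implies \leftcan(\raut_2) \finer \leftcan(\raut_1)$) is a valid alternative route: this lemma is true (it is Proposition~\ref{prop:min-inverse} of the paper, which is stated and proved only \emph{after} this proposition and is not used in its proof), and your plan of proving it by comparing $\widehat f_{\cla w_{\raut_1}}$ with $\widehat f_{\cla w_{\raut_2}}$ does work, since $\widehat f_{\cla w_{\raut_1}}(u)$ factors as $\widehat f_{\cla w_{\raut_2}}(u)$ followed by a correction word that is the same for $u$ and for any $v$ equivalent to $u$ in the coarser-look-ahead congruence. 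The paper instead handles both coordinates directly by a ``restriction'' construction (coarsen or refine one automaton of the competing bimachine while keeping it equivalent), and later proves the monotonicity statement by that same construction.

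The genuine gap is your right coordinate. You claim it is ``immediate from Corollary~\ref{cor:l-r-min} applied to $\bim_2$'', but that corollary only ranges over bimachines realizing $f$ whose left automaton is \emph{exactly} $\leftcan(\raut)$; your $\bim'$ has left automaton $\laut'$ that is merely assumed coarser than $\leftcan(\raut)$, so the corollary says nothing about $\bim'$. This is precisely the same ``interaction'' obstacle you correctly flag for the left coordinate, and it does not disappear by symmetry. Two repairs are available. First, the paper's restriction trick: since $\leftcan(\raut)\finer \laut'$, replace the left automaton of $\bim'$ by $\leftcan(\raut)$, letting the output function read $\omega'$ through the natural projection from classes of $\leftcan(\raut)$ to classes of $\laut'$; this yields a bimachine realizing $f$ with automata $\leftcan(\raut)$ and $\raut'$, and Proposition~\ref{prop:l-r-min} then gives $\raut' \finer \rightcan(\leftcan(\raut))$, hence equality of the right congruences. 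Second, reorder your own argument: settle the left coordinate first with your monotonicity lemma, concluding that $\laut'$ and $\leftcan(\raut)$ have the same congruence; since right minimization depends only on that congruence and on $f$, we get $\rightcan(\laut')=\rightcan(\leftcan(\raut))$, and Proposition~\ref{prop:l-r-min} applied to $\bim'$ finishes the right coordinate. With either repair, your proof is complete.
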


\begin{proof}
    We only show the result for $\rightcan(\leftcan(\bim))$, the other
    result being proved symmetrically. Let $\bim = (\laut, \raut, \omega, \lambda, \rho)$.
    Suppose that $\rightcan(\leftcan(\bim))$ is not minimal and let
    $\bim' = (\laut', \raut', \omega', \lambda',\rho')$ be a bimachine realizing $f$ such that
    $\rightcan(\leftcan(\bim))\finer \bim'$. Then,
    $\rightcan(\leftcan(\bim))$ has $\leftcan(\raut)\finer \laut'$ as left
    automaton, and $\rightcan(\leftcan(\raut))\finer \raut'$ as right
    automaton.

    By Proposition~\ref{prop:l-r-min}, we have
    $\laut \finer \leftcan(\raut)$ and $\raut\finer
    \rightcan(\leftcan(\raut))$, hence
    \[
    \laut \finer \leftcan(\raut)\finer \laut'\qquad \raut\finer
    \rightcan(\leftcan(\raut))\finer \raut'
    \]
    Since $\raut\finer \raut'$, we can restrict the right automaton of
    $\bim'$ to $\raut$
    and obtain a bimachine realizing $f$ with
    $\laut'$ as left automaton, and $\raut$ as right automaton. This
    bimachine just ignores the extra information given by $\raut$, by
    setting its output function to $(\cla u_{\laut'}, \sigma, \cla
    v_{\raut}) \mapsto \omega'(\cla u_{\laut'}, \sigma, \cla
    v_{\raut'})$. If $\laut'$ is strictly coarser than
    $\leftcan(\raut)$, this contradicts the minimality of
    $\leftcan(\raut)$ among bimachines realizing $f$ with $\raut$ as
    right automaton (Corollary~\ref{cor:l-r-min}).

    Now, suppose that $\raut'$ is strictly coarser than
    $\rightcan(\leftcan(\raut))$. We also show a contradiction, by a
    similar argument. Since $\leftcan(\raut)\finer \laut'$, we can
    restrict the left automaton of $\bim'$ to $\leftcan(\raut)$ (the
    output function ignores the finer information given by
    $\leftcan(\raut)$) and obtain a bimachine realizing $f$ with
    $\leftcan(\raut)$ as left automaton, and $\raut'$ as right
    automaton. This yields a contradiction since $\raut'$ is strictly
    finer than $\rightcan(\leftcan(\raut))$ and
    $\rightcan(\leftcan(\raut))$ is the minimal automaton among
    bimachines realizing $f$ with $\leftcan(\raut)$ as right
    automaton.
\end{proof}

\subsubsection{Minimizing bimachines in \ptime}
According to
Lemma~\ref{lem:correctness-bim} (shown in~\cite{ReutenauerS91}), we know that
$\leftcan(\rightcan(\bim))$ and $\rightcan(\leftcan(\bim))$ are
computable, for any bimachine $\bim$. We show here that it can be done
in \ptime.
\begin{thm}%
\label{thm:ptime-min}
Let $\bim$ be a bimachine. One can compute $\leftcan(\bim)$ and $\rightcan(\bim)$ in \ptime.
\end{thm}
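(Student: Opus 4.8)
The plan is to show that left minimization $\leftcan_f(\bim)$ can be computed in \ptime, since $\rightcan_f(\bim)$ is symmetric. The key observation is that the congruence $\sim_L$ defining the states of $\leftcan(\raut)$ is a \emph{refinement} of the right transition congruence $\sim_\laut$ of the original left automaton $\laut$ (this is exactly Proposition~\ref{prop:l-r-min}, which gives $\laut\finer\leftcan(\raut)$, \ie ${\sim_\laut}\finer{\sim_L}$). This means that $\leftcan(\raut)$ is a quotient of $\laut$: its states are obtained by merging states of $\laut$ that turn out to be $\sim_L$-equivalent. So rather than building the infinite quotient $\Sigma^*/_{\sim_L}$ directly, I would work with the finite automaton $\laut$ and compute a partition refinement / state-merging on $Q_\laut$, in the spirit of Moore's \DFA minimization algorithm as the statement suggests.

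First I would reformulate the equivalence $\sim_L$ as a relation on states of $\laut$. Two states $l,l'\in Q_\laut$ should be merged precisely when, for every state $r\in Q_\raut$ reachable as a compatible look-ahead, the ``residual outputs'' produced from $(l,r)$ and $(l',r)$ agree, and the domain/acceptance behaviour agrees. Concretely, I would define the partial output produced at a pair $(l,r)$ relative to the common prefix $\widehat f_{\cla w_\raut}$, and characterize $l\sim_L l'$ by: they accept the same look-ahead classes, and for each such class $r$ the delay between the outputs from $(l,r)$ and $(l',r)$ is a fixed shift that can be read off locally. The central point, borrowed from the ``as soon as possible'' idea of \cite{Choffrut03} and the residual computations already carried out in Lemma~\ref{lem:well-defined} and Proposition~\ref{prop:l-r-min}, is that all the relevant outputs $\bout(l,\sigma,r)$ and the longest-common-prefix quantities can be extracted from the finite data of $\bim$ without unfolding arbitrarily long words.

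The algorithm itself is then a fixpoint partition refinement. I would initialize the partition of $Q_\laut$ by separating final from non-final states (and more generally by the look-ahead-indexed acceptance/final-output data), then iteratively split two states whenever reading a single letter $\sigma$ sends them, for some look-ahead class $r$, to blocks that are already distinguished or produces an output mismatch that cannot be absorbed by a consistent delay. Since each refinement step strictly increases the number of blocks, there are at most $|Q_\laut|$ iterations, and each step inspects transitions indexed by $\Sigma$ and look-ahead states in $Q_\raut$; this gives an overall polynomial bound. Once the partition stabilizes, I would take the quotient automaton as $\leftcan(\raut)$ and recompute the output functions $\bout,\lfinal,\rfinal$ by the formulas in the construction, evaluating the $\bigwedge$ operations over the finitely many representative words/paths in the quotient, which is again polynomial because longest common prefixes stabilize after linearly many transitions.

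The main obstacle I expect is handling the \emph{outputs} rather than just the state-merging: the delays $\widehat f_{\cla w_\raut}(u)^{-1}f(uw)$ are words, not finite symbols, so I must argue that the equality test $\widehat f_{\cla w_{\raut}}(u)^{-1}f(uw)=\widehat f_{\cla w_{\raut}}(v)^{-1}f(vw)$ reduces to a finite, \ptime-checkable condition on $\bim$, and that the longest common prefixes $\alpha(\cla u_\laut,\cla w_\raut)$ appearing in Proposition~\ref{prop:l-r-min} are computable in polynomial time. The standard resolution is to bound the length of output delays that can arise in a minimal bimachine (they cannot exceed a polynomial in the size of $\bim$, otherwise the transduction would not be realizable by $\bim$), so that all words manipulated during refinement have polynomially bounded length and every longest-common-prefix and residual computation is \ptime. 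Making this length bound precise, and verifying that the partition-refinement invariant exactly captures $\sim_L$ restricted to $Q_\laut$, is the technical heart of the proof.
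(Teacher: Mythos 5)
You are on the right track globally --- the paper's algorithm is indeed a Moore-style partition refinement on $Q_\laut$, and it also relies on Choffrut's polynomial-time longest-common-prefix computations --- but your proposal leaves open exactly the point where the proof lives, and the one argument you do sketch for it is not correct. The missing organizing idea in the paper is a \emph{two-step} structure: \emph{first} rewrite $\bim$ into an equivalent bimachine $\bim'=\tuple{\laut,\raut,\bout',\lfinal',\rfinal'}$ whose outputs are earliest relative to $\raut$, by computing the values $\alpha(\cla u_\laut,\cla w_\raut)$ (the longest common prefix of all residual outputs from the pair $(\cla u_\laut,\cla w_\raut)$) on the product graph $\laut\times\raut$ augmented with source/target vertices, exactly as in Choffrut's Section~5; \emph{then} run plain Moore refinement where the initial partition $\sim_0$ demands exact equality of finality, of $\rfinal'$, and of $\bout'(\cdot,\sigma,\cla w_\raut)$ for all $\sigma$ and all right states. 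Because $\sim_L$ is defined through $\widehat f$, which already factors out every common prefix, after Step~1 there are no ``delays to absorb'': state equivalence is exact equality of the normalized output data, and the fixpoint $\sim_*$ can be proved equal to $\sim_L$ by a double induction (${\sim_L}\finer{\sim_i}$ for all $i$, and conversely $\sim_i$ implies the $\sim_L$-condition for suffixes of length $i$). Your proposal instead keeps unnormalized outputs and tries to decide mergeability ``up to a consistent delay'' during refinement; this test is never defined, and defining it correctly essentially forces you to compute the $\alpha$'s anyway --- i.e., to do Step~1.

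Second, your length-bound justification is wrong as stated: the claim that long delays ``cannot exceed a polynomial in the size of $\bim$, otherwise the transduction would not be realizable by $\bim$'' is not an argument (realizability by $\bim$ is a hypothesis, not something the delays could violate). The correct and elementary reason the lcp values are polynomially bounded is that the longest common prefix of the labels of \emph{all} paths from a vertex to the target is in particular a prefix of the label of \emph{one} shortest such path, whose length is at most (number of vertices of $\laut\times\raut$) times the maximal length of an edge label; this is what makes Choffrut's graph computation, and hence Step~1, run in \ptime. Finally, you explicitly defer ``verifying that the partition-refinement invariant exactly captures $\sim_L$'' --- but that verification (the paper's two inductions) is the actual content of the theorem, since without it the quotient you build is not known to be $\leftcan(\raut)$.
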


\begin{proof}
Let $\bim=\tuple{\laut,\raut,\bout,\lfinal,\rfinal}$ be a bimachine realizing a transduction $f$, with automata $\laut=\tuple{Q_\laut,\Delta_\laut,\set{l_0},F_\laut}$ and $\raut=\tuple{Q_\raut,\Delta_\raut,\set{r_0},F_\raut}$.
The algorithm to obtain $\leftcan(\bim)$ works in two steps: (1) make the outputs earliest, (2) minimize the state space of the left automaton.

\paragraph*{Step 1:} We construct $\bim'=\tuple{\laut,\raut,\bout',\lfinal',\rfinal'}$, a bimachine with the same automata but with the earliest outputs, given the automaton $\raut$.
\[\bout'(\cla u_\laut,\sigma,\cla v_\raut)={\widehat{f}_{\cla {\sigma v}_{\raut}}(u)}^{-1}\widehat{f}_{\cla v_{\raut}}(u\sigma)\quad \lfinal'(\cla v_\raut)=\widehat{f}_{\cla v_\raut}(\epsilon)\quad \rfinal'(\cla u _\laut)={\widehat{f}_{\cla \epsilon_\raut}(u)}^ {-1}f(u)\]
These values can be computed in polynomial time, we use the same idea described in~\cite[Section~5]{Choffrut03}, which can be seen as the special case when the right automaton is trivial, and we give a sketch of the procedure.
We first remark that for any words $u,v$ $\widehat{f} _{\cla v_\raut}(u)=\lfinal(\cla{uv}_\raut) \bout(\cla \epsilon _\laut,u,\cla v_\raut) \alpha(\cla u_\laut,\cla v_\raut)$ with:

\[\alpha(\cla u_\laut,\cla v_\raut)=\bigwedge\set{w|\ \exists x\in \cla v_\raut,\ \bout(\cla u_\laut,x,\cla \epsilon_\raut)\rfinal(\cla{ux}_\raut)=w}\]
As in~\cite{Choffrut03}, in order to compute the values $\alpha(\cla u_\laut,\cla v_\raut)$, we consider the directed graph of the automaton $\laut\times\raut$, with the edges $((\cla u_\laut,\cla{\sigma v}_\raut),(\cla {u\sigma }_\laut,\cla{ v}_\raut))$ labelled by the outputs of $\bout(\cla u_\laut,\sigma, \cla v_\raut)$.
In order to account for the outputs of final functions, we add two vertices, a source $s$ pointing to the initial states, \ie, states $(\cla \epsilon _\laut,\cla v_\raut)$ for $v\in \dom(f)$, with an edge labelled by $\lfinal(\cla v_\raut)$, and a target $t$ pointing to states $(\cla u _\laut,\cla \epsilon_\raut)$ with an edge labelled by $\rfinal(\cla u_\laut)$.
The value $\alpha(\cla u_\laut,\cla v_\raut)$ is obtained as the longest common prefix of the labels of all the paths starting at $(\cla u_\laut,\cla v_\raut)$ and ending in $t$.
According to~\cite{Choffrut03} these values can be computed in polynomial time.

\paragraph*{Step 2:} Now we describe a minimization algorithm in the vein of Moore's minimization algorithm of DFAs.
The idea is to compute an equivalence relation over the state space of $\laut$, by successive refinements.
The only difference with Moore's algorithm is that the initial equivalence relation must be compatible with the output functions.
More precisely, if we identify states of $\laut$ with classes of $\sim_\laut$:
\[\begin{array}{lcl}
\cla u_\laut\sim_0 \cla v_\laut & \text{if} & \cla u_\laut\in F_\laut \Leftrightarrow \cla v_\laut\in F_\laut,\ \rfinal'(\cla u_\laut)=\rfinal'( \cla v_\laut ), \\
& & \text{and } \forall w,\sigma,\ \bout'(\cla u_\laut,\sigma,\cla w_\raut)=\bout'(\cla v_\laut,\sigma,\cla w_\raut)\\
\cla u_\laut\sim_{i+1} \cla v_\laut & \text{if} & \cla u_\laut\sim_{i} \cla v_\laut \text{ and } \forall\sigma,\ \cla {u\sigma}_\laut\sim_{i} \cla {v\sigma}_\laut
\end{array}\]

Since ${\sim_{i+1}}\finer {\sim_i}$ for any $i$, this sequence of relations converges in at most $|\laut|$ steps to a relation we denote by $\sim_*$.
Moreover, $\sim_0$ can be computed in \ptime and each $\sim_{i+1}$ can also be computed in \ptime from $\sim_{i}$.
The relation $\sim_*$ can be naturally extended to words by $u\sim_* v$ if $\cla u_\laut\sim_* \cla v_\laut$.

We want to show that the algorithm is correct meaning that ${\sim_*}={\sim_L}$, using the same notation as in~\ref{subsubsec:lr-minimization}.
We show by induction on $i\geq 0$ that ${\sim_L}\finer{\sim_i}$, and thus ${\sim_L}\finer{\sim_*}$.
Let $u\sim_L v$, we have that $u\in \dom(f) \Leftrightarrow v\in \dom(f)$, ${\widehat{f}_{\cla \epsilon_\raut}(u)}^ {-1}f(u)={\widehat{f}_{\cla \epsilon_\raut}(v)}^ {-1}f(v)$, and for all $w,\sigma$ we have  ${\widehat{f}_{\cla {\sigma w}_{\raut}}(u)}^{-1}\widehat{f}_{\cla w_{\raut}}(u\sigma)={\widehat{f}_{\cla {\sigma w}_{\raut}}(v)}^{-1}\widehat{f}_{\cla w_{\raut}}(v\sigma)$, hence $u\sim_0v$.
Now let us assume that ${\sim_L}\finer{\sim_i}$ for some integer $i$.
Let $u\sim_L v$, we have $u\sim_i v$, and for all $\sigma$, $u\sigma\sim_L v\sigma$ hence $u\sigma\sim_i v\sigma$ and $u\sim_{i+1}v$.

Conversely, let us show by induction on $i$ that if $u\sim_i v$ then for any $w\in \Sigma^i$, $uw\in\dom(f)\Leftrightarrow vw\in\dom(f)$ and ${\widehat{f}_{\cla w_{\raut}}(u)}^{-1}f(uw)={\widehat{f}_{\cla w_{\raut}}(v)}^{-1}f(vw)$ when $uw\in\dom(f)$.
If $u\sim_0v$ then $u\in \dom(f) \Leftrightarrow v\in \dom(f)$ and when $u\in \dom(f)$ then ${\widehat{f}_{\cla \epsilon_\raut}(u)}^{-1}f(u)={\widehat{f}_{\cla \epsilon_\raut}(v)}^{-1}f(v)$ by definition of $\sim_0$.
Let us assume that the proposition holds at some rank $i$, and let $u\sim_{i+1}v$.
Let $w=\sigma w'\in \Sigma^ {i+1}$. We have by assumption that $u\sigma\sim_i v\sigma$ which means that $u\sigma w'\in\dom(f)\Leftrightarrow v\sigma w'\in\dom(f)$.
Furthermore, if $u\sigma w'\in \dom(f)$ then ${\widehat{f}_{\cla {w'}_{\raut}}(u\sigma)}^{-1}f(u\sigma w')={\widehat{f}_{\cla {w'}_{\raut}}(v\sigma)}^{-1}f(v\sigma w')$ since $u\sigma\sim_i v\sigma$.
Since in particular, $u\sim_0v$ then ${\widehat{f}_{\cla {\sigma w'}_{\raut}}(u)}^{-1}\widehat{f}_{\cla {w'}_{\raut}}(u\sigma)={\widehat{f}_{\cla {\sigma w'}_{\raut}}(v)}^{-1}\widehat{f}_{\cla {w'}_{\raut}}(v\sigma)$.
Finally we obtain:
\[\begin{array}{rcl}
{\widehat{f}_{\cla {\sigma w'}_{\raut}}(u)}^{-1}f(u\sigma w')&=&\tuple{{\widehat{f}_{\cla {\sigma w'}_{\raut}}(u)}^{-1}\widehat{f}_{\cla { w'}_{\raut}}(u\sigma)}{\widehat{f}_{\cla { w'}_{\raut}}(u\sigma)}^{-1}f(u\sigma w') \\
&=&\tuple{{\widehat{f}_{\cla {\sigma w'}_{\raut}}(v)}^{-1}\widehat{f}_{\cla { w'}_{\raut}}(v\sigma)}{\widehat{f}_{\cla { w'}_{\raut}}(v\sigma)}^{-1}f(v\sigma w') \\
&=&{\widehat{f}_{\cla {\sigma w'}_{\raut}}(v)}^{-1}f(v\sigma w')
\end{array}\]
Hence $u\sim_{i+1}v$ which concludes the induction.
Since ${\sim_*}\finer{\sim_i}$ for any $i$, we have that if $u\sim_*v$ then for any $w\in \Sigma^ *$,  $uw\in\dom(f)\Leftrightarrow vw\in\dom(f)$ and ${\widehat{f}_{\cla w_{\raut}}(u)}^{-1}f(uw)={\widehat{f}_{\cla w_{\raut}}(v)}^{-1}f(vw)$ when $uw\in\dom(f)$. In particular we have shown that ${\sim_*}\finer{\sim_L}$ and finally ${\sim_*}={\sim_L}$.
\end{proof}

As a result of applying twice this procedure successively, one obtains
that bimachine minimization is in \ptime.

\begin{cor}
    Let $\bim$ be a bimachine. The minimal bimachines $\leftcan(\rightcan(\bim))$ and
    $\rightcan(\leftcan(\bim))$ can be computed in \ptime.
\end{cor}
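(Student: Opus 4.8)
The plan is to obtain the result by simply composing the two polynomial-time procedures provided by Theorem~\ref{thm:ptime-min}, taking care only of the size of the intermediate object. By that theorem, given the input bimachine $\bim$, I can compute $\rightcan(\bim)$ in time polynomial in $|\bim|$, and then apply Theorem~\ref{thm:ptime-min} a second time to the bimachine $\rightcan(\bim)$ in order to compute $\leftcan(\rightcan(\bim))$. That $\leftcan(\rightcan(\bim))$ and $\rightcan(\leftcan(\bim))$ are genuinely minimal bimachines realizing the same transduction is already guaranteed by Proposition~\ref{prop:minimalbim}, so there is nothing left to argue about correctness; the only content of the corollary is the complexity bound.

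The one point that must be checked is that this composition really stays within \ptime, which amounts to verifying that the intermediate bimachine $\rightcan(\bim)$ has size polynomial in $|\bim|$. This is immediate from the fact that $\rightcan(\bim)$ is produced by the polynomial-time algorithm of Theorem~\ref{thm:ptime-min}: a polynomial-time procedure cannot write more than polynomially many symbols, so $|\rightcan(\bim)|$ is polynomial in $|\bim|$. In particular the output words attached to the transitions and to the final functions of $\rightcan(\bim)$ do not blow up, since in Step~1 of the proof of Theorem~\ref{thm:ptime-min} they are computed as longest common prefixes by a polynomial-time graph computation, and the state spaces only shrink under minimization as they are quotients by coarser congruences.

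Consequently, the second application of Theorem~\ref{thm:ptime-min} runs in time polynomial in $|\rightcan(\bim)|$, hence polynomial in $|\bim|$; since the composition of two polynomials is again a polynomial, the whole computation of $\leftcan(\rightcan(\bim))$ lies in \ptime. The computation of $\rightcan(\leftcan(\bim))$ is handled by the symmetric argument, exchanging the roles of the left and right automata. I do not expect any genuine obstacle here: the corollary is essentially the observation that the single-step minimizations of Theorem~\ref{thm:ptime-min} compose, and the only formal care needed is the polynomial-output remark that keeps the second step from being measured against a potentially larger intermediate bimachine.
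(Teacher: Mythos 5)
Your proposal is correct and follows the same route as the paper: the corollary is obtained exactly by applying the single-step minimization of Theorem~\ref{thm:ptime-min} twice, with minimality of $\leftcan(\rightcan(\bim))$ and $\rightcan(\leftcan(\bim))$ already supplied by Proposition~\ref{prop:minimalbim}. Your explicit remark that a \ptime procedure produces an output of polynomial size (so the second application is measured against a polynomially bounded intermediate bimachine) is the standard justification the paper leaves implicit, and it is exactly the right point to check.
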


\subsection{Canonical bimachine}\label{subsec:canbim}
We have seen how to minimize the left automaton of a bimachine with
respect to a fixed right automaton. This minimization is canonical in
the following sense: as long as two bimachines $\bim_1,\bim_2$ realize the same
transduction and have the same right automaton, the left minimization
produces the same bimachine $\leftcan_f(\bim_1) = \leftcan_f(\bim_2)$. We describe here a
canonical way of defining a right automaton $\raut_f$, which
yields, by left minimisation, a canonical bimachine denoted by $\bim_f$. This
canonical bimachine has been initially defined in~\cite{ReutenauerS91}. We recall its construction and exhibit some of
its useful properties.

\subsubsection{Canonical right automaton}
The contribution of~\cite{ReutenauerS91} relies on the existence of a canonical left congruence which yields a canonical right automaton.
This canonical right automaton can be thought of as the minimal amount of look-ahead information needed to realize the transduction.

The \emph{left congruence} of a transduction $f$ is defined by
$u\leftcong_f v$ if for any word $w$, $wu\in \dom(f) \Leftrightarrow
wv\in \dom(f)$ and $\sup\set{\dist{f(wu),f(wv)}|\ wu\in\dom(f)}<\infty$.
Intuitively, this congruence says that the two suffixes $u$
and $v$ have the same effect with respect to membership to the domain,
and that $f(wu)$ and $f(wv)$ are equal up to a suffix the length of
which is bounded by some constant depending only on $u$ and $v$.
For a rational transduction, this congruence has always finite index.
The converse does not hold, however a transduction is rational if and only its left congruence has finite index and it preserves rational languages by inverse image as it was shown in~\cite{ReutenauerS91}.
For the rest of this section, $\cla w _{\leftcong_f}$ will be denoted by $\cla w$.
The \emph{canonical right automaton} of $f$ is defined naturally from $f$: $\raut_f=\tuple{\Sigma^*/_{\leftcong_f},\Delta,\set{\cla \epsilon},F}$ with $\Delta=\set{\tuple{\cla{\sigma w},\sigma,\cla w}|\ w\in \Sigma^*, \sigma\in \Sigma}$ and $F=\set{\cla w|\ w\in \dom(f)}$.

We define symmetrically $\rightcong_f$ the \emph{right congruence} of a transduction $f$, and the associated left automaton $\laut_f$.

The automaton $\raut_f$ (resp.\ $\laut_f$) is minimal in the sense that any bimachine realizing the same function must have a finer right (resp.\ left) automaton.
\begin{prop}%
\label{prop:l-r-can}
Let $\bim=\tuple{\laut,\raut,\bout,\lfinal,\rfinal}$ be a bimachine realizing $f$.
Then $\laut\finer\laut_f$ and $\raut\finer\raut_f$.
\end{prop}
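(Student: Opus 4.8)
The plan is to unfold the definitions of the canonical automata and reduce everything to an inclusion of congruences. Since $\raut_f$ is built from $\leftcong_f$, reading a word $u$ backward from $\cla\epsilon$ in $\raut_f$ (following the transitions $\tuple{\cla{\sigma w},\sigma,\cla w}$) lands exactly in the state $\cla u_{\leftcong_f}$; hence $\sim_{\raut_f}$ coincides with $\leftcong_f$, and symmetrically $\sim_{\laut_f}$ coincides with $\rightcong_f$. Therefore proving $\raut\finer\raut_f$ amounts to showing $u\sim_\raut v \Rightarrow u\leftcong_f v$, and $\laut\finer\laut_f$ to $u\sim_\laut v\Rightarrow u\rightcong_f v$. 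I would treat the right-automaton inclusion in full, the left one being perfectly symmetric. So assume $u\sim_\raut v$, equivalently $\cla u_\raut=\cla v_\raut$, and verify the two defining conditions of $u\leftcong_f v$.

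For the domain condition, note that reading $wu$ backward in $\raut$ first reads $u$ backward, reaching $\cla u_\raut$, and then reads $w$ backward from there; since $\cla u_\raut=\cla v_\raut$, the words $wu$ and $wv$ reach the same state of $\raut$ for every $w$. As $\raut$ recognizes $\dom(f)$, this gives $wu\in\dom(f)\Leftrightarrow wv\in\dom(f)$. For the boundedness condition, I would use the multiplicativity of $\bout$ to factor the output. Taking $wu\in\dom(f)$ and splitting at the boundary between $w$ and $u$ (with $l_0\xrightarrow{w}_\laut\cla w_\laut$ and $\cla u_\raut\xleftarrow{u}_\raut r_0$), the definition of $\sem\bim$ yields
\[
f(wu)=\lfinal(\cla{wu}_\raut)\,\bout(l_0,w,\cla u_\raut)\,\bout(\cla w_\laut,u,r_0)\,\rfinal(\cla{wu}_\laut),
\]
and the analogous expression for $f(wv)$. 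Because $\cla u_\raut=\cla v_\raut$, we also have $\cla{wu}_\raut=\cla{wv}_\raut$, so the prefix $\lfinal(\cla{wu}_\raut)\,\bout(l_0,w,\cla u_\raut)$ is common to $f(wu)$ and $f(wv)$; the two outputs can only differ in the trailing factor $\bout(\cla w_\laut,\cdot,r_0)\,\rfinal(\cla{wu}_\laut)$.

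Consequently the prefix distance is bounded by the combined length of these trailing factors,
\[
\dist{f(wu),f(wv)}\ \le\ |\bout(\cla w_\laut,u,r_0)\rfinal(\cla{wu}_\laut)|+|\bout(\cla w_\laut,v,r_0)\rfinal(\cla{wv}_\laut)|,
\]
and the crucial observation is that, as $w$ ranges over all words, $\cla w_\laut$ takes only finitely many values in $Q_\laut$ while $\cla{wu}_\laut,\cla{wv}_\laut$ range over the finite set $F_\laut$. Hence the right-hand side is at most a constant depending only on $u$, $v$, and $\bim$, so $\sup\set{\dist{f(wu),f(wv)}\mid wu\in\dom(f)}<\infty$ and $u\leftcong_f v$, as required. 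I expect the main obstacle to be exactly this boundedness step: getting the output factorization right and recognizing that the non-common suffixes depend on $w$ only through the finitely many states $\cla w_\laut$, which is what turns an a priori unbounded supremum into a constant bound.
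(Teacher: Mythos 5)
Your proof is correct and follows essentially the same route as the paper's: reduce $\raut\finer\raut_f$ to showing $u\sim_\raut v\Rightarrow u\leftcong_f v$, use the fact that $\raut$ recognizes $\dom(f)$ for the domain condition, and factor the bimachine output at the $w$/$u$ boundary so that $f(wu)$ and $f(wv)$ share the prefix $\lfinal(\cla{wu}_\raut)\bout(\cla\epsilon_\laut,w,\cla u_\raut)$, leaving only the trailing factors to bound. The sole cosmetic difference is the last step: the paper bounds those factors explicitly by $k(|u|+|v|+2)$ with $k$ the maximal output length over single letters, while you note they depend on $w$ only through finitely many states and hence take finitely many values; both yield the required finite supremum.
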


\begin{proof}
Let $\bim=\tuple{\laut,\raut,\bout,\lfinal,\rfinal}$ be a bimachine realizing $f$.
We only show that $\raut\finer\raut_f$, the proof for the left automaton being symmetrical.
Let $u\sim_\raut v$ for two words $u,v$, we want to show that $u\leftcong_f v$.
We know that $\raut$ recognizes $\dom(f)$, hence for any word $w$, $wu\in \dom(f)\Leftrightarrow wv\in \dom(f)$.
Let $w$ be a word such that $wu\in \dom(f)$.
Since $\cla u_\raut=\cla v _\raut$ we have:
\[\begin{array}{rcl}
f(wu)&=&\lfinal(\cla{wu}_\raut)\bout(\cla \epsilon_\laut,w,\cla u_\raut)\bout(\cla w_\laut,u,\cla \epsilon _\raut)\rfinal(\cla {wu}_\laut)\\
f(wv)&=&\lfinal(\cla{wu}_\raut)\bout(\cla \epsilon_\laut,w,\cla u_\raut)\bout(\cla w_\laut,v,\cla \epsilon _\raut)\rfinal(\cla {wv}_\laut)
\end{array}\]
Hence we can bound the distance between $f(wu)$ and $f(wv)$ regardless of $w$:
\[\begin{array}{rcl}
\dist{f(wu),f(wv)} &\leq & |\bout(\cla w_\laut,u,\cla \epsilon _\raut)\rfinal(\cla {wu}_\laut)|+|\bout(\cla w_\laut,v,\cla \epsilon _\raut)\rfinal(\cla {wv}_\laut)|\\
&\leq& k(|u|+|v|+2)
\end{array}\]
where $k$ is the maximum length of a word in the ranges of $\bout$ and $\rfinal$ (considering $\bout$ over single letters).
Hence we have shown that $u\leftcong_f v$ which concludes the proof.
\end{proof}

Let us show a very similar proposition which will help establish the equivalence between $\var$-transducers and $\var$-bimachines.
\begin{prop}%
\label{prop:trans-rightcan}
Let $\trans$ be a transducer, with underlying automaton $\aut$, realizing a transduction $f$.
Then ${\approx_\aut}\finer {\leftcong_f}$ and ${\approx_\aut}\finer {\rightcong_f}$.
\end{prop}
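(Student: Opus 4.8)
The plan is to mirror the proof of Proposition~\ref{prop:l-r-can}, but to work directly with the runs of the transducer $\trans$ rather than with a bimachine. I will only establish ${\approx_\aut}\finer{\leftcong_f}$; the inclusion ${\approx_\aut}\finer{\rightcong_f}$ then follows by a symmetric argument, reading $\trans$ from left to right instead of from right to left. So assume $u\approx_\aut v$ and check the two defining conditions of $u\leftcong_f v$. Throughout, recall that $\dom(f)$ is exactly the language recognised by $\aut$, since a pair is realised by $\trans$ precisely when there is an accepting run of $\aut$.

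First I would dispatch the domain condition. For any word $w$ we have $wu\in\dom(f)$ iff there are states $p_0\in I$, $q\in F$ and $r$ with $p_0\xrightarrow{w}_\aut r\xrightarrow{u}_\aut q$. Since $u\approx_\aut v$ forces $r\xrightarrow{u}_\aut q \Leftrightarrow r\xrightarrow{v}_\aut q$, the same $w$-run can be completed by a $v$-run into $q$, so $wu\in\dom(f)\Leftrightarrow wv\in\dom(f)$.

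The crux is the uniform distance bound $\sup\set{\dist{f(wu),f(wv)}\mid wu\in\dom(f)}<\infty$. Fix $w$ with $wu\in\dom(f)$ and pick an accepting run $p_0\xrightarrow{w\mid x}_\trans r\xrightarrow{u\mid s}_\trans q$, so that $f(wu)=\init(p_0)\,x\,s\,\final(q)$ by functionality. Because $u\approx_\aut v$, the transition $r\xrightarrow{u}_\aut q$ yields $r\xrightarrow{v}_\aut q$, hence a run $r\xrightarrow{v\mid s'}_\trans q$ for some output $s'$; composing it with the same prefix run gives the accepting run $p_0\xrightarrow{w\mid x}_\trans r\xrightarrow{v\mid s'}_\trans q$, and functionality forces $f(wv)=\init(p_0)\,x\,s'\,\final(q)$. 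The two images thus share the common prefix $\init(p_0)\,x$, whose length is entirely absorbed when computing $\dist{\cdot,\cdot}$, leaving
\[
\dist{f(wu),f(wv)} \;\leq\; |s\,\final(q)| + |s'\,\final(q)| \;\leq\; K(|u|+|v|) + 2L,
\]
where $K$ bounds the output length of a single transition of $\trans$ and $L$ bounds the length of a final output. This bound depends only on $u$, $v$ and on the fixed transducer, not on $w$, so the supremum is finite and $u\leftcong_f v$ holds.

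I expect the only delicate point to be this cancellation of the $w$-contribution: it is essential to route both $u$ and $v$ from the \emph{same} state $r$ reached after reading $w$ to the \emph{same} final state $q$, which is exactly what $u\approx_\aut v$ provides (the weaker fact that $\aut$ merely recognises $\dom(f)$ would only align the domains, not the outputs). Once the common prefix $\init(p_0)\,x$ is identified, the remaining portions $s,s'$ and the final output have length controlled solely by $|u|$, $|v|$ and the transducer, so uniformity in $w$ is automatic. For $\rightcong_f$ one repeats the argument with $u,v$ as prefixes and $w$ as a common suffix, in which case $f(uw)$ and $f(vw)$ share a common suffix and the corresponding distance is bounded by $K(|u|+|v|)$ plus the initial-output contributions.
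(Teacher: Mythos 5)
Your proof is correct and follows essentially the same route as the paper's: both exploit $u\approx_\aut v$ to reroute an accepting run $p_0\xrightarrow{w\mid x}_\trans r\xrightarrow{u\mid s}_\trans q$ into one over $wv$ through the \emph{same} states $r$ and $q$, use functionality to identify the outputs with $f(wu)$ and $f(wv)$, and cancel the common prefix $\init(p_0)\,x$ to obtain a bound of the form $k(|u|+|v|+2)$ independent of $w$. The only differences are cosmetic (your explicit run-based argument for the domain condition, and the slightly looser inequality in place of the paper's exact equality $\dist{ixyt,ixzt}=\dist{yt,zt}$).
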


\begin{proof}
Let $\trans=\tuple{\aut,\out,\init,\final}$ be a transducer realizing $f$ with $\aut=\tuple{Q,\Delta,I,F}$ its underlying automaton.
We will show that ${\approx_\aut}\finer {\leftcong_f}$, the proof for the right congruence being symmetrical.
Let $u\approx_\aut v$, we have to show that $u\leftcong_f v$.
Since $\aut$ recognizes the domain of $f$, we have for any word $w$ that $wu\in \dom(f)\Leftrightarrow wv\in \dom(f)$.
Let $w\in u^{-1}\dom(f)$, we want to show that $\dist{f(wu),f(wv)}$ does not depend on $w$.
Let $q_I\xrightarrow{w|x}_\trans q \xrightarrow{u|y}_\trans q_F$ denote an accepting run of $\trans$ over $wu$.
Then there is an accepting run  $q_I\xrightarrow{w|x}_\trans q \xrightarrow{v|z}_\trans q_F$.
Let $i=\init(q_I)$ and $t=\final(q_F)$. Then we have:

\[\begin{array}{rcl}
\dist{f(wu),f(wv)}&=&\dist{ixyt,ixzt}\\
&=&\dist{yt,zt}\\
&\leq& k(|u|+|v|+2)
\end{array}\]
where $k$ is the maximal length of a word in the ranges of $\out$ and $\final$.
\end{proof}

\subsubsection{The canonical bimachine}
We can now define the canonical bimachine from~\cite{ReutenauerS91}.
Note that for a given transduction $f$, realized by a bimachine $\bim$, the left minimization of $\bim$ only depends on the right automaton of the bimachine.
Hence we can define the \emph{canonical bimachine} associated with $f$ by $\bim_f=\tuple{\leftcan_f(\raut_f),\raut_f,\bout_f,\lfinal_f,\rfinal_f}$ with:

\[\begin{array}{rcll}
\bout_f(\cla u _{\sim_L},\sigma, \cla w_{\raut_f})&=&{\widehat{f}_{\cla{\sigma w}_{\raut_f}}(u)}^ {-1}\widehat{f}_{\cla{ w}_{\raut_f}}(u\sigma)&\\
\lfinal_f( \cla w_{\raut_f})&=&\widehat{f}_{\cla{w}_{\raut_f}}(\epsilon)&\text{for } w\in \dom(f) \\
\rfinal_f(\cla u _{\sim_L})&=&{\widehat{f}_{\cla{\epsilon}_{\raut_f}}(u)}^{-1} f(u)& \text{for } u\in \dom(f)
\end{array}\]
This bimachine is called canonical because its definition does not depend on the description of $f$.
Furthermore, as it was shown in~\cite{ReutenauerS91}, this machine is computable from a transducer (or a bimachine) realizing $f$.
Again by symmetry, there is actually a second canonical bimachine with $\laut_f$ and $\rightcan(\laut_f)$ as its automata.
Note that the canonical bimachine is, in particular, minimal: right minimization of the canonical bimachine cannot yield a coarser right automaton than $\raut_f$ (Proposition~\ref{prop:l-r-can}).



\section{Algebraic characterization of rational transductions}%
\label{sec:algebraic_rational}

The purpose of this section is to give a characterization of
$\var$-transductions, shown to be effective when $\var$ is a decidable
class of congruences. First, we show that $\var$-transductions, which
by definition are the transductions realized by (functional but possibly ambiguous) $\var$-transducers,
also correspond to the transductions realized by the bimachines whose
left and right automata are both $\var$-automata (called
$\var$-bimachines). This correspondence is not as
straightforward as it may seem and relies on bimachine minimization.
Indeed, \emph{unambiguous} $\var$-transducers are canonically
equivalent to $\var$-bimachines, as it was stated in~\cite{ReutenauerS95}, however the equivalence was unknown in the case
of functional transducers (except for $\var=\ap$~\cite{FiliotGL16}). In other words, we (non-trivially) strengthen the known
correspondence between functional transducers and unambiguous
transducers~\cite{Eilenberg74} to $\var$-transducers, which by~\cite{ReutenauerS95} implies that (functional) $\var$-transducers and
$\var$-bimachines coincide.

The canonical bimachine defined in the previous section,
while being canonical, cannot be used to test $\var$-rationality in general:
\begin{prop}%
\label{prop:can-bim-not-C}
  There exists a congruence class $\var$ and a $\var$-transduction $f$ such that
  $\bim_f$ is not a $\var$-bimachine.
\end{prop}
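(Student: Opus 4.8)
The plan is to take for $\var$ the class $\id$ of idempotent congruences and to reuse the transduction underlying the transducer of Figure~\ref{i-trans}, read as a \emph{total} function: let $f$ over $\{a\}$ be defined by $f(\epsilon)=f(a)=\epsilon$ and $f(a^n)=a$ for all $n\geq 2$. Since the transducer of Figure~\ref{i-trans} (with the initial state kept accepting, so that $f(\epsilon)=\epsilon$) is functional and its underlying automaton is idempotent — one checks that $a$ and $a^2$ induce the same set of pairs of states — the function $f$ is an $\id$-transduction. The whole point is that, because $f$ is total, its canonical look-ahead is trivial, so the canonical bimachine degenerates into the minimal sequential transducer of Section~\ref{subsec:min-seq}, which, exactly as in the proof of Proposition~\ref{prop:not-V-seq}, fails to be idempotent.

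First I would compute the canonical right automaton $\raut_f$. For any two suffixes $u=a^i$, $v=a^j$ and any prefix $w=a^k$, the outputs $f(wu)$ and $f(wv)$ both stabilise to $a$ as soon as $k$ is large, so $\dist{f(wu),f(wv)}\leq 1$ uniformly in $w$; as $f$ is total there is no domain constraint either. Hence $u\leftcong_f v$ for all $u,v$, the left congruence $\leftcong_f$ is trivial, and $\raut_f$ is the one-state automaton (trivially idempotent).

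Because $\raut_f$ is trivial the unique look-ahead class is all of $\Sigma^*$, so $\widehat f_{\cla w_{\raut_f}}$ collapses to the function $\widehat f$ of Section~\ref{subsec:min-seq} and $\sim_L$ collapses to the syntactic congruence $\sim_f$; consequently the left automaton $\leftcan_f(\raut_f)$ of $\bim_f$ is exactly the underlying automaton $\aut_f$ of the minimal sequential transducer of $f$. I would then compute $\sim_f$ and show it has three classes $\cla\epsilon$, $\cla a$, $\cla{a^{\geq 2}}$: since $\widehat f(\epsilon)=\widehat f(a)=\epsilon$ while $\widehat f(a^i)=a$ for $i\geq 2$, the residual outputs $w\mapsto \widehat f(u)^{-1}f(uw)$ separate $\epsilon$ from $a$ and $a$ from $a^2$ (in both cases they already differ at $w=a$, taking values $\epsilon$ and $a$). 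Reading $a$ from the initial class reaches $\cla a$ whereas reading $a^2$ reaches $\cla{a^{\geq 2}}$, so $a\not\approx_{\aut_f}a^2$ and $\aut_f$ is not idempotent. Thus $\bim_f$ is not an $\id$-bimachine although $f$ is an $\id$-transduction, which is the claim.

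The only delicate point, which I would stress, is the role of totality rather than any hard computation. For the non-total variant with $\dom(f)=a^+$, the empty suffix is no longer $\leftcong_f$-equivalent to the others, so $\raut_f$ gains a second state recording whether the remaining suffix is empty; this extra look-ahead is precisely what collapses $\leftcan_f(\raut_f)$ to two idempotent states, and the counterexample disappears. The main obstacle is therefore to set up $f$ so that its \emph{minimal} look-ahead is genuinely trivial, thereby forcing into the left automaton the non-idempotency that a finer (still idempotent) right automaton would otherwise absorb.
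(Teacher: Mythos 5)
Your proof is correct and follows essentially the same route as the paper: the class $\id$, the transduction of Figure~\ref{i-trans} (with the accepting initial state, hence total domain), the observation that $\raut_f$ is trivial so that $\leftcan_f(\raut_f)$ is the underlying automaton of the minimal sequential transducer, and the conclusion that this automaton is not idempotent. The only difference is presentational: where the paper invokes Proposition~\ref{prop:not-V-seq} to conclude that $f$ is not $\id$-sequential, you compute $\sim_f$ explicitly and exhibit the three classes $\cla{\epsilon}$, $\cla{a}$, $\cla{a^{\geq 2}}$ directly, and your remark on why totality of $f$ is essential (the non-total variant gains a second look-ahead state and the counterexample vanishes) is a correct and worthwhile clarification of a point the paper leaves implicit.
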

\begin{proof}
Consider the transducer in Figure~\ref{i-trans} and its
associated transduction $f$.
The canonical bimachine $\bim_f$ has a trivial right automaton
(with a single state) and a left automaton which is just the underlying automaton of the minimal sequential transducer of $f$.
By Proposition~\ref{prop:not-V-seq}, $f$ is not $\id$-sequential and
thus $\bim_f$ is not an $\id$-bimachine, while $f$ is an $\id$-transduction.
\end{proof}
We will see however in Section~\ref{sec:aperiodic}
that for the special case of aperiodicity, the setting is different:
we show in Theorem~\ref{thm:can-bim-ap} that
a transduction is aperiodic iff its canonical bimachine is.

Then, to decide whether a (functional) transducer realizes a
$\var$-transduction $f$, it suffices to test whether there exists a
$\var$-bimachine realizing $f$. We show --- and it is the main result of this section ---, that any transduction has
a finite number of minimal bimachines (up to output functions and
state renaming), any of which is bounded in size
by a constant that only depends on $f$. Since congruence classes are closed under taking coarser
congruences, deciding if $f$ is a $\var$-transduction can
be reduced to deciding if one of these minimal machines is a
$\var$-bimachine. We show that the set of minimal bimachines is
computable, which gives an effective procedure to decide
$\var$-rationality of $f$ as long as $\var$ is decidable. However, we
give a less naive characterisation of $\var$-rationality which also
yields a more direct procedure for testing $\var$-rationality.
The situation is depicted in Figure~\ref{fig:rational}.

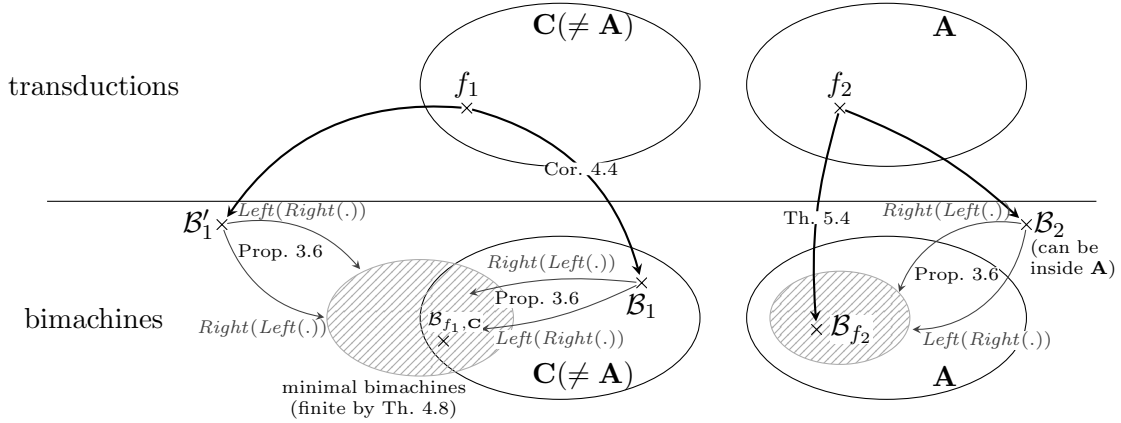
\begin{figure}


  \begin{tikzpicture}[baseline=0, inner sep=0, outer sep=0, minimum size=0pt, scale=0.31]
  \tikzstyle{cross} = [minimum size=4pt, path picture={
      \draw[black] (path picture bounding box.south east) -- (path picture bounding box.north west) (path picture bounding box.south west) -- (path picture bounding box.north east);
  }]
  \tikzstyle{projection} = [->, >=stealth, shorten >=1pt, thick, rounded corners=5]
  \tikzstyle{leftright} = [->, >=stealth, shorten >=1pt, looseness=1, color=gray!150]
  \tikzstyle{minimalzone} = [draw=gray!65, pattern=north east lines, pattern color=gray!65];

\begin{scope}

  \draw (2,12) node {transductions};
  \draw (2, 2) node {bimachines};
  \draw (0,7) -- (45,7);

  \newcommand\Vzone{(22,12) ellipse ( 6cm and 3.5cm)}
  \newcommand\Azone{(36,12) ellipse ( 6cm and 3.5cm)}
  \draw \Vzone;   \draw (23,14.5) node {$\var(\not=\ap)$};
  \draw \Azone;   \draw (38.5,14.5) node {$\ap$};

  \draw (22,2) ellipse ( 6cm and 3.5cm); \draw (23,-.5) node {$\var (\not=\ap)$};
  \draw (36,2) ellipse ( 6cm and 3.5cm); \draw (38.5,-.5) node {$\ap$};
  \draw [minimalzone] (16,2) ellipse ( 4cm and 2.5cm);
  \draw (14,-1.5) node {\tiny \begin{tabular}{c}minimal bimachines\\(finite by Th.~\ref{thm:finmin})\end{tabular}};
  \draw [minimalzone] (34,2) ellipse (3cm and 2cm);


  \draw (18,11) node [cross] (nodef1)    {}; \draw (18,12) node {$f_1$};
  \draw (25.5, 3.5) node [cross] (nodebim1) {}; \draw (25.5, 2.5) node {$\bim_1$};
  \path (nodef1) edge [projection, bend left] node [fill=white] {\tiny Cor.~\ref{cor:bim-trans}} (nodebim1);
  \draw (7.5, 6) node [cross] (nodebim1pr) {}; \draw (6.5, 6) node {$\bim_1'$};
  \path (nodef1) edge [projection, bend right] (nodebim1pr);
  \draw (13.5, 4) node (destellipse1) {}; 
  \draw (12, 2) node (destellipse2) {}; 
  \path (nodebim1pr) edge [leftright, bend left] node [above=1mm, midway] {\tiny $\leftcan(\rightcan(.))$} (destellipse1);
  \path (nodebim1pr) edge [leftright, bend right] node [below=4mm, midway] {\tiny $\rightcan(\leftcan(.))$} (destellipse2);
  \draw (10, 4.8) node {\tiny Prop.~\ref{prop:minimalbim}};
  \draw (18.5, 1.5) node (destellipse3) {}; 
  \draw (18, 3) node (destellipse4) {}; 
  \path (nodebim1) edge [leftright, bend left=10] node [below=2mm, midway, fill=white] {\tiny $\leftcan(\rightcan(.))$} (destellipse3);
  \path (nodebim1) edge [leftright, bend right=10] node [above=1mm, midway] {\tiny $\rightcan(\leftcan(.))$} (destellipse4);
  \draw (21, 2.8) node [fill=white] {\tiny Prop.~\ref{prop:minimalbim}};
  \draw (17, 1) node [cross] (nodeBf1C) {};
  \draw (17.5, 1.8) node [fill=white] {\tiny $\bim_{f_1,\var}$};


  \draw (34,11) node [cross] (nodef2) {}; \draw (34,12) node {$f_2$};
  \draw (42, 6) node [cross] (nodeB2) {}; \draw (43, 6) node {$\bim_2$};
  \draw (44, 4.5) node {\tiny \begin{tabular}{l}(can be\\inside $\ap$)\end{tabular}};
  \path (nodef2) edge [projection, bend left=10] (nodeB2);
  \draw (37, 1.5) node (destellipse5) {}; 
  \draw (36.5, 3)     node (destellipse6) {}; 
  \path (nodeB2) edge [leftright, bend left=40] node [below=4mm, midway, fill=white] {\tiny $\leftcan(\rightcan(.))$} (destellipse5);
  \path (nodeB2) edge [leftright, bend right=40] node [above=2mm, midway] {\tiny $\rightcan(\leftcan(.))$} (destellipse6);
  \draw (39, 3.8) node [fill=white] {\tiny Prop.~\ref{prop:minimalbim}};
  \draw (33, 1.5) node [cross] (nodeBf2) {};
  \draw (34.5, 1.5) node [fill=white] {$\bim_{f_2}$};
  \path (nodef2) edge [projection, bend right=10] node [fill=white] {\tiny Th.~\ref{thm:can-bim-ap}} (nodeBf2);

\end{scope}
\end{tikzpicture}


  \caption{Situation for rational transductions.\label{fig:rational}}
\end{figure}

\subsection{\texorpdfstring{$\var$}{C}-transducers and \texorpdfstring{$\var$}{C}-bimachines}
Here we show in two steps that $\var$-transducers and $\var$-bimachines realize the same transductions.
The two steps are done by showing how to construct an equivalent $\var$-transducer from a $\var$-bimachine and \emph{vice versa}.
The \emph{transition congruence} of a bimachine $\bim$ with automata $\laut$ and $\raut$ is defined as  ${\approx_\bim}={\approx_\laut\sqcap \approx_\raut}$.
Intuitively two words are equivalent with respect to $\bim$ if they are equivalent with respect to both $\laut$ and $\raut$.

\begin{prop}%
\label{prop:bim-to-trans}
Let $\bim$ be a bimachine realizing $f$, then there exists an unambiguous transducer realizing $f$ with underlying automaton $\aut$ such that ${\approx_\bim}\finer{\approx_\aut}$.
\end{prop}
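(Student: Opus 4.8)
The plan is to take the underlying automaton of the desired transducer to be the product $\aut = \laut \times \raut$, which simulates $\laut$ forward while guessing the right-automaton state, and to equip it with the output data of $\bim$. Concretely, I would set the state space to $Q_\laut \times Q_\raut$, install a transition $((l,s),\sigma,(l',s'))$ exactly when $(l,\sigma,l') \in \Delta_\laut$ and $(s,\sigma,s') \in \Delta_\raut$, take as initial states the pairs $(l_0,s)$ with $s \in F_\raut$ and as final states the pairs $(l,r_0)$ with $l \in F_\laut$. The outputs are inherited from $\bim$: the transition $((l,s),\sigma,(l',s'))$ outputs $\bout(l,\sigma,s')$, the initial output of $(l_0,s)$ is $\lfinal(s)$, and the final output of $(l,r_0)$ is $\rfinal(l)$. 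Intuitively, after reading a prefix of length $i$ the pair $(l,s)$ records the left state $l_i$ reached on the prefix together with a guess $s_i$ of the right state on the corresponding suffix, and the transition emits $\bout(l_{i-1},\sigma_i,s_i)$, precisely the $i$-th factor in the decomposition of $\sem{\bim}$ afforded by the definition of $\bout$ on words.

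Next I would verify unambiguity and correctness. For unambiguity, fix an input $u=\sigma_1\cdots\sigma_n$; the left coordinate of any run is forced since $\laut$ is deterministic, and the right coordinate of an accepting run must satisfy $s_n=r_0$ together with $(s_{i-1},\sigma_i,s_i)\in\Delta_\raut$ for all $i$. Because $\raut$ is backward deterministic, reading leftward from $s_n=r_0$ determines each $s_{i-1}$ uniquely from $s_i$ and $\sigma_i$, so there is at most one candidate sequence of right states, hence at most one accepting run. For correctness, an accepting run exists iff $l_n\in F_\laut$ and the unique backward right run from $r_0$ ends in $F_\raut$, i.e. iff $u\in\sem{\laut}\cap\sem{\raut}=\dom(f)$; and on such $u$ the produced output telescopes to $\lfinal(s_0)\,\bout(l_0,u,r_0)\,\rfinal(l_n)=\sem{\bim}(u)=f(u)$ by the inductive definition of $\bout$ on words.

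Finally I would establish $\approx_\bim \finer \approx_\aut$, which is the point of the statement but in fact falls out of the product structure. Since the transitions of $\aut$ act independently on the two coordinates, one has $(l,s)\xrightarrow{w}_\aut(l',s')$ if and only if $l\xrightarrow{w}_\laut l'$ and $s\xrightarrow{w}_\raut s'$, for every word $w$. Hence if $u\approx_\laut v$ and $u\approx_\raut v$ --- that is, $u\approx_\bim v$ --- then for all product states the relations $(l,s)\xrightarrow{u}_\aut(l',s')$ and $(l,s)\xrightarrow{v}_\aut(l',s')$ coincide, giving $u\approx_\aut v$. I expect the only genuinely delicate point to be the unambiguity argument, where the asymmetric treatment of the two automata (deterministic forward for $\laut$, backward deterministic with the pinned final state $r_0$ for $\raut$) is essential; the congruence inequality and the output bookkeeping are then routine.
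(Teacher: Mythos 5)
Your construction is exactly the one in the paper: the same product automaton $\laut\times\raut$ with initial states $\{l_0\}\times F_\raut$ and final states $F_\laut\times\{r_0\}$, the same inherited outputs $\bout(l,\sigma,s')$, $\lfinal$, $\rfinal$, and the same coordinatewise argument for ${\approx_\bim}\finer{\approx_\aut}$. The proposal is correct; it merely spells out the unambiguity and telescoping-output verifications that the paper leaves implicit.
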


\begin{proof}
Let $\bim=\tuple{\laut,\raut,\bout,\lfinal,\rfinal}$ be a bimachine with automata $\laut=\tuple{L,\Delta_\laut,\set{l_0},F_\laut}$ and $\raut=\tuple{R,\Delta_\raut,\set{r_0},F_\raut}$.
We construct a transducer $\trans=\tuple{\aut,\out,\init,\final}$ realizing the same transduction with $\aut=\tuple{Q,\Delta,I,F}$ an unambiguous automaton.

Let us define $\aut$ as the product of $\laut$ and $\raut$:
\begin{itemize}
\item $Q=L\times R$
\item $\Delta=\set{((l,r),\sigma,(l',r'))|\ (l,\sigma,l')\in \Delta_\laut \text{ and } (r,\sigma,r')\in \Delta_\raut}$
\item $I=\set{l_0}\times F_\raut$
\item $F=F_\laut\times\set{r_0}$
\end{itemize}
Since $\laut$ and $\raut$ both recognize $\dom(f)$, $\aut$ also recognizes $\dom(f)$ and for any word $w$ in $\dom(f)$, $\aut$ has a unique run over $w$.
Now we can define the outputs of $\trans$.
\begin{itemize}
\item $\out((l,r),\sigma,(l',r'))=\bout(l,\sigma,r')$
\item $\init(l_0,r)=\lfinal(r)$ for $r\in F_\raut$
\item $\final(l,r_0)=\rfinal(l)$ for $l\in F_\laut$
\end{itemize}
The transducer $\trans$ is equivalent to $\bim$ by construction, so we only have left to show that ${\approx_\bim}\finer{\approx_\aut}$.
Let $u \approx_\bim v$, then we have both $u \approx_\laut v$ and $u \approx_\raut v$ which means that $u \approx_\aut v$.
\end{proof}

\begin{prop}%
\label{prop:trans-to-bim}
Let $f$ be a transduction realized by a transducer with underlying automaton $\aut$, then there exists a bimachine $\bim$ realizing $f$ such that ${\approx_\aut}\finer{\approx_\bim}$.
\end{prop}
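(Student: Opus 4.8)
The plan is to realize $f$ by a bimachine whose left and right automata both have transition congruence exactly $\approx_\aut$; then ${\approx_\bim} = {\approx_\laut\sqcap\approx_\raut}$ will equal $\approx_\aut$, which trivially satisfies ${\approx_\aut}\finer{\approx_\bim}$. For the right automaton I would take a co-deterministic automaton $\raut$ recognizing $\dom(f)$ whose states are the classes of $\approx_\aut$ (reading a suffix from the right moves a class through the action of a letter on transition relations), so that ${\sim_\raut} = {\approx_\raut} = {\approx_\aut}$. By Proposition~\ref{prop:trans-rightcan}, ${\approx_\aut}\finer{\leftcong_f} = {\sim_{\raut_f}}$, hence $\raut\finer\raut_f$: the look-ahead of $\raut$ is at least as precise as the canonical one. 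Refining the right automaton of the canonical bimachine $\bim_f$ along the induced projection $\raut\to\raut_f$ (possible since $\raut\finer\raut_f$) yields a bimachine realizing $f$ with right automaton $\raut$, so by Lemma~\ref{lem:correctness-bim} the left minimization $\bim := \leftcan_f(\raut)$ is well defined and realizes $f$, with right automaton $\raut$ and left automaton $\laut := \leftcan_f(\raut)$.

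It then remains to control the two transition congruences of $\bim$. For the right one this is immediate: ${\approx_\raut} = {\approx_\aut}$ by construction. For the left one, write $\sim_L$ for the right congruence defining $\laut$ as in Section~\ref{subsubsec:lr-minimization}. I would first reduce the two-sided bound to a one-sided one exactly as in Proposition~\ref{prop:finer-seq}: since $\laut$ is deterministic and accessible and $\approx_\aut$ is a (two-sided) congruence, if ${\approx_\aut}\finer{\sim_L}$ then for $u\approx_\aut v$ and every word $z$ we have $zu\approx_\aut zv$, hence $zu\sim_L zv$, and accessibility gives $u\approx_\laut v$; thus ${\approx_\aut}\finer{\approx_\laut}$. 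Combining the two bounds yields ${\approx_\aut}\finer{\approx_\laut\sqcap\approx_\raut} = {\approx_\bim}$, as required.

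The crux is therefore the one-sided statement ${\approx_\aut}\finer{\sim_L}$, namely that for $u\approx_\aut v$ and every $w$ with $uw\in\dom(f)$ one has $\widehat f_{\cla w_\raut}(u)^{-1}f(uw) = \widehat f_{\cla w_\raut}(v)^{-1}f(vw)$ (the domain condition of $\sim_L$ follows at once from $\aut$ recognizing $\dom(f)$). I expect this to be the main obstacle, and I would prove it by exploiting functionality through a \emph{common pivot}, in the spirit of Propositions~\ref{prop:trans-rightcan} and~\ref{prop:l-r-min}. Since $u\approx_\aut v$, the two words reach the same set of states of $\aut$ from the initial states; among those, the set $P$ of states $q$ such that $q\xrightarrow{x}_\aut F$ for $x\in\cla w_\raut$ depends only on the class $\cla w_\raut$ (all its representatives share a transition relation), so $P$ is the same for $u$ and $v$ and is nonempty. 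Fix one $q_0\in P$, one accepting prefix run $q_I\xrightarrow{u\mid a}_\trans q_0$ with $q_I\in I$, and set $A=\init(q_I)a$. For every $x\in\cla w_\raut\cap u^{-1}\dom(f)$, prepending this fixed prefix to any accepting run of $\aut$ from $q_0$ over $x$ produces an accepting run of $ux$; by functionality all of these output $f(ux)$, so all accepting runs from $q_0$ over $x$ share a single output $B(x)$, with $f(ux)=A\,B(x)$. Crucially $B(x)$ is intrinsic to $q_0$ and $x$, so using instead a run $q_I'\xrightarrow{v\mid a'}_\trans q_0$ and $A'=\init(q_I')a'$ we also get $f(vx)=A'\,B(x)$ with the \emph{same} words $B(x)$.

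It follows that $\widehat f_{\cla w_\raut}(u) = A\cdot\bigwedge\{B(x)\mid x\in\cla w_\raut\cap u^{-1}\dom(f)\}$ and $f(uw)=A\,B(w)$, whence $\widehat f_{\cla w_\raut}(u)^{-1}f(uw) = \big(\bigwedge\{B(x)\mid x\in\cla w_\raut\cap u^{-1}\dom(f)\}\big)^{-1}B(w)$, an expression from which $u$ and $A$ have cancelled. Running the identical computation for $v$ with the same pivot $q_0$ and the same words $B(x)$ gives the same value, which establishes ${\approx_\aut}\finer{\sim_L}$ and completes the argument. The only genuinely new point is this uniformization step: that a single pivot and a single prefix output serve the whole look-ahead class, and that functionality collapses all suffix outputs from the pivot to one word; everything else reuses the left-minimization machinery already developed.
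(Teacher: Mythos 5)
Your proof is correct and takes essentially the same route as the paper's: build the right automaton $\raut$ from ${\approx_\aut}$ viewed as a left congruence, left-minimize to obtain the bimachine, and establish ${\approx_\aut}\finer{\sim_L}$ by fixing a common pivot state reached by both $u$ and $v$ and factoring each image $f(ux)$ as a fixed prefix output times a word $B(x)$ depending only on the pivot and $x$ --- this is exactly the paper's computation with $p_1$, $x_1,y_1$, $\alpha_1$ and $\beta$. The only difference is presentational: you spell out the reduction from the two-sided bound ${\approx_\aut}\finer{\approx_\laut}$ to the one-sided bound ${\approx_\aut}\finer{\sim_L}$ (using that ${\approx_\aut}$ is a two-sided congruence and $\laut$ is accessible, as in Proposition~\ref{prop:finer-seq}), a step the paper leaves implicit.
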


\begin{proof}
Let $\trans=\tuple{\aut,\out,\init,\final}$ be a transducer realizing $f$ with $\aut=\tuple{Q,\Delta,I,F}$ and let, in the following, $\cla u_\aut$ denote $\cla u _{\approx_\aut}$ for any word $u$.
We construct a bimachine $\bim=\tuple{\leftcan(\raut),\raut,\bout,\lfinal,\rfinal}$.
First we define $\raut=\tuple{R,\Delta,\set{r_0},F}$ as the right automaton canonically associated with $\approx_\aut$ seen as a left congruence:
\begin{itemize}
\item $R=\Sigma^*/_{\approx_\aut}$
\item $\Delta=\set{(\cla {\sigma u}_\aut,\sigma,\cla u_\aut)|\ u\in \Sigma^*,\sigma\in \Sigma}$
\item $r_0=\cla \epsilon_\aut$
\item $F=\set{\cla u_\aut|\ u\in\dom(f)}$
\end{itemize}
Of course we have ${\approx_\aut}\finer {\sim_\raut} $ since ${\approx_\aut}= {\sim_\raut} $, and we also have according to Proposition~\ref{prop:trans-rightcan} that ${\sim_\raut}\finer{\leftcong_f}$.
Hence we can define $\bim$ with the left minimization of $\raut$ as left automaton (see
the previous section) and we obtain a bimachine realizing $f$~\cite{ReutenauerS91}.
We only have left to show that ${\approx_\aut}\finer{\sim_L}$ (using the same notations as in~\ref{subsubsec:lr-minimization}).

Let $u,v$ be two words and let us assume that $u\approx_\aut v$.
We want to show that for all $w$, $uw\in \dom(f) \Leftrightarrow vw\in \dom(f)$ and ${\widehat{f}_{\cla w_{\raut}}(u)}^{-1}f(uw)={\widehat{f}_{\cla w_{\raut}}(v)}^{-1}f(vw)$ if $uw\in \dom(f)$.
The first condition is fulfilled, since $\aut$ recognizes the domain of $f$.
Let $w$ be a word such that $uw\in \dom(f)$.
Let $\set{p_1,\ldots,p_k}$ denote the set of states which can be reached in $\aut$ by reading $u$ (or $v$) from an initial state and which can reach a final state by reading $w$.
For $i\in\set{1,\ldots,k}$ let $I\xrightarrow{u|x_i}_\trans p_i$ and $I\xrightarrow{v|y_i}_\trans p_i$ denote the runs over $u$ and $v$, respectively, where the outputs $x_i,y_i$ are uniquely defined because $\trans$ realises a function.
For any word $z$ in $\cla w_\raut$, and any $i\in\set{1,\ldots,k}$ there is a run on $z$ from $p_i$ to a final state, since $z$ and $w$ are equivalent for $\raut$.
For $i\in\set{1,\ldots,k}$  we define $\alpha_i=\bigwedge \set{\alpha\final(q)|\ p_i\xrightarrow{z|\alpha}_\trans q,\ q\in F,\ z\in \cla w_\raut}$.
By its definition we have for any $i\in\set{1,\ldots,k}$ that $\widehat{f}_{\cla w_{\raut}}(u)=x_i\alpha_i$ and $\widehat{f}_{\cla w_{\raut}}(v)=y_i\alpha_i$.
In particular for $i=1$ we have $f(uw)=x_1\alpha_1\beta$ and $f(vw)=y_1\alpha_1\beta$ where $\alpha_1\beta=\alpha \final(q)$ for some run $p_1\xrightarrow{w|\alpha}_\trans q$ and some $q$ final.
Finally we obtain:
\begin{align*}
{\widehat{f}_{\cla w_{\raut}}(u)}^{-1}f(uw)
&={(x_1\alpha_1)}^{-1}x_1\alpha_1\beta\\
&=\beta\\
&={(y_1\alpha_1)}^{-1}y_1\alpha_1\beta\\
&={\widehat{f}_{\cla w_{\raut}}(v)}^{-1}f(vw)
\qedhere
\end{align*}
\end{proof}

A direct consequence of Propositions~\ref{prop:bim-to-trans} and~\ref{prop:trans-to-bim} is that $\var$-bimachines characterize $\var$-transductions.
\begin{cor}%
\label{cor:bim-trans}
Let $\var$ be a congruence class.
A transduction is $\var$-rational if and only if it can be realized by a $\var$-bimachine.
\end{cor}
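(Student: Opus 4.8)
The plan is to read off the corollary directly from Propositions~\ref{prop:bim-to-trans} and \ref{prop:trans-to-bim}, the only extra ingredient being the closure properties built into the definition of a congruence class. The first step I would carry out is to record a convenient reformulation of ``$\bim$ is a $\var$-bimachine''. By definition the transition congruence of a bimachine satisfies ${\approx_\bim}={\approx_\laut\sqcap\approx_\raut}\finer{\approx_\laut}$ and ${\approx_\bim}\finer{\approx_\raut}$. Hence, if both $\laut$ and $\raut$ are $\var$-automata, then $\approx_\bim\in\var$ by closure under intersection; conversely, if $\approx_\bim\in\var$, then closure under coarser congruences forces $\approx_\laut\in\var$ and $\approx_\raut\in\var$. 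Thus $\bim$ is a $\var$-bimachine if and only if $\approx_\bim\in\var$, and the whole statement reduces to transferring membership in $\var$ between the transition congruence of a transducer and that of a bimachine.

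For the forward direction I would start from a functional $\var$-transducer with underlying automaton $\aut$ realizing $f$, so that $\approx_\aut\in\var$. Applying Proposition~\ref{prop:trans-to-bim} yields a bimachine $\bim$ realizing $f$ with ${\approx_\aut}\finer{\approx_\bim}$; since $\approx_\bim$ is coarser than a congruence in $\var$, closure gives $\approx_\bim\in\var$, \ie $\bim$ is a $\var$-bimachine. For the converse I would take a $\var$-bimachine $\bim$ realizing $f$, so $\approx_\bim\in\var$, and feed it to Proposition~\ref{prop:bim-to-trans}, obtaining an unambiguous transducer with underlying automaton $\aut$ such that ${\approx_\bim}\finer{\approx_\aut}$; again closure under coarser congruences gives $\approx_\aut\in\var$. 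As this transducer is unambiguous it is in particular functional, so it witnesses that $f$ is $\var$-rational.

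There is essentially no genuine obstacle left, the substance being already contained in the two propositions; the only point I would be careful about is that the two fineness inequalities point in compatible directions, so that closure under \emph{coarser} congruences (rather than finer) is exactly what is invoked on both sides. I would also double-check that the transducer produced by Proposition~\ref{prop:bim-to-trans} is functional (it is, being unambiguous), so that it is an admissible witness for $\var$-rationality rather than merely a $\var$-transducer realizing a relation.
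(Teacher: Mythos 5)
Your proof is correct and follows exactly the route the paper intends: the paper derives Corollary~\ref{cor:bim-trans} as a direct consequence of Propositions~\ref{prop:bim-to-trans} and \ref{prop:trans-to-bim}, using closure of $\var$ under intersection and coarser congruences just as you do. Your explicit reformulation of ``$\var$-bimachine'' via $\approx_\bim$ and the remark on unambiguity implying functionality are exactly the details the paper leaves implicit.
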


\begin{rem}
Going from a bimachine to an unambiguous transducer is done in \ptime according to the proof of Proposition~\ref{prop:bim-to-trans}.
However, as shown in the proof of Proposition~\ref{prop:trans-to-bim}, going from a transducer to a bimachine yields, in general, an exponentially larger bimachine and is thus in \exptime.
\end{rem}

\subsection{Bounding minimal bimachines}
In this subsection we show
that any minimal bimachine realizing a transduction $f$
is bounded in size by some constant that depends only on $f$. It is
based on properties of the canonical bimachine and implies in particular that the
number of minimal bimachines is finite (up to equivalence under $\bimeq$, see Subsection~\ref{subsec:bim-min}),
a result that was left open in~\cite{ReutenauerS91}.

The following proposition intuitively shows that the more information
you put in the right automaton of a bimachine, the less information
you need in the left automaton, and symmetrically.

\begin{prop}%
\label{prop:min-inverse}
Let $\bim_1$ and $\bim_2$ be bimachines realizing a transduction $f$
with automata $\laut_1,\raut_1$ and $\laut_2,\raut_2$,
respectively. The following two implications hold true:

$\raut_1\finer \raut_2\implies
\leftcan(\raut_2)\finer\leftcan(\raut_1)$ and
$\laut_1\finer \laut_2\implies\rightcan(\laut_2)\finer\rightcan(\laut_1)$.
\end{prop}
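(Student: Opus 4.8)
The plan is to prove the first implication; the second follows by the symmetric argument, exchanging the roles of left and right automata. Writing $\sim_{L_i}$ for the right congruence $\sim_L$ of Section~\ref{subsubsec:lr-minimization} computed with the right automaton $\raut_i$, the conclusion $\leftcan(\raut_2)\finer\leftcan(\raut_1)$ unfolds to ${\sim_{L_2}}\finer{\sim_{L_1}}$. So I would fix two words with $u\sim_{L_2}v$ and, assuming $\raut_1\finer\raut_2$, show $u\sim_{L_1}v$.

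First I would dispatch the domain clause: the first condition of $\sim_{L_i}$ (that $uw\in\dom(f)\Leftrightarrow vw\in\dom(f)$ for every $w$) does not mention $\raut$ at all, so it transfers verbatim from $\sim_{L_2}$ to $\sim_{L_1}$; in particular $u^{-1}\dom(f)=v^{-1}\dom(f)$. The heart of the argument is the output clause. The key observation is that $\raut_1\finer\raut_2$ gives the class inclusion $\cla w_{\raut_1}\subseteq\cla w_{\raut_2}$, so that $\widehat f_{\cla w_{\raut_1}}(u)$ is a longest common prefix taken over a \emph{subset} of the words defining $\widehat f_{\cla w_{\raut_2}}(u)$; hence $\widehat f_{\cla w_{\raut_2}}(u)\preceq\widehat f_{\cla w_{\raut_1}}(u)$, and I set $\delta_u:=\widehat f_{\cla w_{\raut_2}}(u)^{-1}\widehat f_{\cla w_{\raut_1}}(u)$, and likewise $\delta_v$.

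The crux is to compute $\delta_u$ explicitly and establish $\delta_u=\delta_v$. Applying the output clause of $\sim_{L_2}$ at each suffix $z$ with $z\sim_{\raut_2}w$ (legitimate, since then $\cla z_{\raut_2}=\cla w_{\raut_2}$) lets me factor $f(uz)=\widehat f_{\cla w_{\raut_2}}(u)\,g(z)$ and $f(vz)=\widehat f_{\cla w_{\raut_2}}(v)\,g(z)$ with the \emph{same} suffix $g(z)$ for $u$ and $v$. Pulling the common prefix $\widehat f_{\cla w_{\raut_2}}(u)$ out of the $\bigwedge$ operation (as in Lemma~\ref{lem:well-defined}) then yields $\delta_u=\bigwedge\set{g(z)\mid z\in\cla w_{\raut_1}\cap u^{-1}\dom(f)}$, and identically $\delta_v=\bigwedge\set{g(z)\mid z\in\cla w_{\raut_1}\cap v^{-1}\dom(f)}$. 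Since the two index sets coincide (by $u^{-1}\dom(f)=v^{-1}\dom(f)$) and $g$ is the one shared function, I obtain $\delta_u=\delta_v$. Finally, taking $z=w$ gives $f(uw)=\widehat f_{\cla w_{\raut_2}}(u)g(w)$ with $\delta_u\preceq g(w)$, so $\widehat f_{\cla w_{\raut_1}}(u)^{-1}f(uw)=\delta_u^{-1}g(w)=\delta_v^{-1}g(w)=\widehat f_{\cla w_{\raut_1}}(v)^{-1}f(vw)$, which is exactly the output clause of $\sim_{L_1}$, completing $u\sim_{L_1}v$.

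The main obstacle I anticipate is the bookkeeping in the factorization step: justifying that the longest common prefix over $\cla w_{\raut_1}$ really splits as $\widehat f_{\cla w_{\raut_2}}(u)$ followed by $\bigwedge_z g(z)$ relies on left-multiplication by a fixed word commuting with $\bigwedge$, together with the uniform factorization $f(uz)=\widehat f_{\cla w_{\raut_2}}(u)g(z)$ holding \emph{simultaneously} for all $z$ in the $\raut_2$-class of $w$. Once these are in place, the equality $\delta_u=\delta_v$ and hence $u\sim_{L_1}v$ follow with no further subtlety.
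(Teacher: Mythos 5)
Your proof is correct, but it takes a genuinely different route from the paper's. You argue directly at the level of the congruences: unfolding the definitions of $\sim_{L_1}$ and $\sim_{L_2}$, you show $u\sim_{L_2}v\Rightarrow u\sim_{L_1}v$ by an explicit longest-common-prefix computation --- the inclusion $\cla{w}_{\raut_1}\subseteq\cla{w}_{\raut_2}$, the uniform factorization $f(uz)=\widehat f_{\cla{w}_{\raut_2}}(u)\,g(z)$ with $g$ shared between $u$ and $v$, pulling the common prefix out of $\bigwedge$, and the resulting identity $\delta_u=\delta_v$. The steps all check out (the index sets are nonempty since $w$ itself belongs to them, $\delta_u^{-1}g(w)$ is well defined because $\delta_u\preceq g(w)$, and left multiplication by a fixed word does commute with $\bigwedge$), and this is exactly the computational style of the proof of Lemma~\ref{lem:well-defined}. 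The paper instead argues structurally, in a few lines: since $\raut_1\finer\raut_2$, there is a projection $\pi$ from $\raut_1$-classes to $\raut_2$-classes; precomposing the output functions of $\leftcan(\bim_2)$ with $\pi$ yields a bimachine realizing $f$ whose left and right automata are $\leftcan(\raut_2)$ and $\raut_1$; Proposition~\ref{prop:l-r-min} (the minimality of $\leftcan(\raut_1)$ among left automata of bimachines for $f$ with right automaton $\raut_1$) then gives $\leftcan(\raut_2)\finer\leftcan(\raut_1)$ immediately. The paper's route is shorter and modular --- it reuses the universal property it has already established, and the same ``ignore the finer information'' trick recurs later (e.g.\ in Proposition~\ref{prop:minimalbim} and Theorem~\ref{thm:finmin}). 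Your route is self-contained (it never invokes Proposition~\ref{prop:l-r-min}) and yields a small extra piece of information: the word $\delta_u$ measuring how much further the finer look-ahead lets one output, and the fact that it depends only on the $\sim_{L_2}$-class of $u$; the paper's argument does not expose this, but also does not need it.
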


\begin{proof}
We only show the result for right automata.
The idea of the proof is that $\raut_1$ gives more information than $\raut_2$, hence $\leftcan(\raut_2)$ must compute more information than $\leftcan(\raut_1)$ in order to realize $f$.
We want to show that there exists a bimachine realizing $f$ with automata $\leftcan(\raut_2)$ and $\raut_1$.
This will show according to Proposition~\ref{prop:l-r-min} that $\leftcan(\raut_2)\finer\leftcan(\raut_1)$.
We can assume that $\raut_1$ is accessible without loss of generality.
Identifying states of $\raut_1$ with equivalence classes of $\sim_{\raut_1}$, we have a well defined function $\pi:\Sigma^*/_{\sim_{\raut_1}}\rightarrow \Sigma^*/_{\sim_{\raut_2}}$ since $\raut_1\finer \raut_2$.
Hence if we consider $\leftcan(\bim_2)=(\leftcan(\raut_2),\raut_2,\bout_2,\lfinal_2,\rfinal_2)$, then we can define $\bim=(\leftcan(\raut_2),\raut_1,\bout,\lfinal_2,\rfinal_2\circ\pi)$ where $\bout(\cla w_{\raut_1},\sigma,l)=\bout_2(\pi(\cla w_{\raut_1}),\sigma,l)$.
By construction $\bim$ realizes $f$, which concludes the proof.
\end{proof}
We now have all the ingredients necessary to bound the size of the
minimal bimachines.
\begin{lem}%
\label{lem:bound-bim}
Let $\bim = (\laut,\raut,\omega,\lambda,\rho)$ be a bimachine
realizing a transduction $f$. If $\bim$ is minimal, then
$\leftcan(\raut_f)\finer \laut$ and $\rightcan(\laut_f)\finer \raut$.
\end{lem}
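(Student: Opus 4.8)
The plan is to derive the bound by combining three facts already established: the anti-monotonicity of the left/right minimization operators (Proposition~\ref{prop:min-inverse}), the coarseness of the canonical automata $\raut_f,\laut_f$ among all realizing bimachines (Proposition~\ref{prop:l-r-can}), and the fact that minimality forbids any further left or right minimization. I will only treat $\leftcan(\raut_f)\finer\laut$, the inequality $\rightcan(\laut_f)\finer\raut$ being entirely symmetric.

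First I would observe that, by Proposition~\ref{prop:l-r-can}, the right automaton $\raut$ of $\bim$ satisfies $\raut\finer\raut_f$. Feeding this into the anti-monotonicity statement of Proposition~\ref{prop:min-inverse} immediately yields $\leftcan(\raut_f)\finer\leftcan(\raut)$. Intuitively, since $\raut_f$ carries less look-ahead information than $\raut$, its associated minimal left automaton must carry more, hence be finer.

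The second step is to show $\leftcan(\raut)\finer\laut$, which is where minimality enters. I would consider $\leftcan(\bim)=(\leftcan(\raut),\raut,\ldots)$, which realizes $f$ by Lemma~\ref{lem:correctness-bim}. By Proposition~\ref{prop:l-r-min} we have $\laut\finer\leftcan(\raut)$ (and $\raut\finer\raut$ trivially), so $\bim\finer\leftcan(\bim)$, i.e. $\leftcan(\bim)$ is coarser than or equal to $\bim$. As $\bim$ is minimal there is no strictly coarser equivalent bimachine, so a coarser-or-equal equivalent bimachine must in fact satisfy $\leftcan(\bim)\finer\bim$, that is $\leftcan(\bim)\bimeq\bim$; reading off the left automata gives $\leftcan(\raut)\finer\laut$. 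Chaining the two inequalities then produces $\leftcan(\raut_f)\finer\leftcan(\raut)\finer\laut$, which is the claim.

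I do not expect a genuine obstacle, since all the heavy lifting is performed by the earlier propositions. The only points requiring care are the correct invocation of the minimality definition (a minimal bimachine admits no strictly coarser equivalent, so any coarser-or-equal equivalent bimachine is necessarily $\bimeq$-equal to it) and keeping the direction of $\finer$ straight when passing through the anti-monotone operator $\leftcan$, which reverses the order.
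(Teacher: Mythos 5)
Your proof is correct and follows essentially the same route as the paper's: both rest on Proposition~\ref{prop:l-r-can} (the canonical automata are coarsest among realizing bimachines), the anti-monotonicity of Proposition~\ref{prop:min-inverse}, and the fact that minimality forces $\bim$ to be stable under the minimization operators. The only cosmetic difference is that you invoke minimality against the one-sided minimizations $\leftcan(\bim)$ and $\rightcan(\bim)$ (via Proposition~\ref{prop:l-r-min}), obtaining $\leftcan(\raut)\finer\laut$ and $\rightcan(\laut)\finer\raut$ and chaining, whereas the paper invokes it once against the double minimization $\rightcan(\leftcan(\bim))$ (via Proposition~\ref{prop:minimalbim}) to pin down $\laut=\leftcan(\raut)$ and $\raut=\rightcan(\leftcan(\raut))$ exactly; both yield the same conclusion.
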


\begin{proof}
We will show the lemma for $\rightcan(\leftcan(\bim))$.
Assume that $\bim$ is minimal. Consider the bimachine
$\rightcan(\leftcan(\bim))$. According to
Proposition~\ref{prop:minimalbim}, we have
$\bim\finer \rightcan(\leftcan(\bim))$ but since $\bim$ is minimal, it
implies that the left and right automata of $\bim$ and
$\rightcan(\leftcan(\bim))$ are the same, up to renaming of their
states. The only possible difference between $\bim$ and $\rightcan(\leftcan(\bim))$ lies in
their output functions. Therefore, we can assume that
$\rightcan(\leftcan(\bim)) = (\laut,\raut,\omega',\lambda',\rho')$ for
some output functions $\omega',\lambda',\rho'$, \ie, $\laut =
\leftcan(\raut)$ and $\raut = \rightcan(\leftcan(\raut))$.

Intuitively, since $\raut_f$ contains the minimum information needed for a right automaton, $\leftcan(\raut_f)$ contains the maximum information needed for a left automaton, and any additional information should be removed by minimizing.
We first apply the left minimization and obtain $\leftcan(\raut_f)$.
According to Proposition~\ref{prop:l-r-can}, we have
$\raut\finer\raut_f$ which implies, by
Proposition~\ref{prop:min-inverse}, that
$\leftcan(\raut_f)\finer\leftcan(\raut) = \laut$.
We then use the same reasoning for the second minimization:
According to Proposition~\ref{prop:l-r-can}, we have
$\leftcan(\raut)\finer\laut_f$ and by
Proposition~\ref{prop:min-inverse}, we get
$\rightcan(\laut_f)\finer\rightcan(\leftcan(\raut)) = \raut$.
\end{proof}

\begin{thm}\label{thm:finmin}
    Let $f$ be a transduction. The set of minimal bimachines realizing
    $f$ is finite (up to $\bimeq$).
    Moreover, if $f$ is given by a transducer or a bimachine, one can
    compute a set of representatives of each class of minimal
    bimachines.
\end{thm}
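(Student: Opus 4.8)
The plan is to derive the finiteness of minimal bimachines directly from the size bound established in Lemma~\ref{lem:bound-bim}, and then to argue computability by enumerating all bimachines below that bound and filtering by minimality.

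First I would establish the key size bound. Let $\bim = (\laut,\raut,\omega,\lambda,\rho)$ be any minimal bimachine realizing $f$. By Lemma~\ref{lem:bound-bim}, we have $\leftcan(\raut_f)\finer \laut$ and $\rightcan(\laut_f)\finer \raut$. On the other hand, Proposition~\ref{prop:l-r-can} gives $\laut\finer\laut_f$ and $\raut\finer\raut_f$; more precisely, combining the constructions, the left automaton $\laut$ of a minimal bimachine is squeezed between two fixed automata that depend only on $f$, namely $\leftcan(\raut_f)\finer \laut \finer \leftcan(\raut)$, and symmetrically for $\raut$. The crucial observation is that $\leftcan(\raut_f)$ and $\rightcan(\laut_f)$ are canonical objects determined by $f$ alone, and since a deterministic automaton finer than a fixed automaton and coarser than another fixed one has a number of states bounded by the size of the finer one, every minimal bimachine has left and right automata whose sizes are bounded by $|\leftcan(\raut_f)|$ and $|\rightcan(\laut_f)|$ respectively. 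This yields a uniform bound $N(f)$ on the size of any minimal bimachine.

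Next I would deduce finiteness up to $\bimeq$. Two bimachines that are equal up to state renaming and output shifting are identified by $\bimeq$, so it suffices to bound the number of bimachines up to isomorphism of their underlying automata together with their output functions. Since the state spaces are bounded by $N(f)$, there are finitely many possible underlying left/right automata up to renaming. For the output functions, I would note that the outputs of a minimal bimachine are themselves determined once the automata are fixed: by Corollary~\ref{cor:l-r-min} and Proposition~\ref{prop:minimalbim}, a minimal bimachine coincides with $\leftcan(\rightcan(\cdot))$ applied to itself, so its output functions are pinned down (up to the shifting allowed by $\bimeq$) by the left/right minimization formulas, which depend only on $f$ and on the chosen automata. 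Hence each isomorphism class of automata gives rise to at most one minimal bimachine up to $\bimeq$, and the total count is finite.

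For computability, the plan is to compute $\raut_f$ and $\laut_f$ from the given transducer or bimachine (these are computable by \cite{ReutenauerS91}, as recalled in Section~\ref{subsec:canbim}), which gives the explicit bound $N(f)$. I would then enumerate all right automata $\raut$ with at most $N(f)$ states that are coarser than $\rightcan(\laut_f)$ and finer than $\raut_f$ (finitely many, effectively enumerable), and for each apply left minimization to obtain $\leftcan(\raut)$, yielding a candidate bimachine realizing $f$. Filtering this finite candidate set by minimality---checking for each candidate whether any strictly coarser equivalent bimachine exists, which by Proposition~\ref{prop:minimalbim} reduces to comparing against $\leftcan(\rightcan(\cdot))$ and $\rightcan(\leftcan(\cdot))$, both computable in \ptime by Theorem~\ref{thm:ptime-min}---produces a set of representatives of the classes of minimal bimachines. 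The main obstacle I anticipate is the output-function part of the finiteness argument: one must argue carefully that fixing the underlying automata leaves no unbounded freedom in the outputs, which requires invoking the canonicity of left/right minimization (Corollary~\ref{cor:l-r-min}) to show that a minimal bimachine's outputs are forced up to the shifting permitted by $\bimeq$, rather than merely bounding the automata sizes.
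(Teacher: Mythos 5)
Your overall strategy coincides with the paper's: finiteness follows from Lemma~\ref{lem:bound-bim} (there are only finitely many automata coarser than $\leftcan(\raut_f)$ and $\rightcan(\laut_f)$), and a set of representatives is obtained by enumerating the right automata $\raut$ with $\rightcan(\laut_f)\finer\raut\finer\raut_f$ and minimizing. One remark on your finiteness argument: the detour through ``the output functions are pinned down'' is unnecessary, because $\bimeq$ as defined in the paper only requires mutual refinement of the automata together with equality of the realized transductions; any two minimal bimachines for $f$ sharing the same automata (up to renaming) are $\bimeq$ regardless of their outputs, so finiteness really is immediate from the bound on the automata.

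The one genuine gap is in the computability part, at the step ``for each $\raut$ apply left minimization to obtain $\leftcan(\raut)$''. Left minimization is only shown to be effective (Lemma~\ref{lem:correctness-bim}, Theorem~\ref{thm:ptime-min}) when its input is a \emph{bimachine} realizing $f$ whose right automaton is $\raut$; the congruence $\sim_L$ involves an unbounded quantification over suffixes, so you cannot compute it from the bare automaton $\raut$ and $f$ without first exhibiting such a machine. The paper fills exactly this hole: since $\raut\finer\raut_f$, it builds from the canonical bimachine $\bim_f=(\leftcan_f(\raut_f),\raut_f,\omega_f,\lambda_f,\rho_f)$ a bimachine $(\leftcan_f(\raut_f),\raut,\omega'_f,\lambda'_f,\rho_f)$ realizing $f$ whose output functions simply ignore the extra information carried by $\raut$, and then runs the minimization on it. With this construction inserted, your algorithm goes through. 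Note also that the paper's candidates $\rightcan(\leftcan(\bim))$ are automatically minimal by Proposition~\ref{prop:minimalbim}, so your final filtering step is superfluous --- though it is correct, since a candidate is minimal if and only if it is $\bimeq$ to its own $\rightcan(\leftcan(\cdot))$, which is checkable.
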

\begin{proof}
    The first statement is a direct consequence of
    Lemma~\ref{lem:bound-bim}, since there are
    finitely many left and right automata
    coarser than $\leftcan(\raut_f)$ and $\rightcan(\laut_f)$.

    Now, to compute a set of representatives, we compute the set
    \[
    X = \{ \rightcan(\leftcan(\bim))\mid \bim =
    (\laut,\raut,\omega,\lambda,\rho), \sem{\bim} = f\text{ and }
    \rightcan(\laut_f)\finer \raut\finer \raut_f\}
    \]
    Before proving that this set is finite and is indeed a set of
    representatives, let us explain how to compute it. First, by~\cite{ReutenauerS91}, the
    canonical bimachine $\bim_f = (\leftcan_f(\raut_f),
    \raut_f,\omega_f,\lambda_f,\rho_f)$ (defined in Section~\ref{subsec:canbim}) is computable if $f$ is given
    by a transducer or a bimachine. By symmetry, so is the canonical
    bimachine with left and right automata $\laut_f$ and
    $\rightcan_f(\laut_f)$ respectively. Then, the computation of $X$
    is done as follows:
    \begin{enumerate}
      \item compute the canonical bimachine $\bim_f = (\leftcan_f(\raut_f),
        \raut_f,\omega_f,\lambda_f,\rho_f)$,
      \item pick a right automaton $\raut$ such that
        $\rightcan_f(\laut_f)\finer \raut\finer \raut_f$ (there are
        finitely many up to state renaming) and $\sem{\raut} =
        \dom(f)$,
      \item let $\bim = (\leftcan_f(\raut_f),
        \raut,\omega'_f,\lambda'_f,\rho_f)$ where the output functions $\omega'_f$ and
        $\lambda'_f$ are defined by
        $\omega'_f(\cla{u}_{\leftcan_f(\raut_f)}, \sigma,
        \cla{v}_{\raut}) = \omega_f(\cla{u}_{\leftcan_f(\raut_f)}, \sigma,
        \cla{v}_{\raut_f})$ and $\lambda'_f(\cla{v}_{\raut}) =
        \lambda_f(\cla{v}_{\raut_f})$.
        It is well-defined since $\raut\finer \raut_f$,

      \item compute $\bim' = \rightcan(\leftcan(\bim))$
        (Theorem~\ref{thm:ptime-min}) and add $\bim'$ to $X$,

      \item go back to step 2 as long as there is a right
        automaton $\raut$ still left to pick.
    \end{enumerate}

    \noindent It remains to prove that $X$ is finite, and
    that $X$ is a set of representatives. To show that $X$ is finite,
    it suffices to remark that given two bimachines $\bim_1$ and
    $\bim_2$ with the same right automaton $\raut$, and defining the
    same transduction, we have
    $\rightcan(\leftcan(\bim_1)) = \rightcan(\leftcan(\bim_2))$. It is
    direct by definition of the operation $\leftcan(\bim)$
    which ignores the left automaton of $\bim$ as well as its output
    functions, and only depends on its right automaton and $f$ (and
    symmetrically for the operation $\rightcan(.)$). Then, finiteness
    is due to the fact that only right automata $\raut$ such that
    $\rightcan(\laut_f)\finer \raut\finer\raut_f$ are considered, and
    there are finitely many of them.

    Finally, we show that $X$ is a set of representatives, i.e., for
    any minimal bimachine $\bim$, there exists $\bim'\in X$ such that
    $\bim\bimeq \bim'$. By Lemma~\ref{lem:bound-bim} and
    Proposition~\ref{prop:l-r-can}, if $\raut$ is the right automaton
    of $\bim$, then $\rightcan_f(\laut_f)\finer \raut\finer
    \raut_f$. Let $\bim' = \rightcan(\leftcan(\bim))$. Then clearly
    $\bim'\in X$. By Proposition~\ref{prop:minimalbim} we have
    $\bim\finer \rightcan(\leftcan(\bim)) = \bim'$, and by minimality
    of $\bim$, one obtains that $\bim\bimeq\bim'$.
\end{proof}

\subsection{Characterization of \texorpdfstring{$\var$}{C}-rationality and decision}

\subsubsection{Exhaustive search}

    If $\var$ is a decidable class of congruences, then Theorem~\ref{thm:finmin}
    implies that $\var$-rationality is decidable. Indeed, it suffices
    to compute a set of representatives of the minimal bimachines
    realizing a transduction $f$ (given for instance by a transducer),
    and then to test whether one of them is in $\var$.

\begin{lem}%
\label{lem:minimalC-bim}
Let $\var$ be a congruence class and let $f$ be a transduction.
Then $f$ is a $\var$-transduction if and only if one of its minimal bimachines is a $\var$-bimachine.
\end{lem}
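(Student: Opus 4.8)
The plan is to prove the two directions separately. The \textbf{if} direction is immediate: if some minimal bimachine of $f$ is a $\var$-bimachine, then $f$ is in particular realized by a $\var$-bimachine, and Corollary~\ref{cor:bim-trans} yields that $f$ is a $\var$-transduction.

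For the \textbf{only if} direction, suppose $f$ is a $\var$-transduction. By Corollary~\ref{cor:bim-trans}, $f$ is realized by some $\var$-bimachine $\bim = (\laut,\raut,\omega,\lambda,\rho)$, meaning $\approx_\laut\in\var$ and $\approx_\raut\in\var$. I would then run bimachine minimization on $\bim$ and set $\bim' = \rightcan(\leftcan(\bim))$. By Proposition~\ref{prop:minimalbim}, $\bim'$ is a minimal bimachine realizing $f$ and $\bim\finer\bim'$, so writing $\laut',\raut'$ for the automata of $\bim'$ we have $\laut\finer\laut'$ and $\raut\finer\raut'$. It then remains only to check that $\bim'$ is itself a $\var$-bimachine, which would exhibit the desired minimal $\var$-bimachine.

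The core of the argument is to push $\var$-membership from the finer automata of $\bim$ up to the coarser automata of $\bim'$. The left automaton $\laut'$ is accessible, since its states are the classes of the right congruence $\sim_L$, each reachable by reading a representative word; as $\laut\finer\laut'$ and $\laut$ is a $\var$-automaton, Proposition~\ref{prop:finer-seq} gives that $\laut'$ is a $\var$-automaton. Applying the symmetric version of Proposition~\ref{prop:finer-seq} to the right automaton (reading a right automaton right-to-left makes it an accessible deterministic automaton, and $\raut'$ is backward-accessible by the same representative-word argument), the relation $\raut\finer\raut'$ together with $\approx_\raut\in\var$ yields $\approx_{\raut'}\in\var$. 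Hence $\bim'$ is a minimal $\var$-bimachine realizing $f$.

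The main obstacle I anticipate is the mismatch between the two congruences attached to each automaton: the order $\finer$ on deterministic and right automata is defined through the one-sided congruences $\sim$, whereas membership in $\var$ is defined through the two-sided transition congruences $\approx$. Proposition~\ref{prop:finer-seq} is exactly what bridges this gap, by showing that a coarser one-sided congruence forces a coarser two-sided congruence on an accessible automaton; the care required is to verify its accessibility hypotheses for both $\laut'$ and $\raut'$ and to ensure that a fully symmetric statement is indeed available for right automata.
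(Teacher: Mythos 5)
Your proposal is correct and follows essentially the same route as the paper: the \emph{if} direction via Corollary~\ref{cor:bim-trans}, and the \emph{only if} direction by minimizing a $\var$-bimachine realizing $f$ through Proposition~\ref{prop:minimalbim} and observing that the resulting coarser minimal bimachine is still a $\var$-bimachine. The paper leaves that last transfer step implicit, whereas you justify it explicitly via Proposition~\ref{prop:finer-seq} (and its symmetric analogue for right automata), correctly checking the accessibility hypotheses; this is a legitimate filling-in of detail rather than a different argument.
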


\begin{proof}
If a minimal bimachine of $f$ is a $\var$-bimachine, then $f$ is a $\var$-transduction according to Corollary~\ref{cor:bim-trans}.
Conversely, let us assume that $f$ is realized by a $\var$-bimachine.
Then according to Proposition~\ref{prop:minimalbim} $\leftcan(\rightcan(\bim))$ is a minimal bimachine realizing $f$ and coarser than $\bim$.
Hence $\leftcan(\rightcan(\bim))$ is a minimal $\var$-bimachine.
\end{proof}

\begin{thm}%
\label{thm:decide}
Let $\var$ be a decidable congruence class.
Then, given a transducer, one can decide if it realizes a $\var$-transduction.
\end{thm}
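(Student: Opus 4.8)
The plan is to reduce $\var$-rationality to finitely many instances of the $\var$-membership problem, using the machinery of minimal bimachines built up in the previous subsections. Given a functional transducer $\trans$ realizing a transduction $f$, I would first invoke Theorem~\ref{thm:finmin} to compute a finite set $X$ of representatives of the minimal bimachines realizing $f$ (up to $\bimeq$). This is the only step that requires real work, and it has already been discharged: $X$ is finite because every minimal bimachine is squeezed between $\leftcan(\raut_f)$ and $\raut_f$ (resp. $\rightcan(\laut_f)$ and $\laut_f$) by Lemma~\ref{lem:bound-bim}, and it is computable from $\trans$ by the explicit enumeration described in the proof of Theorem~\ref{thm:finmin}.

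The second step is, for each $\bim = (\laut,\raut,\omega,\lambda,\rho) \in X$, to test whether $\bim$ is a $\var$-bimachine, \ie whether both $\laut$ and $\raut$ are $\var$-automata. For the left automaton this amounts to deciding whether its transition congruence $\approx_\laut$ belongs to $\var$; since $\approx_\laut$ is a congruence of finite index, it can be presented as the canonical morphism $\Sigma^* \rightarrow \Sigma^*/_{\approx_\laut}$ onto the (finite, computable) transition monoid of $\laut$, and deciding $\approx_\laut \in \var$ is then exactly an instance of the $\var$-membership problem, which terminates because $\var$ is decidable. The same test is applied to $\raut$. Note that this property does not depend on the chosen representative: if $\bim \bimeq \bim'$ then their left and right automata coincide up to state renaming, hence have the same transition congruences, so $\bim$ and $\bim'$ are simultaneously $\var$-bimachines or not.

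The algorithm then answers ``yes'' precisely when at least one $\bim \in X$ passes both tests. Correctness is immediate from Lemma~\ref{lem:minimalC-bim}: $f$ is a $\var$-transduction if and only if one of its minimal bimachines is a $\var$-bimachine, and $X$ is a complete set of representatives of these (up to $\bimeq$), which as observed preserves the $\var$-bimachine property. Termination is guaranteed since $X$ is finite and each membership test halts. I expect essentially no further obstacle: the genuine difficulty of the whole development --- bounding and effectively computing the minimal bimachines --- has been concentrated in Theorem~\ref{thm:finmin}, and the present statement is its payoff, assembling that result with the characterization of $\var$-rationality (Corollary~\ref{cor:bim-trans} and Lemma~\ref{lem:minimalC-bim}) and the assumed decidability of the $\var$-membership problem.
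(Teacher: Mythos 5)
Your proof is correct and follows exactly the paper's own argument: compute the finite set of minimal bimachines via Theorem~\ref{thm:finmin}, then use Lemma~\ref{lem:minimalC-bim} to reduce the question to testing whether one of them is a $\var$-bimachine, which the decidability of $\var$-membership settles. The extra details you supply (presenting the transition congruences as morphisms onto the transition monoids, and invariance of the $\var$-bimachine property under $\bimeq$) are correct refinements of what the paper leaves implicit.
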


\begin{proof}
According to Theorem~\ref{thm:finmin} we can compute the minimal bimachines of $f$, and from Lemma~\ref{lem:minimalC-bim} we only need to check if one of the minimal bimachines is a $\var$-bimachine.
\end{proof}

\subsubsection{Alternative characterization}

In this section we consider a slightly different characterization which only requires to compute one minimal bimachine to check $\var$-rationality.

Let $\var$ be a congruence class and let $f$ be a transduction.
We define, if it exists, $\laut_f^\var$ as the finest $\var$-automaton coarser than
$\leftcan(\raut_f)$ and finer than $\laut_f$. Let us define the right congruence
$\sim_f^\var=\bigsqcap\set{\sim_\laut\mid\ {\laut}\ \text{ is a
  }\var\text{-automaton s.t. }{{\leftcan(\raut_f)}}\finer {\laut}
}$. If ${\sim_f^\var} \finer {\sim_{\laut_f}}$ then we define $\laut_f^\var$ as the left automaton associated with $\sim_f^\var$ and recognizing $\dom(f)$.
We obtain the following characterization of $\var$-rationality:

\begin{lem}%
\label{lem:alt-characterization-C}
Let $\var$ be a congruence class and let $f$ be a transduction.
Then $f$ is a $\var$-transduction if and only if the two following conditions are satisfied:
\begin{itemize}
\item ${\sim_f^\var} \finer {\sim_{\laut_f}}$
\item $\rightcan(\laut_f^\var)$ is a $\var$-automaton
\end{itemize}
\end{lem}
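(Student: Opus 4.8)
The plan is to prove both directions through the characterization of $\var$-rationality by $\var$-bimachines (Corollary~\ref{cor:bim-trans}), exploiting the minimization operators $\leftcan,\rightcan$ and the bounds of Lemma~\ref{lem:bound-bim}. First I would record the key observation that, since $\var$ is closed under intersection, $\sim_f^\var$ is itself a $\var$-congruence, and in fact the \emph{finest} $\var$-congruence coarser than $\leftcan(\raut_f)$: any $\var$-automaton $\laut$ with $\leftcan(\raut_f)\finer\laut$ contributes $\sim_\laut$ to the defining intersection, whence $\sim_f^\var\finer\sim_\laut$. Consequently, when condition (a) holds, $\sim_f^\var$ is finer than $\sim_{\laut_f}$ and therefore (as $\laut_f$ recognizes $\dom(f)$) also recognizes $\dom(f)$, so that $\laut_f^\var$ is well-defined, recognizes $\dom(f)$, is a $\var$-automaton, and is the finest $\var$-automaton in the interval $[\leftcan(\raut_f),\laut_f]$.

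For the ($\Rightarrow$) direction, assume $f$ is $\var$-rational. By Corollary~\ref{cor:bim-trans} there is a $\var$-bimachine $\bim$ realizing $f$. I would then pass to $\bim^\ast=\rightcan(\leftcan(\bim))$, which is minimal (Proposition~\ref{prop:minimalbim}) and coarser than $\bim$; since both its automata are coarser than $\var$-automata, Proposition~\ref{prop:finer-seq} makes $\bim^\ast$ a minimal $\var$-bimachine, say with automata $\laut^\ast,\raut^\ast$. Lemma~\ref{lem:bound-bim} gives $\leftcan(\raut_f)\finer\laut^\ast$ and Proposition~\ref{prop:l-r-can} gives $\laut^\ast\finer\laut_f$, so $\laut^\ast$ is a $\var$-automaton in $[\leftcan(\raut_f),\laut_f]$; this yields condition (a), witnessing $\sim_f^\var\finer\sim_{\laut^\ast}\finer\sim_{\laut_f}$, and since $\laut_f^\var$ is the finest such automaton, $\laut_f^\var\finer\laut^\ast$. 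For condition (b), I would use that $\bim^\ast$ is a fixed point of $\rightcan(\leftcan(\cdot))$ and minimal, so the structural identities extracted in the proof of Lemma~\ref{lem:bound-bim} give $\raut^\ast=\rightcan(\laut^\ast)$; hence $\rightcan(\laut^\ast)$ is a $\var$-automaton. Applying Proposition~\ref{prop:min-inverse} to $\laut_f^\var\finer\laut^\ast$ gives $\rightcan(\laut^\ast)\finer\rightcan(\laut_f^\var)$, so $\rightcan(\laut_f^\var)$ is coarser than a $\var$-automaton and therefore a $\var$-automaton by Proposition~\ref{prop:finer-seq}, which is exactly (b).

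For the ($\Leftarrow$) direction, assume (a) and (b). By the opening remark, $\laut_f^\var$ is a well-defined $\var$-automaton recognizing $\dom(f)$ with $\laut_f^\var\finer\laut_f$. Starting from the canonical bimachine $(\laut_f,\rightcan(\laut_f),\dots)$ and refining its left automaton to $\laut_f^\var$ (licit since $\laut_f^\var\finer\laut_f$, the extra information being ignored by the outputs, exactly as in the proof of Proposition~\ref{prop:minimalbim}), I obtain a bimachine realizing $f$ with left automaton $\laut_f^\var$; applying right minimization together with Lemma~\ref{lem:correctness-bim} yields a bimachine $(\laut_f^\var,\rightcan(\laut_f^\var),\dots)$ still realizing $f$. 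Its left automaton is a $\var$-automaton by construction and its right automaton is a $\var$-automaton by (b), so it is a $\var$-bimachine, and Corollary~\ref{cor:bim-trans} concludes that $f$ is $\var$-rational.

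The hard part will be the ($\Rightarrow$) direction, and specifically establishing (b): it is not enough to know that $\bim^\ast$ has a $\var$ right automaton $\raut^\ast$, one must identify $\raut^\ast$ with $\rightcan(\laut^\ast)$ in order to transport $\var$-membership along $\laut_f^\var\finer\laut^\ast$ via Proposition~\ref{prop:min-inverse}. This identification rests on the minimality of $\bim^\ast$ and its being a fixed point of $\rightcan(\leftcan(\cdot))$, and care is needed that the accessibility hypotheses of Proposition~\ref{prop:finer-seq} (and its symmetric version for right automata) are satisfied by the canonically constructed automata. A secondary subtlety, easy to overlook, is that condition (a) is precisely what guarantees that $\sim_f^\var$ recognizes $\dom(f)$, so that $\laut_f^\var$ is a genuine left automaton for $f$ in the backward construction.
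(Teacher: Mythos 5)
Your proof is correct and follows essentially the same route as the paper's: the forward direction combines Lemma~\ref{lem:bound-bim} and Proposition~\ref{prop:l-r-can} to place a $\var$ left automaton in the interval $[\leftcan(\raut_f),\laut_f]$ (giving (a)), then obtains (b) by producing a bimachine with left automaton $\laut_f^\var$ and deducing that $\rightcan(\laut_f^\var)$ is coarser than a $\var$ right automaton; the backward direction is the paper's argument verbatim (refine the canonical bimachine's left automaton to $\laut_f^\var$, then right-minimize). The only cosmetic difference is that for (b) you pass through the identity $\raut^\ast=\rightcan(\laut^\ast)$ and Proposition~\ref{prop:min-inverse}, whereas the paper substitutes $\laut_f^\var$ directly into the minimal $\var$-bimachine and invokes Proposition~\ref{prop:l-r-min} — the same underlying substitution-and-minimization argument.
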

\begin{proof}
Let $f$ be a transduction.
We know from Lemma~\ref{lem:minimalC-bim} that $f$ is a
$\var$-transduction if and only if it is realized by a
minimal $\var$-bimachine. Now, let us assume that $f$ is realized by a
minimal $\var$-bimachine $\bim$ with left and right automata $\laut$ and
$\raut$ respectively.
We want to show that ${\sim_f^\var} \finer {\sim_{\laut_f}}$
and $\rightcan(\laut_f^\var)$ is a $\var$-automaton. By
Lemma~\ref{lem:bound-bim} we know that $\leftcan(\raut_f)\finer
\laut$. By definition of $\sim_f^\var$, it is finer than $\sim_\laut$ since $\laut$ is a
$\var$-automaton coarser than $\leftcan(\raut_f)$. By
Proposition~\ref{prop:l-r-can}, we have $\laut\finer \laut_f$ and
hence ${\sim_f^\var} \finer {\sim_{\laut_f}}$. It remains to show that
$\rightcan(\laut_f^\var)$ is a $\var$-automaton. First, there is a
bimachine $\bim'$ with left automaton $\laut_f^\var$ which realizes $f$,
obtained by substituting the left automaton $\laut$ of $\bim$ by
$\laut_f^\var$ (recall that $\laut_f^\var\finer \laut$) and changing
the output function so that the extra information given by
$\laut_f^\var$ is just ignored. Then, by minimizing the right
automaton of $\bim'$ (whose right automaton is $\raut$), one obtains
$\raut\finer \rightcan(\laut_f^\var)$. Since $\raut$ is a
$\var$-automaton, so is $\rightcan(\laut_f^\var)$.

Conversely, suppose that ${\sim_f^\var} \finer {\sim_{\laut_f}}$ and
$\rightcan(\laut_f^\var)$ is a $\var$-automaton. We show the existence
of a $\var$-bimachine realizing $f$. Let $\bim_f =
(\laut_f,\rightcan(\laut_f),\omega_f,\lambda_f,\rho_f)$ be a
canonical bimachine associated with $f$. We can turn this bimachine
into a $\var$-bimachine. First, since $\laut_f^\var\finer \laut_f$,
one can substitute in $\bim_f$ the left automaton $\laut_f$ by
$\laut_f^\var$, change its outputs so that they ignore the extra
information given by $\laut_f^\var$, and obtain a bimachine realizing
$f$ with $\laut_f^\var$ as left automaton. By applying once the right
minimization on this new bimachine, one obtains a $\var$-bimachine realizing
$f$ with $\laut_f^\var$ and $\rightcan(\laut_f^\var)$ as left and
right automata respectively.
\end{proof}
Note that we could define $\raut_f^\var$ symmetrically and have a similar characterization.

Now we define a bimachine $\bim_{f,\var}$, which will be a $\var$-bimachine if and only if $f$ is a $\var$-transduction.
Let $\laut_{f,\var}=\laut_f\sqcap \laut_f^\var$, if $\laut_f^\var$ exists, and $\laut_{f,\var}=\laut_f$ otherwise.
Then $\bim_{f,\var}$ is the bimachine obtained with left automaton $\laut_{f,\var}$, and right automaton $\rightcan(\laut_{f,\var})$ which is well defined since by definition, $\laut_{f,\var}\finer\laut_f$.

\begin{thm}
Let $\var$ be a congruence class and let $f$ be a transduction.
Then $f$ is a $\var$-transduction if and only if the bimachine $\bim_{f,\var}$ is a $\var$-bimachine.
\end{thm}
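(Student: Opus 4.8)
The plan is to reduce the statement to the alternative characterization of Lemma~\ref{lem:alt-characterization-C}, which asserts that $f$ is a $\var$-transduction if and only if (i) ${\sim_f^\var} \finer {\sim_{\laut_f}}$ (equivalently, $\laut_f^\var$ exists) and (ii) $\rightcan(\laut_f^\var)$ is a $\var$-automaton. I would then split into two cases according to whether $\laut_f^\var$ exists, computing in each case the left automaton $\laut_{f,\var}$ of $\bim_{f,\var}$ explicitly; recall that in both cases ${\laut_{f,\var}}\finer{\laut_f}$, so that $\rightcan(\laut_{f,\var})$ is well defined.

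First, suppose $\laut_f^\var$ exists, i.e. condition (i) holds. Since ${\laut_f^\var}\finer{\laut_f}$, the meet defining $\laut_{f,\var}$ collapses: ${\sim_{\laut_f}} \sqcap {\sim_f^\var} = {\sim_f^\var}$, so $\laut_{f,\var}=\laut_f^\var$ and hence $\bim_{f,\var}$ has $\laut_f^\var$ and $\rightcan(\laut_f^\var)$ as left and right automata. Now $\laut_f^\var$ is a $\var$-automaton by its very definition. Therefore $\bim_{f,\var}$ is a $\var$-bimachine if and only if its right automaton $\rightcan(\laut_f^\var)$ is a $\var$-automaton, which is exactly condition (ii). Since (i) is assumed, Lemma~\ref{lem:alt-characterization-C} makes (ii) equivalent to $f$ being a $\var$-transduction, and the equivalence follows in this case.

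Second, suppose $\laut_f^\var$ does not exist, i.e. (i) fails; then $\laut_{f,\var}=\laut_f$, so $\bim_{f,\var}$ has $\laut_f$ and $\rightcan(\laut_f)$ as automata. By Lemma~\ref{lem:alt-characterization-C}, the failure of (i) already implies that $f$ is \emph{not} a $\var$-transduction, so it remains to check that $\bim_{f,\var}$ is not a $\var$-bimachine. I would argue by contradiction: if $\bim_{f,\var}$ were a $\var$-bimachine, its left automaton $\laut_f$ would be a $\var$-automaton. Applying Proposition~\ref{prop:l-r-can} to the canonical bimachine $\bim_f$ (whose left automaton is $\leftcan(\raut_f)$) gives ${\leftcan(\raut_f)}\finer{\laut_f}$, so $\laut_f$ would be a $\var$-automaton coarser than $\leftcan(\raut_f)$. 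Then ${\sim_{\laut_f}}$ belongs to the family whose meet defines ${\sim_f^\var}$, whence ${\sim_f^\var}\finer{\sim_{\laut_f}}$, contradicting the failure of (i). Thus $\bim_{f,\var}$ is not a $\var$-bimachine, both sides of the equivalence are false, and the biconditional holds.

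The only genuinely delicate point is the claim, used implicitly already in Lemma~\ref{lem:alt-characterization-C}, that the finest $\var$-automaton coarser than $\leftcan(\raut_f)$ exists and is itself a $\var$-automaton whose right transition congruence is ${\sim_f^\var}$; this is what legitimizes the identification $\laut_{f,\var}=\laut_f^\var$. I would justify it by choosing finitely many $\var$-automata coarser than $\leftcan(\raut_f)$ whose right transition congruences intersect to ${\sim_f^\var}$ (finitely many suffice, as all congruences involved have finite index), observing that their product has two-sided transition congruence the $\sqcap$ of theirs (hence in $\var$, by closure under intersection) and right transition congruence exactly ${\sim_f^\var}$, and then invoking Proposition~\ref{prop:finer-seq} to transfer $\var$-membership from this product to the coarser accessible automaton $\laut_f^\var$. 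Everything else is bookkeeping around the definitions of $\laut_{f,\var}$ and $\bim_{f,\var}$ and a direct appeal to Lemma~\ref{lem:alt-characterization-C}.
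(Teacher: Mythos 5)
Your proof is correct and follows essentially the paper's own route: both arguments hinge on Lemma~\ref{lem:alt-characterization-C} and on identifying $\laut_{f,\var}$ with $\laut_f^\var$ when the latter exists. The only differences are minor: the paper dispatches the right-to-left direction in one line via Corollary~\ref{cor:bim-trans} (since $\bim_{f,\var}$ realizes $f$ by construction), where you instead re-derive it through a case split and Proposition~\ref{prop:l-r-can}, and you additionally spell out the fact, left implicit in the paper's definition of $\laut_f^\var$, that this automaton is itself a $\var$-automaton (via the finite product argument and Proposition~\ref{prop:finer-seq}), which is a sound and welcome clarification.
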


\begin{proof}
The proof mainly relies on the previous lemma.
First, if $\bim_{f,\var}$ is a $\var$-transduction, since it realizes $f$ by definition, then $f$ is a $\var$-transduction by Corollary~\ref{cor:bim-trans}.
Conversely, if $f$ a $\var$-transduction then by Lemma~\ref{lem:alt-characterization-C}, we have both:
\begin{itemize}
\item ${\sim_f^\var} \finer {\sim_{\laut_f}}$
\item $\rightcan(\laut_f^\var)$ is a $\var$-automaton
\end{itemize}
Hence $\laut_{f,\var}= \laut_f^\var$ which is a $\var$-automaton and $\rightcan(\laut_{f,\var})=\rightcan(\laut_f^\var)$ which also is a $\var$-automaton, hence $\bim_{f,\var}$ is a $\var$-bimachine.
\end{proof}

\begin{rem}
Among all the minimal bimachines one has to test in order to check whether a transduction is $\var$-rational, $\bim_{f,\var}$ has the coarsest right automaton.
Note that when $\var=\fin$, the class of finite congruences, we have $\bim_{f,\fin}=\bim_f$.
\end{rem}



\section{Characterization of aperiodic transductions}%
\label{sec:aperiodic}

In the aperiodic case (that is, for $\var=\ap$),
we give a stronger characterization than for arbitrary congruence classes, namely that any minimal bimachine of an aperiodic transduction is aperiodic.
This gives us a \pspace algorithm for deciding aperiodicity of a transduction given by a bimachine.

\subsection{Characterization of aperiodicity}
Here we show that the canonical bimachine of an aperiodic transduction is aperiodic.
As it was shown above in Proposition~\ref{prop:l-r-can}, the canonical right automaton of an aperiodic transduction is always aperiodic.
The difficulty lies in showing that the left automaton of the canonical bimachine is also aperiodic, for an aperiodic transduction.
It relies on the decomposition of a transduction as the composition of a sequential and a \emph{right sequential} transduction, which is a characterization of rational transductions~\cite{ElgotM65}.
A right sequential transduction is simply a transduction realized by a transducer with an underlying right automaton, which can be seen as a sequential transducer but reading and writing from right to left.

Let us define $\lab_\raut$ the \emph{labelling transduction} of a right automaton $\raut$ over the alphabet $\Sigma$.
Let $Q$ be the set of states of $\raut$ and let $\Sigma_\raut=\set{\sigma_q |\ \sigma\in \Sigma ,\ q\in Q}$.
We define $\lab_\raut:\Sigma^*\rightarrow \Sigma_\raut^*$ as the transduction realized by the right-sequential transducer $\trans=\tuple{\raut,\out,\bar\epsilon,\bar\epsilon}$ where $\bar\epsilon$ is the constant function which maps any element to $\epsilon$ and $\out(p,\sigma,q)=\sigma_q$.
We also define $\ell_f$, the unique transduction such that
$f=\ell_f\circ\lab_{\raut_f}$ and
whose domain is $\{\lab_\raut(u)\ |\ u\in\dom(f)\}$.
\begin{prop}
$\ell_f$ is a sequential transduction.
\end{prop}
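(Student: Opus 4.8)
The plan is to read off a sequential transducer for $\ell_f$ from the canonical bimachine $\bim_f=\tuple{\leftcan_f(\raut_f),\raut_f,\bout_f,\lfinal_f,\rfinal_f}$, which realizes $f$. The crucial observation is that for an input $u$ the word $\lab_{\raut_f}(u)$ annotates each letter $u[i]$ with the state $\cla{v}_{\raut_f}$ that $\raut_f$ reaches on the factor $v$ of $u$ lying strictly to the right of position $i$ --- that is, with exactly the look-ahead state $r_i$ that $\bim_f$ uses to emit its output at position $i$. Moreover, erasing the annotations projects $\lab_{\raut_f}(u)$ back to $u$, so $\lab_{\raut_f}$ is injective and $\ell_f$ is a genuine function; it remains to show it can be computed deterministically from left to right.

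Concretely, I would take as underlying automaton of the transducer a deterministic automaton over the enriched alphabet $\Sigma_{\raut_f}$ that, on an annotated letter $\sigma_r$, performs the (deterministic) transition of $\leftcan_f(\raut_f)$ on $\sigma$ and ignores $r$ for the state update, using $r$ only to select the output $\bout_f(l,\sigma,r)$ where $l$ is the current left state. The right-final function $\rfinal_f$ of $\bim_f$ depends only on the left state and is used verbatim as the final output. The left-final contribution $\lfinal_f(\cla u_{\raut_f})$ depends on the first letter of $u$; to produce it I would add a fresh initial state with empty initial output whose first transition on $\sigma_r$ emits $\lfinal_f(p)\,\bout_f(l_0,\sigma,r)$, where $p$ is the unique state with $(p,\sigma,r)\in\Delta_{\raut_f}$ (backward determinism of $\raut_f$); the empty-input case $\cla\epsilon_{\raut_f}=r_0$ is handled by the final output of this fresh state.

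Correctness then reduces to unfolding the multiplicativity of $\bout_f$: writing $l_i$ for the left state reached after $u[1]\cdots u[i]$ and $r_i$ for the look-ahead states above, the extension rule gives $\bout_f(l_0,u,r_0)=\bout_f(l_0,u[1],r_1)\cdots\bout_f(l_{n-1},u[n],r_n)$, so the run of the transducer over $\lab_{\raut_f}(u)$ outputs precisely $\lfinal_f(\cla u_{\raut_f})\,\bout_f(l_0,u,r_0)\,\rfinal_f(l_n)=\sem{\bim_f}(u)=f(u)=\ell_f(\lab_{\raut_f}(u))$. To pin down the domain as $\{\lab_{\raut_f}(u)\mid u\in\dom(f)\}$ I would intersect the automaton with a deterministic recognizer of this rational set: reading left to right it suffices to check that consecutive annotations are compatible with the transitions of $\raut_f$, that the last annotation is $r_0$, and that $\cla u_{\raut_f}$ is final --- all local checks on the annotated input --- which rejects the ill-formed annotations on which $\ell_f$ is undefined.

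The single delicate point is the bookkeeping needed to reconcile the right-to-left reading convention of $\raut_f$ and $\lab_{\raut_f}$ with the left-to-right run of the transducer, so that the annotation carried by $u[i]$ is exactly the look-ahead $r_i$ consumed by $\bout_f$, together with the correct treatment of the first-letter-dependent output $\lfinal_f$ and of the empty input. Once these conventions are fixed every verification is routine and uses only that $\bim_f$ realizes $f$ and that $\bout_f$ is multiplicative.
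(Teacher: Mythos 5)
Your proof is correct and takes essentially the same approach as the paper: the paper also reads a sequential transducer off the canonical bimachine, with states $(\cla u_{L},\cla v)$ pairing the state of $\leftcan_f(\raut_f)$ on the prefix with the expected class of the remaining suffix, so the annotation-consistency (domain) check that you delegate to a separate product automaton is simply folded into the transition relation. Your handling of $\lfinal_f$ via a fresh initial state (recovering $\cla{\sigma v}$ from $(\sigma,r)$ by backward determinism) and of the empty input via that state's final output matches the paper's construction exactly.
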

\begin{proof}
This proposition is a corollary of the existence of the canonical bimachine $\bim_f=\tuple{\leftcan(\raut_f),\raut_f,\out,\lfinal,\rfinal}$.
Intuitively, the right automaton does a right to left pass on the input word, annotating each letter by the current state.
Then the left automaton can produce the outputs deterministically, over this enriched alphabet.
Let $\sim_L$ denote the right congruence of $\leftcan(\raut_f)$, let $\cla u$ denote $\cla u_{\leftcong_f}$ as in~\ref{subsubsec:lr-minimization} and let ${[u]}_L$ be used for ${[u]}_{\sim_L}$.
We define $\trans_\ell=\tuple{\aut_\ell,\out_\ell,\bar\epsilon,\final_\ell}$ with $\aut_\ell=\tuple{Q_\ell,\delta_\ell,\set{q_0},F_\ell}$ a sequential transducer realizing $\ell_f$.

\begin{itemize}
\item $Q_\ell=\set{(\cla u_L,\cla v)}\uplus \set{q_0}$
\item $\delta_\ell=\set{((\cla u_L,\cla {\sigma v}),\sigma_{\cla v},(\cla {u\sigma}_L,\cla v))}\cup\set{(q_0,\sigma_{\cla v},(\cla \sigma_L,\cla v))}$
\item $F_\ell=\set{(\cla u_L,\cla \epsilon)|\ u\in\dom(f)}$ and also $q_0\in F_\ell$ if $\epsilon\in \dom(f)$
\item$ \out_\ell((\cla u_L,\cla {\sigma v}),\sigma_{\cla v},(\cla {u\sigma}_L,\cla v))=\bout(\cla u_L,\sigma,\cla v) $

and $\out_\ell(q_0,\sigma_{\cla v},(\cla \sigma_L,\cla v))=\lfinal(\cla {\sigma v})\bout(\cla \epsilon,\sigma,\cla v)$
\item $\final (\cla u_L,\cla \epsilon)=\rfinal(\cla u_L)$ and $\final(q_0)=\lfinal(\cla\epsilon)\rfinal(\cla \epsilon_L)$ if $\epsilon\in \dom(f)$
\end{itemize}
Hence $\ell_f$ is a sequential transduction.
\end{proof}

Now we have to show that $\ell_f$ is also an aperiodic transduction whenever $f$ is.
\begin{prop}
If $f$ is an aperiodic transduction, then $\ell_f$ is also aperiodic.
\end{prop}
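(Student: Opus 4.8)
The plan is to exploit the factorisation $f=\ell_f\circ\lab_{\raut_f}$ in reverse. Since $\lab_{\raut_f}$ is injective, its inverse is the \emph{erasing morphism} $\pi:\Sigma_{\raut_f}^*\to\Sigma^*$ that forgets the state annotations, and hence $\ell_f(\alpha)=f(\pi(\alpha))$ for every $\alpha$ in $\dom(\ell_f)=\{\lab_{\raut_f}(u)\mid u\in\dom(f)\}$. The idea is therefore to pull an aperiodic machine for $f$ back along $\pi$, and to restrict its domain to the correctly annotated words. Recall first that, as noted above, the canonical right automaton $\raut_f$ of an aperiodic transduction is itself aperiodic: it is coarser than the right automaton of any aperiodic bimachine realizing $f$ by Proposition~\ref{prop:l-r-can}, and $\ap$ is closed under coarsening.

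Concretely, since $f$ is aperiodic, Corollary~\ref{cor:bim-trans} provides an $\ap$-bimachine $\bim=\tuple{\laut,\raut,\bout,\lfinal,\rfinal}$ over $\Sigma$, so that both $\laut$ and $\raut$ are aperiodic. I would build a bimachine $\bim'$ over $\Sigma_{\raut_f}$ realizing $\ell_f$ whose left (resp. right) automaton runs $\laut$ (resp. $\raut$) on the projection $\pi(\alpha)$, and whose output on a letter $\sigma_q$ is simply $\bout(l,\sigma,r)$, i.e. the annotation is read but erased before producing the output. Such a machine computes $f(\pi(\alpha))$; it only remains to ensure that its domain is exactly $\dom(\ell_f)$, which requires recognising the language $L_{\mathrm{valid}}=\{\lab_{\raut_f}(u)\mid u\in\Sigma^*\}$ of correctly annotated words. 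Since $\raut_f$ is backward deterministic, an annotation is correct precisely when consecutive annotations are compatible with $\Delta_{\raut_f}$ and the boundary annotation is the expected one; this is a conjunction of constraints on factors of length two and on its endpoints, so $L_{\mathrm{valid}}$ is star-free, hence aperiodic. I would then refine both automata of $\bim'$ by an aperiodic automaton for $L_{\mathrm{valid}}$, obtaining a genuine bimachine of domain $L_{\mathrm{valid}}\cap\pi^{-1}(\dom(f))=\dom(\ell_f)$ that realizes $\ell_f$.

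To conclude that $\bim'$ is an $\ap$-bimachine I would use the key observation that pulling an aperiodic congruence back along a morphism preserves aperiodicity: if $M$ is aperiodic with $m^n=m^{n+1}$ for all $m$, realized by a morphism $\mu$, then $\mu(\pi(\alpha^n))=\mu(\pi(\alpha))^n=\mu(\pi(\alpha))^{n+1}=\mu(\pi(\alpha^{n+1}))$ for every $\alpha$, with the same $n$. Hence the pullbacks of $\approx_\laut$ and $\approx_\raut$ along $\pi$ are aperiodic, and $L_{\mathrm{valid}}$ together with $\pi^{-1}(\dom(f))$ are recognised aperiodically; since a congruence class is closed under intersection, the transition congruences ${\approx_{\laut'}}$ and ${\approx_{\raut'}}$, being meets of aperiodic congruences, are aperiodic. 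Thus $\bim'$ is an $\ap$-bimachine realizing $\ell_f$, and by Corollary~\ref{cor:bim-trans} the transduction $\ell_f$ is $\ap$-rational, i.e. aperiodic. Alternatively, since $\ell_f$ is sequential, Theorem~\ref{thm:a-seq} would even yield that it is $\ap$-sequential.

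The main obstacle I anticipate is the treatment of the domain: the pullback by itself realizes an extension of $\ell_f$ defined on all words whose projection lies in $\dom(f)$, and one must cut this down to the correctly annotated words \emph{without} destroying aperiodicity. The crux is therefore the claim that $L_{\mathrm{valid}}$ is star-free, which relies on the backward determinism and aperiodicity of $\raut_f$ and lets the correctness condition be expressed purely in terms of length-two factors and endpoints. The remaining verifications, namely that $\bim'$ indeed realizes $\ell_f$ and that meets of aperiodic congruences stay aperiodic, are routine.
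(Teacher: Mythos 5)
Your proof is correct, but it takes a genuinely different route from the paper's. The paper argues at the level of transducers: it takes an aperiodic functional transducer realizing $f$ with underlying automaton $\aut$, forms the product of $\aut$ with $\raut_f$ (aperiodic by Proposition~\ref{prop:trans-rightcan}), and then relabels each transition $(p,\cla{\sigma u})\xrightarrow{\sigma}(q,\cla{u})$ into $(p,\cla{\sigma u})\xrightarrow{\sigma_{\cla{u}}}(q,\cla{u})$, so that validity of the annotation is enforced by the transition structure itself. The price is that relabelling changes the alphabet, so ``a product of aperiodic automata is aperiodic'' no longer applies, and the heart of the paper's proof is a direct counter-freeness computation: a loop $(p,\cla{u_1})\xrightarrow{u^n}(q,\cla{u_2})$ over an annotated word $u$ forces $\cla{u_1}=\cla{u_2}$, after which aperiodicity of $\aut$ on the projection $\overline{u}$ yields the loop on $u^{n+1}$. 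You instead work with bimachines and split the difficulty into two independent and individually easy pieces: the pullback of an aperiodic congruence along the erasing morphism $\pi$ is aperiodic (compose morphisms into the same aperiodic monoid, with the same exponent $n$), and the validity language $L_{\mathrm{valid}}$ is star-free because backward determinism of $\raut_f$ reduces correctness of an annotation to conditions on length-two factors and on the last letter; closure of $\ap$ under intersection and coarser congruences then finishes. Your argument is more modular and localizes exactly where the annotated alphabet causes trouble; what the paper's computation buys in exchange is self-containedness, since your step from star-freeness of $L_{\mathrm{valid}}$ to aperiodic deterministic and co-deterministic automata recognizing it invokes Sch\"utzenberger's theorem, whereas the paper needs nothing beyond the definitions. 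Two minor points: aperiodicity of $\raut_f$ is not actually needed for the star-freeness of $L_{\mathrm{valid}}$ (backward determinism alone suffices --- one can even check directly that the annotation-checking automaton satisfies $\alpha^2=\alpha^3$ in its transition monoid), and for completeness you should spell out the final output functions of $\bim'$ and the fact that both its automata (each refined by an aperiodic automaton for $L_{\mathrm{valid}}$) recognize exactly $\dom(\ell_f)$, which is routine.
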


\begin{proof}
Let $\trans$ be an aperiodic transducer realizing $f$ with an underlying automaton $\aut=\tuple{Q,\Delta,I,F}$.
We define $\trans'$, a transducer with underlying automaton $\aut'=\tuple{Q',\Delta',I',F'}$ defined as the product of $\aut$ and $\raut_f$:
\begin{itemize}
\item $Q'=Q\times (\Sigma/_{\leftcong_f})$
\item $\Delta'=\set{((p,\cla{\sigma u}),\sigma,(q,\cla u))|\ (p,\sigma,q)\in \Delta}$
\item $I'=I\times (\dom(f)/_{\leftcong_f})$
\item $F'=F\times\set{\cla \epsilon}$
\end{itemize}

\noindent
The outputs of $\trans'$ are exactly the same as the ones of $\trans$, ignoring the extra information given by $\raut_f$.
The transducer $\trans'$ can thus be seen as $\trans$ with states enriched with the look-ahead information of $\raut_f$.
The automaton $\aut'$ is clearly aperiodic since it is the product of two aperiodic automata (recall that $\raut_f$ is aperiodic from Proposition~\ref{prop:trans-rightcan}).

Now we modify the automaton $\aut'$ to obtain the automaton $\aut_\ell$ of a transducer $\trans_\ell$ realizing $\ell_f$.
For each transition $(p,\cla{\sigma u})\xrightarrow\sigma_{\aut'}(q,\cla u)$ we set $(p,\cla{\sigma u})\xrightarrow{\sigma_{\cla u}}_{\aut_\ell}(q,\cla u)$.
By construction we have $f=\ell_f\circ\lab_{\raut_f}$. We have left to show that $\aut_\ell$ is aperiodic.

Let $u\in \Sigma_{\raut_f}^*$, and let $\overline \centerdot: \Sigma_{\raut_f}^*\rightarrow \Sigma^*$ denote the natural projection morphism.
Let us assume that $(p,\cla{u_1})\xrightarrow{u^n}_{\aut_\ell}(q,\cla{u_2})$ for $n$.
By definition of $\aut_\ell$, we have $\cla{u_1}=\cla{\overline u ^{n}u_2}$ and we also have $u[|u|]=\sigma_{\cla{u_2}}$ for some letter $\sigma$, hence $\cla{\overline u u_2}=\cla{u_2}$ and $\cla{u_1}=\cla{u_2}$.
By aperiodicity of $\aut$ we have, for a large enough $n$, $p\xrightarrow{\overline u^{n+1}}_{\aut}q$.
Finally we obtain $(p,\cla{u_1})\xrightarrow{u^{n+1}}_{\aut_\ell}(q,\cla{u_1})$ and ${\aut_\ell}$ is aperiodic.
\end{proof}

\begin{cor}%
\label{cor:comp-ap}
If $f$ is aperiodic, then $\ell_f$ is an $\ap$-sequential transduction.
\end{cor}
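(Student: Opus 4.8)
The plan is to obtain this corollary with almost no extra work, simply by chaining together the two propositions established immediately above with Theorem~\ref{thm:a-seq}. First I would recall that the preceding proposition already shows, via the explicit construction of the sequential transducer $\trans_\ell$ realizing $\ell_f$, that $\ell_f$ is a sequential transduction. This part requires no further argument.

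Next I would invoke the second preceding proposition, which states that whenever $f$ is an aperiodic transduction, $\ell_f$ is aperiodic as well. Here the point to keep in mind is the terminological identification set up in Section~\ref{sec:rational_languages}: an \emph{aperiodic transduction} is by definition one realizable by a functional $\ap$-transducer, that is, an $\ap$-rational transduction. Thus, under the hypothesis that $f$ is aperiodic, the proposition yields directly that $\ell_f$ is $\ap$-rational.

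Finally I would combine these two observations: $\ell_f$ is simultaneously sequential and $\ap$-rational. Theorem~\ref{thm:a-seq} asserts that a sequential transduction is $\ap$-sequential if and only if it is $\ap$-rational, so applying its ``only if'' implication to $\ell_f$ gives that $\ell_f$ is $\ap$-sequential, which is exactly the claim.

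Since the statement is an immediate consequence of results already proved, there is no genuine obstacle to overcome. The only delicate point is the bookkeeping of definitions, namely making sure that the notion of ``aperiodic transduction'' appearing in the conclusion of the second proposition coincides with the ``$\ap$-rational'' hypothesis required by Theorem~\ref{thm:a-seq}; once this is noted, the two propositions and the theorem fit together directly.
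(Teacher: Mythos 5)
Your proposal is correct and matches the paper's own proof, which is exactly the one-line argument of chaining the two preceding propositions with Theorem~\ref{thm:a-seq}: $\ell_f$ is sequential, it is $\ap$-rational since $f$ is aperiodic, hence it is $\ap$-sequential. Your explicit remark that ``aperiodic transduction'' means ``$\ap$-rational'' is the right bookkeeping and is implicit in the paper's version.
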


\begin{proof}
  Using the two previous propositions, and Theorem~\ref{thm:a-seq} we show the corollary.
\end{proof}

From the decomposition of the previous Corollary, we can show that the canonical bimachine preserves aperiodicity.
\begin{thm}%
\label{thm:can-bim-ap}
A transduction is aperiodic if and only if its canonical bimachine is aperiodic.
\end{thm}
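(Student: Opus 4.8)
The plan is to prove the two directions separately, with the forward implication (aperiodicity of $f$ forcing aperiodicity of $\bim_f$) carrying essentially all the work. The backward direction is immediate: if $\bim_f$ is aperiodic, i.e. both $\leftcan(\raut_f)$ and $\raut_f$ are $\ap$-automata, then $\bim_f$ is an $\ap$-bimachine realizing $f$, and Corollary~\ref{cor:bim-trans} (instantiated with $\var = \ap$) yields that $f$ is $\ap$-rational, that is, aperiodic.

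For the forward direction, assume $f$ is aperiodic and fix an aperiodic transducer realizing $f$ with underlying automaton $\aut$. First, the right automaton $\raut_f$ is aperiodic: Proposition~\ref{prop:trans-rightcan} gives ${\approx_\aut}\finer{\leftcong_f}$, and since $\ap$ is closed under coarser congruences, $\leftcong_f$ --- hence $\raut_f$ --- is aperiodic. Because $\ap$ is closed under intersection and ${\approx_{\bim_f}} = {\approx_{\leftcan(\raut_f)}}\sqcap{\approx_{\raut_f}}$, it now suffices to prove that the left automaton $\leftcan(\raut_f)$ is aperiodic.

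I would establish this through the decomposition $f = \ell_f\circ\lab_{\raut_f}$. By Corollary~\ref{cor:comp-ap}, $\ell_f$ is $\ap$-sequential, so by Theorem~\ref{thm:v-seq} its minimal sequential transducer has an aperiodic underlying automaton; let $m_0$ be an aperiodicity threshold for this automaton and let $n$ be one for $\raut_f$. Writing $\sim_L$ for the right congruence whose classes are the states of $\leftcan(\raut_f)$, aperiodicity of $\leftcan(\raut_f)$ reduces to finding a uniform $N$ with $wu^N\sim_L wu^{N+1}$ for all words $u,w$. Fix $u,w$ and a look-ahead class $r = \cla z_{\raut_f}$, and let $\lab^r(p)$ denote the annotation of $p$ in which each position carries the $\raut_f$-state of its suffix completed by a word of class $r$. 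The crucial observation is that the residual output $\widehat f_{\cla z_{\raut_f}}(p)^{-1}f(pz)$ depends only on $r$ and on the state of the minimal automaton of $\ell_f$ reached by reading $\lab^r(p)$: indeed $\widehat f_{\cla z_{\raut_f}}(p) = \widehat{\ell_f}(\lab^r(p))$, and $\widehat{\ell_f}(x)^{-1}\ell_f(xY)$ depends only on $\cla x_{\sim_{\ell_f}}$ by the very definition of the syntactic congruence of $\ell_f$.

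It then remains to read off the annotated prefixes of $wu^N$ and $wu^{N+1}$. Scanning $wu^N$ from the right with trailing look-ahead $r$, the look-ahead entering the $k$-th block $u$ counted from the right is $\cla{u^{k-1}z}_{\raut_f}$, which stabilizes to a constant value $r^\ast = \cla{u^nz}_{\raut_f}$ as soon as $k-1\geq n$, by aperiodicity of $\raut_f$. Consequently $\lab^r(wu^N) = A\,(L^\ast)^{N-n}\,B$ and $\lab^r(wu^{N+1}) = A\,(L^\ast)^{N+1-n}\,B$ with the \emph{same} blocks $A = \lab^{r^\ast}(w)$, $L^\ast = \lab^{r^\ast}(u)$ and transient tail $B$. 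Applying aperiodicity of the minimal automaton of $\ell_f$ to the single word $L^\ast$ shows that for $N-n\geq m_0$ both annotated prefixes reach the same state, whence the residual outputs coincide; together with the domain condition, which holds because $u^N\approx_{\raut_f} u^{N+1}$ for $N\geq n$, this gives $wu^N\sim_L wu^{N+1}$. The threshold $N = n + m_0$ is uniform in $u,w,z$, so $\leftcan(\raut_f)$ is aperiodic and $\bim_f$ is an $\ap$-bimachine. The main obstacle is exactly this combinatorial core: tracking how the look-ahead annotation produced by $\raut_f$ behaves across a repeated block, and isolating the dependence of the residual output on a single state of the minimal automaton of $\ell_f$, so that the two separate aperiodicities merge into one uniform witnessing power.
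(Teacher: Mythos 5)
Your proof is correct, and its first half coincides with the paper's: the backward direction via Corollary~\ref{cor:bim-trans}, aperiodicity of $\raut_f$ via Proposition~\ref{prop:trans-rightcan} and closure of $\ap$ under coarsening, the reduction to aperiodicity of $\leftcan(\raut_f)$, and the decomposition $f=\ell_f\circ\lab_{\raut_f}$ with Corollary~\ref{cor:comp-ap} are exactly the paper's ingredients. Where you genuinely diverge is the transfer of aperiodicity from $\ell_f$ to $\leftcan(\raut_f)$. The paper does it indirectly: from an (arbitrary) aperiodic sequential transducer for $\ell_f$ it forms the projected nondeterministic automaton $\overline\aut_\ell$, proves it aperiodic by lifting loops back to annotated words (the counterpart of your stabilization of look-aheads across blocks), takes the subset construction of $\overline\aut_\ell\times\raut_f$ as the left automaton $\mathcal D$ of a new bimachine realizing $f$ (which requires checking its outputs are well defined), and concludes by closure of aperiodicity under product and determinization together with $\mathcal D\finer\leftcan(\raut_f)$ from Proposition~\ref{prop:l-r-min}. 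You instead argue directly on the congruence $\sim_L$ defining $\leftcan(\raut_f)$: the identity $\widehat f_{\cla z_{\raut_f}}(p)=\widehat{\ell_f}(\lab^r(p))$, combined with condition (2) of Choffrut's congruence for $\ell_f$, reduces $wu^N\sim_L wu^{N+1}$ to the two annotated prefixes $A(L^\ast)^{N-n}B$ and $A(L^\ast)^{N+1-n}B$ reaching the same state of the minimal sequential transducer of $\ell_f$, which yields the explicit uniform threshold $N=n+m_0$. Your route buys directness (no auxiliary bimachine, no well-definedness check, no subset construction) and an explicit bound; its costs are twofold. First, you must use the \emph{minimal} sequential transducer of $\ell_f$, so you need the algebraic content of the \emph{proof} of Theorem~\ref{thm:v-seq} (an $\ap$-sequential function has an aperiodic minimal transducer), not the decidability statement you cite --- whereas the paper can work with any aperiodic transducer for $\ell_f$. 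Second, your ``crucial observation'' deserves a proof rather than an ``indeed'': one must check that the words of $\dom(\ell_f)$ extending $\lab^r(p)$ are exactly the $\lab_{\raut_f}(pv)$ with $v\in\cla z_{\raut_f}$, which holds for $p\neq\epsilon$ because the annotation of the last position of $p$ pins the class of the continuation to $r$ (the case $p=\epsilon$ is harmless, as then $wu^N=wu^{N+1}$). With these two points spelled out, your argument is a complete and somewhat more elementary alternative to the paper's.
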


\begin{proof}
If the canonical bimachine is aperiodic then the transduction is aperiodic, by Corollary~\ref{cor:bim-trans}.
Conversely, let $f$ be an aperiodic transduction.
From Proposition~\ref{prop:trans-rightcan}, $\raut_f$ is aperiodic and according to Corollary~\ref{cor:comp-ap} there exists $\trans_\ell$ an aperiodic sequential transducer realizing $\ell$ such that $f=\ell\circ\lab_{\raut_f}$.
From $\trans_\ell=\tuple{\aut_\ell,\out_\ell,\init_\ell,\final_\ell}$, we will construct an aperiodic bimachine $\bim=\tuple{\mathcal D,\raut_f,\bout,\lfinal,\rfinal}$ realizing $f$.
This will conclude our proof since, according to Proposition~\ref{prop:l-r-min}, $\mathcal D\finer\leftcan(\raut_f)$, and the aperiodicity of $\mathcal D$ entails the one of $\leftcan(\raut_f)$.

The input alphabet of $\aut_\ell$ is $\Sigma_{\raut_f}$ so first we define $\overline \aut_\ell$ the automaton obtained by projecting all the input letters of the transitions.
The problem is that $\overline \aut_\ell$ is not deterministic and by determinizing it, some information which is needed to define the outputs of the bimachine could be lost.
The solution is to first take the product of $\overline \aut_\ell$ and $\raut_f$ and determinize that automaton and this will yield the automaton $\mathcal D$.

Let us detail this construction and show it is correct.
Let $\aut_\ell=\tuple{Q_\ell,\Delta_\ell,\set{q_0}, F_\ell}$, we define naturally $\overline\aut_\ell=\tuple{Q_\ell,\overline\Delta_\ell,\set{q_0}, F_\ell}$ with:
$\overline\Delta_\ell=\set{(p,\overline \sigma ,q)|\ (p, \sigma ,q)\in \Delta }$.
We finally define $\mathcal D$ as the deterministic automaton obtained from $\overline\aut_\ell\times\raut_f$ by subset construction.
States of $\mathcal D$ are thus subsets of $Q_\ell\times\Sigma^*/_{\leftcong_f}$.
The output function $\bout$ is defined by:

\[\bout(\set{(p_1,\cla{u_1}),\ldots,(p_n,\cla{u_n})},\sigma,\cla v)=\out_\ell(p_i,\sigma_{\cla v},q)\quad \text{s.t.}\ \cla{\sigma v}=\cla {u_i}\]
Let us show that such a state $p_i$ is unique which means that the outputs are well-defined.
We assume, for contradiction, that there exist two states $p_i,p_j$ such that $\cla {u_i}=\cla{\sigma v}=\cla {u_j}$.
Let $w$ be a word which reaches both $p_i$ and $p_j$ in $\overline\aut_\ell$, we can define a word $z$ over $\Sigma_{\raut_f}$ such that $\overline z=w$ and for any integer $k\in \set{1,\ldots,|w|}$, $z[k]=w{[k]}_c$ where $c$ is the class of the word $w[k+1:|w|]u_i$. Hence we obtain both $q_0\xrightarrow{z}_{\aut_\ell}p_i$ and $q_0\xrightarrow{z}_{\aut_\ell}p_j$ which is in contradiction with the deterministic nature of $\aut_\ell$.

The final output functions are defined by:
\[\lfinal(\cla u)=\init_\ell(q_0)\quad \rfinal(\set{(p_1,\cla{u_1}),\ldots,(pn,\cla{u_n})})=\final_\ell(p_i)\quad \text{s.t.}\ \cla{u_i}=\cla {\epsilon}\]
Similarly, the right final output function is well-defined since $\aut_\ell$ is deterministic.

It remains to show that $\overline\aut_\ell$ is aperiodic. The conclusion will follow since aperiodicity is stable by product of automata and subset construction.
Let $u$ be a word such that $p\xrightarrow{u^n}_{\overline\aut_\ell}q$ for some integer $n$.
Let $w$ be a word which reaches a final state of $\overline\aut_\ell$ from $q$.
Let us define the word $z$ over $\Sigma_{\raut_f}$ by $\overline z =u^n$ and for $i\in\set{1,\ldots,|z|}$, $z[i]=u^n{[i]}_c$ where $c$ is the class of the word $u^n[i+1:|z|]w$ in $\raut_f$.
We have by definition that $p\xrightarrow{z}_{\aut_\ell}q$.
Since $\raut_f$ is aperiodic, we have for $m$ large enough that $\cla {u^{m+1}w}=\cla {u^{m}w}$.
Then we define $y$ such that $\overline y=u$ and for $i\in\set{1,\ldots,|u|}$, $y[i]=u{[i]}_c$ where $c$ is the class of the word $u[i+1:|u|]u^{m}w$ in $\raut_f$.
If we choose $n\geq 2m$, we have $p\xrightarrow{u^m}_{\overline\aut_\ell}p'\xrightarrow{u^m}_{\overline\aut_\ell}q$.
We have by definition of $y$ that $z=y^m z'$, which means that $p\xrightarrow{y^m}_{\aut_\ell}p'\xrightarrow{z'}_{\aut_\ell}q$.
Hence, by aperiodicity of $\aut_\ell$, $p\xrightarrow{y^{m+1}}_{\aut_\ell}p'$.
Finally, we have $p\xrightarrow{y^{m+1}z'}_{\aut_\ell}q$ and since $\overline{y^{m+1}z'}=u^{n+1}$, we obtain $p\xrightarrow{u^{n+1}}_{\overline\aut_\ell}q$.
\end{proof}

\begin{cor}%
\label{cor:ap-char}
Let $\bim$ be a bimachine realizing a transduction $f$.
Then $f$ is aperiodic if and only if $\leftcan(\rightcan(\bim))$ is aperiodic if and only if $\rightcan(\leftcan(\bim))$ is aperiodic.
\end{cor}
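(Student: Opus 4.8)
The plan is to derive the three-way equivalence from a single uniform fact: if $f$ is aperiodic, then \emph{every} minimal bimachine realizing $f$ is aperiodic. Since $\leftcan(\rightcan(\bim))$ and $\rightcan(\leftcan(\bim))$ are minimal bimachines realizing $f$ by Proposition~\ref{prop:minimalbim}, this fact yields at once the forward implications. The two converse implications are immediate: if either of these bimachines is aperiodic, then it is an $\ap$-bimachine realizing $f$, so $f$ is $\ap$-rational, \ie aperiodic, by Corollary~\ref{cor:bim-trans}. So the whole corollary reduces to the uniform fact.

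To prove it, I would fix a minimal bimachine $\bim'$ with left and right automata $\laut'$ and $\raut'$, which we may assume accessible. The key observation is that each of $\laut'$ and $\raut'$ is coarser than a canonical automaton already known to be aperiodic. Indeed, Theorem~\ref{thm:can-bim-ap} gives that the canonical bimachine $\bim_f = (\leftcan(\raut_f),\raut_f,\ldots)$ is aperiodic, so $\leftcan(\raut_f)$ is an $\ap$-automaton; by the symmetric version of that theorem, the second canonical bimachine $(\laut_f,\rightcan(\laut_f),\ldots)$ is aperiodic, so $\rightcan(\laut_f)$ is an $\ap$-automaton. On the other hand, Lemma~\ref{lem:bound-bim} gives $\leftcan(\raut_f)\finer \laut'$ and $\rightcan(\laut_f)\finer \raut'$. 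Since $\ap$ is closed under taking coarser congruences, both $\laut'$ and $\raut'$ are $\ap$-automata, hence $\bim'$ is aperiodic. Instantiating this at $\bim' = \leftcan(\rightcan(\bim))$ and $\bim' = \rightcan(\leftcan(\bim))$ then closes the chain of equivalences.

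The delicate point in the last step is that $\finer$ between automata compares the \emph{right} transition congruences $\sim$, whereas membership in $\ap$ is a property of the \emph{two-sided} transition congruences $\approx$; bridging the two is precisely what Proposition~\ref{prop:finer-seq} provides, which is why I assume $\laut'$ and $\raut'$ accessible (harmless for minimal bimachines, and for $\raut'$ one invokes the evident right-automaton analogue of Proposition~\ref{prop:finer-seq}). I expect the main obstacle not to be the corollary itself — given Theorem~\ref{thm:can-bim-ap} the argument is short — but rather making sure that the symmetric form of Theorem~\ref{thm:can-bim-ap} is legitimately available for the right automaton $\rightcan(\laut_f)$, and that the closure-under-coarsening argument is carried out at the level of $\approx$ rather than $\sim$.
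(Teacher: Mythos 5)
Your proposal is correct and follows essentially the same route as the paper: the forward direction combines Theorem~\ref{thm:can-bim-ap} with Lemma~\ref{lem:bound-bim} (plus closure of $\ap$ under coarsening), and the converse is immediate from Corollary~\ref{cor:bim-trans} since both bimachines realize $f$. The points you flag as delicate --- invoking the symmetric version of Theorem~\ref{thm:can-bim-ap} to get aperiodicity of $\rightcan(\laut_f)$, and bridging $\sim$ versus $\approx$ via (the right-automaton analogue of) Proposition~\ref{prop:finer-seq} --- are exactly the steps the paper's terse proof leaves implicit, and your treatment of them is sound.
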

\begin{proof}
Using both Theorem~\ref{thm:can-bim-ap} and Lemma~\ref{lem:bound-bim} we have that these bimachines are aperiodic whenever $f$ is.
Conversely, $\leftcan(\rightcan(\bim))$ and $\rightcan(\leftcan(\bim))$
both realize $f$, so $f$ is aperiodic whenever they are, by Corollary~\ref{cor:bim-trans}.
\end{proof}

\subsection{Aperiodicity is PSPACE-complete}

Here we show that deciding whether a transduction given by a bimachine is aperiodic is \pspace-complete.
The approach is very similar to~\cite{ChoH91}: first minimize the bimachine in \ptime and then check the aperiodicity of each automaton in \pspace.
The hardness is shown by reduction from the problem of deciding whether a language given by a deterministic automaton is aperiodic, which is \pspace-complete~\cite{ChoH91}.

\subsubsection{\pspace algorithm}
The fact that the problem is in \pspace is a consequence of Corollary~\ref{cor:ap-char}.
\begin{cor}%
\label{cor:pspace}
Deciding if a transduction, given by a bimachine, is aperiodic is in \pspace.
\end{cor}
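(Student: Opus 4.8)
The plan is to reduce testing aperiodicity of the transduction $f = \sem{\bim}$ to testing aperiodicity of the two underlying automata of a \emph{minimal} bimachine equivalent to $\bim$, and then to invoke the known \pspace bound for automata aperiodicity.

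First I would compute, from the input bimachine $\bim$, the minimal bimachine $\leftcan(\rightcan(\bim))$ (equivalently $\rightcan(\leftcan(\bim))$). By the corollary following Theorem~\ref{thm:ptime-min}, this is done in \ptime, so the resulting bimachine has size polynomial in $|\bim|$. Minimizing is essential here: the input $\bim$ may realize an aperiodic transduction while its own transition congruence fails to be aperiodic, so one cannot hope to read off aperiodicity of $f$ from $\bim$ directly.

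The soundness of the reduction is exactly the content of Corollary~\ref{cor:ap-char}: $f$ is aperiodic if and only if $\leftcan(\rightcan(\bim))$ is. This is where all the preceding effort — the decomposition $f = \ell_f \circ \lab_{\raut_f}$ together with Theorem~\ref{thm:can-bim-ap} — is used, and here it may simply be cited. It then remains to decide aperiodicity of a single polynomial-size bimachine.

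Finally I would note that a bimachine is aperiodic precisely when both its left and right automata are, since its transition congruence is $\approx_\laut \sqcap \approx_\raut$ and the class $\ap$ is closed under intersection and under passing to coarser congruences. Each of the two remaining subtasks — deciding aperiodicity of an explicitly given automaton — lies in \pspace by \cite{DiekertG08}. Composing the \ptime minimization with these two \pspace tests yields a \pspace procedure for the whole problem. The only real obstacle is conceptual rather than algorithmic: correctness hinges on first minimizing and invoking Corollary~\ref{cor:ap-char}, rather than inspecting the given, possibly non-aperiodic, bimachine.
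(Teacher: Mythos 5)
Your proposal is correct and follows essentially the same route as the paper: minimize the bimachine in \ptime via $\leftcan(\rightcan(\bim))$, invoke Corollary~\ref{cor:ap-char} for correctness, and then test aperiodicity of each of the two automata of the minimized machine in \pspace (the paper cites \cite{ChoH91} here, but \cite{DiekertG08} is equally valid). Your added remark that checking both automata suffices because $\approx_\bim = \approx_\laut \sqcap \approx_\raut$ and $\ap$ is closed under intersection and coarser congruences is a harmless elaboration of what the paper's definition of an aperiodic bimachine already builds in.
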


\begin{proof}
Let $\bim$ be a bimachine realizing a transduction $f$.
The algorithm to decide if $f$ is aperiodic is done in two steps: first minimize the bimachine, \ie, compute $\leftcan(\rightcan(\bim))$, then check the aperiodicity of the obtained bimachine.
According to Theorem~\ref{thm:ptime-min} the first step can be done in \ptime.
According to~\cite{ChoH91} the second step can be done in \pspace for each automaton.
Finally the characterization of Corollary~\ref{cor:ap-char} ensures that the obtained bimachine is aperiodic if and only if the transduction itself is.
\end{proof}

\subsubsection{Hardness}

Deciding the aperiodicity of a language given by a minimal deterministic automaton is \pspace-hard~\cite{ChoH91}, hence we deduce that deciding aperiodicity of a transduction given by a bimachine is \emph{a fortiori} \pspace-hard.
\begin{prop}%
\label{prop:pspace-hard}
Deciding if a transduction, given by a bimachine, is aperiodic is \pspace-hard.
\end{prop}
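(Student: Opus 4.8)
The plan is to reduce from the language-aperiodicity problem. By \cite{ChoH91}, deciding whether a language $L$, given by a (minimal) deterministic automaton $\aut=\tuple{Q,\Delta,\set{q_0},F}$ over $\Sigma$, is aperiodic is already \pspace-hard. Hence it suffices to produce, in polynomial time from $\aut$, a bimachine $\bim$ such that the transduction $\sem{\bim}$ is aperiodic if and only if $L$ is an aperiodic language. I would work over an ambient alphabet containing $\Sigma$ together with two extra output letters $0,1$, and use the \emph{total} transduction $f\colon\Sigma^*\to\set{0,1}^*$ that records, for every non-empty prefix, its membership in $L$: for $w=\sigma_1\cdots\sigma_n$ set $f(w)=b_1\cdots b_n$ with $b_i=1$ if $\sigma_1\cdots\sigma_i\in L$ and $b_i=0$ otherwise.

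This $f$ is realized by a bimachine $\bim$ whose left automaton is $\aut$ with all states declared final — so that it recognizes $\Sigma^*$ while keeping the transition congruence $\approx_\aut$ — whose right automaton is the one-state automaton recognizing $\Sigma^*$, and whose output function emits $1$ on a transition reaching $F$ and $0$ otherwise, with trivial final functions. Clearly $\bim$ realizes $f$, both automata recognize $\dom(f)=\Sigma^*$, and $\bim$ is computable in polynomial time from $\aut$. So the entire reduction is polynomial and the only thing left is to relate the aperiodicity of $\sem{\bim}=f$ to that of $L$.

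The crux is to identify the syntactic congruence $\sim_f$ of $f$ (Section~\ref{subsec:min-seq}) with the Myhill--Nerode right congruence of $L$. Here totality helps decisively: for every $u$ the empty continuation yields $f(u)\in\set{f(uw)\mid w\in\Sigma^*}$, and every $f(uw)$ admits $f(u)$ as a prefix, so $\widehat f(u)=f(u)$. Consequently $\widehat f(u)^{-1}f(uw)$ is exactly the block of membership bits produced by the extensions $uw'$ of $u$ along the prefixes $w'$ of $w$, and the second condition defining $\sim_f$ collapses to: for all $w'$, $uw'\in L\Leftrightarrow vw'\in L$ (the first, domain condition being vacuous since $f$ is total). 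Thus $\sim_f$ is precisely the right congruence of $L$, $f$ is sequential, and the underlying automaton $\aut_f$ of its minimal sequential transducer is the minimal DFA of $L$.

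Finally I would chain the established results: $f$ is aperiodic iff, it being sequential, it is $\ap$-sequential (Theorem~\ref{thm:a-seq}), iff $\aut_f$ is an aperiodic automaton (by the characterization established in the proof of Theorem~\ref{thm:v-seq} instantiated at $\var=\ap$), iff the minimal DFA of $L$, equivalently the syntactic congruence of $L$, is aperiodic, iff $L$ is aperiodic. Alternatively, one may conclude directly from Corollary~\ref{cor:ap-char} applied to $\bim$. The main obstacle is the middle paragraph, namely pinning down $\sim_f$ as the syntactic congruence of $L$; but, as noted, totality forces $\widehat f(u)=f(u)$ and makes this identification routine rather than delicate.
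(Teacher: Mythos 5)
Your reduction is, in all essentials, the paper's own: reduce from aperiodicity of a minimal DFA \cite{ChoH91}, take as left automaton the completed $\aut$ with every state made final, take the trivial one-state right automaton, and output prefix information letter by letter (the paper outputs the run of $\aut$, you output membership bits; that difference is immaterial). The genuine problem is the step you single out as the crux and declare routine: the identification of $\sim_f$ with the Myhill--Nerode congruence of $L$ is \emph{false}. Precisely because $\widehat f(u)=f(u)$, the empty continuation contributes nothing: condition 2) of $\sim_f$ compares $f(u)^{-1}f(uw)$ with $f(v)^{-1}f(vw)$, which for $w=\epsilon$ is the tautology $\epsilon=\epsilon$. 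Hence $u\sim_f v$ holds as soon as $uz\in L\Leftrightarrow vz\in L$ for all \emph{non-empty} $z$; the bit recording whether $u$ itself lies in $L$ is swallowed by $\widehat f(u)$ and cancelled. So $\sim_f$ can be strictly coarser than Myhill--Nerode. Concretely, for $L=\Sigma^*a$ over $\Sigma=\set{a,b}$, the minimal DFA has two states (one final, one not) with \emph{identical} outgoing transitions; the Myhill--Nerode congruence has two classes, yet $\sim_f$ is trivial: your $f$ simply outputs $1$ on each $a$ and $0$ on each $b$, and is realized by a one-state sequential transducer. Thus ``$\aut_f$ is the minimal DFA of $L$'' fails, the link ``$\aut_f$ aperiodic iff $L$ aperiodic'' in your chain is unjustified as written, and the fallback via Corollary~\ref{cor:ap-char} hits exactly the same coarsening when the left automaton of your bimachine is minimized.

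The equivalence you need is still true, but it requires an argument, not an identification. What is true is that $\aut_f$ is the quotient of the minimal DFA of $L$ obtained by merging states with identical outgoing transition functions (in a minimal DFA two distinct such states must differ in finality, which is exactly the information your outputs lose). Aperiodicity of the minimal DFA implies aperiodicity of $\aut_f$, since the transition monoid of $\aut_f$ is a homomorphic image of that of the minimal DFA (or argue this direction directly through your bimachine, Theorem~\ref{thm:a-seq} and Theorem~\ref{thm:v-seq}). For the converse, which is the direction hardness actually needs, use counters: if the minimal DFA has a counter $p_0\xrightarrow{x}p_1\xrightarrow{x}\cdots\xrightarrow{x}p_{k-1}\xrightarrow{x}p_0$ with the $p_i$ pairwise distinct and $k\geq 2$, then its image under the quotient map $\pi$ is again a counter of size $k$, because merged states have equal successors on every non-empty word, so $\pi(p_i)=\pi(p_j)$ would force $p_{(i+1)\bmod k}=p_{(j+1)\bmod k}$, contradicting distinctness; hence $\aut_f$ is not counter-free, i.e.\ not aperiodic. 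With this lemma inserted, your proof goes through. For what it is worth, the paper's own proof asserts the analogous identification ${\sim_L}={\sim_\aut}$ for its run-outputting transduction and is open to the very same objection (again $L=\Sigma^*a$), so your proposal is no further from a complete argument than the published one --- but the crux is genuinely delicate, not routine.
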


\begin{proof}
According to~\cite{ChoH91}, deciding whether a minimal deterministic automaton recognizes an aperiodic language is \pspace-hard.
We reduce this problem to the aperiodicity problem for a transduction given by a bimachine.
Let $\aut=(Q,\delta,\set{q_0},F)$ be a minimal deterministic automaton recognizing a language $L\subseteq\Sigma^*$.
We construct, in \ptime, a bimachine $\bim$ such that $\bim$ realizes an aperiodic transduction if and only if $\aut$ is aperiodic.
First, we assume that $\aut$ is complete since an automaton can be completed in \ptime.
Since both automata of $\bim$ must recognize the same language, we cannot have $\dom(\bim)=L$ because a right automaton recognizing $L$ may need, in general, a number of states exponentially larger than $\aut$.
Instead we define $\bim$ such that for any word $w\in \Sigma^*$, $\sem\bim (w)=r$ where $r$ is the run of $\aut$ on $w$.
Note that $\dom(\bim)=\Sigma^*$ since $\aut$ is complete.

Let $\bim=\tuple{\aut',\raut_\top,\bout,\lfinal,\rfinal}$ where $\aut'=(Q,\delta,\set{q_0},Q)$ is the same as $\aut$ but all its states are final, $\raut_\top$ is the trivial complete right automaton with one state $q_\top$ which is both initial and final.
The output functions are defined by $\bout(q,\sigma,q_\top)=q$, $\lfinal(q_\top)=\epsilon$, and $\rfinal(q)=q$ for any $q\in Q,\sigma\in\Sigma$.

We claim that $\bim$ realizes an aperiodic transduction if and only if $\aut$ is aperiodic.
Let us assume that $\aut$ is aperiodic, then $\aut'$ is aperiodic as well and $\raut_\top$ is trivially aperiodic, hence $\bim$ is aperiodic and therefore realizes an aperiodic transduction.

Conversely, let us assume that $\bim$ realizes an aperiodic transduction, which we denote by $f$.
According to Theorem~\ref{thm:can-bim-ap}, we have that the canonical bimachine $\bim_f$ is aperiodic.
Let us show that automata of $\bim$ are isomorphic to those of $\bim_f$.

Since $\dom(f)=\Sigma^*$ and for all $u,v,w\in\Sigma^*$,  $\dist{f(wu),f(wv)}\leq |u|+|v|$ we have $u\leftcong_f v$.
Hence the canonical right automaton of $f$, $\raut_f$, which is the right automaton of $\bim_f$, is trivial and isomorphic to $\raut_\top$.
Now we study the right congruence $\sim_L$ associated with $\raut_f$.

Since $\leftcong_f$ is trivial, \ie, for any words $u,v$, $ u\leftcong_f v$, we will denote $\widehat{f}_{\cla u_{\leftcong_f}}$ by $\widehat{f}$.
We have $\widehat{f}(u)=\bigwedge \set{f(uv)|\ v\in \Sigma^*}=f(u)=r_u$, the run of $\aut$ on $u$, for any word $u$.
Let $u,v$ be two words, we have for any word $w$, $uw\in \dom(f)\Leftrightarrow vw \in \dom(f)$ since $\dom(f)=\Sigma^*$.
We have ${f(u)}^{-1}f(uw)={f(v)}^{-1}f(vw) \Leftrightarrow u\sim_\aut v$ which means that $u\sim_L v \Leftrightarrow u\sim_\aut v$.

We have shown that ${\sim_L}={\sim_\aut}$ which means that $\leftcan(\raut_f)$ is isomorphic to $\aut'$, therefore $\aut'$ is aperiodic and so is $\aut$.
\end{proof}

We can finally state the main theorem of the section.
\begin{thm}
Deciding if a transduction, given by a bimachine, is aperiodic is \pspace-complete.
\end{thm}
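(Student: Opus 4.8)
The plan is to observe that \pspace-completeness is precisely the conjunction of two properties, both of which have already been established in the two immediately preceding results. The membership in \pspace is furnished by Corollary~\ref{cor:pspace}, and the matching lower bound is furnished by Proposition~\ref{prop:pspace-hard}. Combining these two facts directly yields the statement, so the theorem is a packaging corollary rather than a result requiring new argument.

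Spelling out what each half contributes: for the upper bound, Corollary~\ref{cor:pspace} proceeds by first minimizing the input bimachine in \ptime using Theorem~\ref{thm:ptime-min} (computing $\leftcan(\rightcan(\bim))$), and then testing each of the two resulting automata for aperiodicity with the \pspace procedure of \cite{ChoH91}. Correctness of this two-step algorithm rests on Corollary~\ref{cor:ap-char}, which guarantees that the transduction is aperiodic if and only if its minimized bimachine is. For the lower bound, Proposition~\ref{prop:pspace-hard} gives a \ptime reduction from the \pspace-hard problem of deciding aperiodicity of a language presented by a minimal deterministic automaton: one builds a bimachine with a trivial right automaton whose output on any input word is the run of the given automaton, and shows that this bimachine realizes an aperiodic transduction exactly when the automaton is aperiodic.

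Consequently I would simply invoke Corollary~\ref{cor:pspace} and Proposition~\ref{prop:pspace-hard} and conclude. There is no remaining obstacle: the substantive work lies entirely in those two cited results (and ultimately in the characterization Theorem~\ref{thm:can-bim-ap} underlying the algorithm and in the reduction from \cite{ChoH91}), while the present statement only asserts their combination.
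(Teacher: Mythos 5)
Your proposal is correct and matches the paper's own proof exactly: the paper likewise proves the theorem in one line as a direct consequence of Corollary~\ref{cor:pspace} (the \pspace upper bound via minimization and the aperiodicity check of \cite{ChoH91}) and Proposition~\ref{prop:pspace-hard} (the reduction from aperiodicity of languages). Your summaries of the content of those two results are also accurate.
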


\begin{proof}
A consequence of Corollary~\ref{cor:pspace} and Proposition~\ref{prop:pspace-hard}.
\end{proof}



\section{Logical transducers}

The theory of rational languages is rich with results connecting automata, logics and algebra and some of those results have been successfully lifted to transductions:
\MSO-transducers (\MSOTs), a logical model of transducers introduced by Courcelle~\cite{CourcelleE12}, have been shown to be equivalent to deterministic two-way transducers~\cite{EngelfrietH01} as well as a deterministic one-way model of transducers with registers~\cite{AlurC10}.
More recently, an equivalence between first-order transducers and transducers with an aperiodic transition monoid has been shown~\cite{FiliotKT14,CartonD15}.

However, \MSOTs are far more expressive than one-way transducers as they can express functions that do not preserve the order of the input word, like \emph{mirror}, which returns the mirror of a word, or \emph{copy}, which copies an input word twice.
A natural restriction on \MSOTs, called \emph{order-preserving} \MSOT, has been shown to express exactly the rational functions~\cite{Bojanczyk14, Filiot15}.
In the following we will only consider order-preserving \MSOTs and simply call them \MSOTs.

Many links have been shown between logical fragments of \MSO and algebraic varieties of monoids (see \eg~\cite{Straubing94, PinS05}) and the strength of those links is that they often give equational descriptions of the corresponding language varieties, and thus give a way to decide if a language is definable in some given logical fragment.
Our goal is to provide a framework to attempt to lift some of these results from languages to transductions.
We give sufficient conditions on \F, a logical fragment of \MSO, to decide if a transduction, given by an \MSOT, can be realized by an \F-transducer.
In particular we show that it is decidable if a transduction, given by an \MSOT, can be realized by an \FOT.
\label{sec:logics}

\subsection{\MSO transductions}

\subsubsection{Words as logical structures}
A (non-empty) word $w$ over the alphabet $\Sigma$ is seen as a \emph{logical structure}\footnote{For a definition of logical structures see \eg~\cite{EbbinghausF95}} $\tilde w$ over the signature $\struct_\Sigma=\set{{(\sigma(x))}_{\sigma\in\Sigma},x< y}$.
The \emph{domain} of $\tilde w$ is the set of positions of $w$, denoted by $\dom(w)$, the unary predicate $\sigma$ is interpreted as the positions of $w$ with letter $\sigma$ and the binary predicate $<$ is interpreted as the linear order over the positions of $w$.
In order to simplify notations, we will write $w$ instead of $\tilde w$ in the following.
\subsubsection{Monadic second-order logic}
\emph{Monadic second-order formulas} (\MSO-formulas) over $\struct_\Sigma$ are defined over a countable set of first-order variables $x,y,\ldots$ and a countable set of second-order variables $X,Y,\ldots$ by the following grammar (with $\sigma\in\Sigma$):
\[\phi::=\exists X\ \phi\ |\ \exists x\ \phi\ |\ (\phi \wedge \phi)\ |\ \neg \phi \ |\ x\in X\ |\  \sigma(x)\ |\ x < y\]
The universal quantifier as well as the other Boolean connectives are defined as usual:
$ \forall X\ \phi:=\neg \exists X\ \neg \phi $, $\forall x\ \phi:=\neg \exists x\ \neg \phi $, $(\phi_1 \vee \phi_2):=\neg (\neg\phi_1 \wedge \neg\phi_2)$, and $(\phi_1 \rightarrow \phi_2):=(\neg\phi_1 \vee \phi_2)$.
We also define the formulas $\top$ and $\bot$ as being respectively always and never satisfied.
We also allow the equality predicate $x=y$.
We do not define here the semantics of \MSO nor the standard notions of \emph{free} and \emph{bound variables} but rather refer the reader to \eg~\cite{EbbinghausF95} for formal definitions.
We recall that a \emph{closed formula}, or \emph{sentence}, is a formula without free variables.
Let $\phi$ be an \MSO-sentence, we write $w\models \phi$ to denote that $w$ satisfies $\phi$.
The \emph{language defined} by an \MSO-sentence $\phi$ is the set $\sem \phi=\set{w\ |\ w\models \phi}$.
A \emph{logical fragment} \F\ of \MSO is a subset of \MSO formulas, and an \emph{\F-language} is simply a language defined by an \F-sentence.
\begin{exas}
\

\begin{itemize}
\item The fragment of \emph{first-order formulas} (\FO) is defined as the set of formulas which never use second-order variables.
\item The fragment \FOd is the set of first-order formulas with only two variables, which can both be quantified upon any number of time.
\item The existential fragment of first-order logic, \Sig{1} is the set of first-order formulas of the form $\exists x_1\ldots \exists x_n \phi$ where $\phi$ is quantifier-free. The fragment \Bsig{1} is the closure under boolean operations of \Sig{1}.

\end{itemize}

\end{exas}

\subsubsection{Pointed words}
In the model of Courcelle, originally introduced in the more general context of graph transductions~\cite{CourcelleE12}, a transduction is defined by interpreting the predicates of the output structure over several copies of the input structure.
Since we consider order-preserving word transductions, we only need to define the unary predicates of the output structure and those are defined by formulas with one free variable.
Models for such formulas are words with a marked position, called \emph{pointed words}.
Furthermore, we choose a similar but slightly different approach from formulas with a free variable and consider formulas with a constant, which can be seen as a variable which is never quantified upon.
For \MSO formulas there is no difference between having a free variable and having a constant, but for logical fragments which restrict the number of variables, this distinction will prove useful.
Pointed words have been briefly studied in~\cite{Bojanczyk15} and we will make some of the same remarks.

A \emph{pointed word} over an alphabet $\Sigma$ is a pair $(w,i)$ with $w$ a non-empty word and $i\in \dom(w)$.
A pointed word can alternatively be seen as a logical structure over the signature $\struct_\Sigma^\con=\set{\con,{(\sigma(x))}_{\sigma\in\Sigma},x<y}$ where $\con$ is a constant symbol.

A \emph{pointed \MSO-formula} is obtained from an \MSO-formula by substituting some occurrences of first order variables inside predicates by the symbol $\con$.
We denote the set of pointed \MSO-formula by \MSOc.
In particular any \MSO formula is an \MSOc formula.
Given \F\, a fragment of \MSO, we define similarly \Fc\ the set of formulas obtained from \F-formulas by substituting in predicates some occurrences of first order variables by the symbol $\con$.
Let $\psi$ be an \MSOc-sentence, we write $(w,i)\models \psi$ to denote that the pointed word $(w,i)$ satisfies $\psi$ where the constant $\con$ is interpreted as $i$.
The \emph{pointed language defined} by an \MSOc-sentence $\psi$ is the set $\sem \psi=\set{(w,i)\ |\ (w,i)\models \psi}$.

\begin{exa}%
\label{ex:pointed-language}
Let us give an example of an \MSOc-sentence over the alphabet $\Sigma \supseteq \set{\alpha,\beta,\gamma}$:
\[\gamma(\con)\wedge \tuple{\tuple{\exists x\ x<\con \wedge \alpha(x)}\vee \tuple{\exists x\ x>\con \wedge \beta(x)}}\]
This formula is in \Bsigc{1}, \FOdc, and thus also in \FOc.
\end{exa}

\subsubsection{\MSO-transducers}
An \MSO-transducer (or \MSOT) over $\Sigma$ is a tuple:
\[\trans=\tuple{K,\phi_\dom,\tuple{\psi_v}_{v\in K}}\]
where $K$ is a finite set of words, $\phi_\dom$ is an \MSO-sentence and for all $v\in K$, $\psi_v$ is an \MSOc-sentence.
The \MSOT $\trans$ defines a transduction $\sem \trans$ of domain $\sem{\phi_\dom}$.
For a non-empty word $u\in\sem{\phi_\dom}$, we have $\sem \trans(u)=v_1\ldots v_{|u|}$ such that for $i\in\dom(u)$, $(u,i)\models \psi_{v_i}$.
We remark that in general this relation is not necessarily functional: if for some word $u$ and position $i$ we have two different words $v,v'\in K$ such that $(u,i)\models \psi_v$ and $(u,i)\models \psi_{v'}$ then $u$ can have several images. Furthermore this relation may not even be well-defined on its domain: if there is a word $u$ in the domain and a position $i$ such that for any $v\in K$ we have $(u,i)\not\models \psi_v$ then the image of $u$ is not well-defined.
However these properties are decidable (reducible to satisfiability of MSO-formulas) and in this paper we always assume
that logical transductions are well-defined and functional.
In the following we only consider transductions that are (partially) defined over \emph{non-empty words}. This can be simply encoded up to adding a special symbol.
A transduction $f:\Sigma^+\rightarrow \Sigma^*$ is called \emph{\MSOT-definable} if there exists an \MSOT $\trans$ such that for any non-empty word $u\in \Sigma^+$, $f(u)=\sem\trans(u)$.

Let us state the equivalence between functional transducers and \MSO-transducers.
\begin{thm}{\cite{Bojanczyk14,Filiot15}}
A transduction is rational if and only if it is \MSOT-definable.
\end{thm}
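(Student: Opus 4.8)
The plan is to prove both implications using bimachines as the intermediate model, since the excerpt already recalls that bimachines are exactly as expressive as functional transducers, hence realize precisely the rational functions, and Corollary~\ref{cor:bim-trans} connects them to the relevant constructions. The only logical ingredient needed is Büchi's theorem~\cite{Buchi60}, relating \MSO-definability over words to recognizability by finite automata, which I would apply over the signature $\struct_\Sigma^\con$ of pointed words.

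For the direction ``rational $\Rightarrow$ \MSOT-definable'', I would start from a bimachine $\bim = \tuple{\laut,\raut,\bout,\lfinal,\rfinal}$ realizing $f$. The key observation is that the output of $\bim$ at a position $i$ of an input word $u$ depends only on the left state $l$ reached by $\laut$ after reading the prefix $u[1:i-1]$, the letter $u[i]$, and the right state $r$ reached by $\raut$ after reading the suffix $u[i+1:|u|]$. Both ``$\laut$ is in state $l$ on the prefix ending strictly before $\con$'' and ``$\raut$ is in state $r$ on the suffix starting strictly after $\con$'' are expressible by \MSOc-sentences: a run of a finite automaton is encoded by an existentially quantified labelling of positions by states subject to the transition constraints, exactly as in Büchi's theorem. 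I take $K$ to be the finite set of all values of $\bout$ on single letters, augmented to absorb $\lfinal$ and $\rfinal$ at the first and last positions. For each $v \in K$ the sentence $\psi_v$ is the disjunction, over all triples $(l,\sigma,r)$ producing $v$, of the conjunction of the corresponding prefix-state, letter, and suffix-state formulas (with the endpoint cases handled separately). Finally $\phi_\dom$ is a Büchi sentence for the regular language $\dom(f) = \sem\laut$. Determinism of $\laut$ and backward-determinism of $\raut$ guarantee that the resulting \MSOT is well-defined and functional.

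For the direction ``\MSOT-definable $\Rightarrow$ rational'', I would go the other way. Given $\trans = \tuple{K,\phi_\dom,\tuple{\psi_v}_{v\in K}}$, each $\psi_v$ defines a regular language of pointed words, again by Büchi's theorem over $\struct_\Sigma^\con$. The decisive fact is that membership of $(u,i)$ in such a regular pointed language depends only on a regular property of the prefix $u[1:i-1]$, the letter $u[i]$, and a regular property of the suffix $u[i+1:|u|]$; concretely one can take a deterministic left automaton $\laut$ computing the prefix type and a right automaton $\raut$ computing the suffix type, so that the pair of types together with the current letter determines which $\psi_v$ holds. This is precisely the data of a bimachine: I define $\bout(l,\sigma,r)$ to be the unique $v \in K$ selected by the triple (uniqueness follows from the assumed functionality of $\trans$), let $\laut$ and $\raut$ both recognize $\dom(f) = \sem{\phi_\dom}$, and push the endpoint contributions into $\lfinal$ and $\rfinal$. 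By the classical equivalence between bimachines and functional transducers, together with Corollary~\ref{cor:bim-trans}, $f$ is rational.

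The main obstacle I expect is not the core \MSO-to-automata translation, which is Büchi's theorem applied to pointed words, but the bookkeeping at the two endpoints. A bimachine produces, via $\lfinal$ and $\rfinal$, a prefix and a suffix of the output that are not attached to any ordinary per-position contribution, whereas an \MSOT must distribute the entire output across positions through the words $v_i$. Reconciling the two requires folding $\lfinal$ into the output word chosen at the first position and $\rfinal$ into the one chosen at the last position, using pointed formulas expressing ``$\con$ is the minimal (respectively maximal) position'', and conversely splitting an \MSOT's per-position outputs consistently into a bimachine, all while preserving functionality and the exact domain $\sem{\phi_\dom}$.
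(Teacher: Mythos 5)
Your proposal is correct. Note that the paper never proves this theorem itself --- it is imported from \cite{Bojanczyk14,Filiot15} --- but the argument you give is essentially the specialization to \MSO of the machinery the paper develops in Section~\ref{sec:logics} for arbitrary fragments. Your decomposition of a regular pointed language into (prefix type, letter, suffix type) triples is exactly the content of Lemma~\ref{lem:Fto2F} and its converse Lemma~\ref{lem:2FtoF}; your translation between bimachines and per-position formulas is Lemma~\ref{lem:bim-logic}; your folding of $\lfinal$ and $\rfinal$ into the outputs at the first and last positions, guarded by ``$\con$ is minimal/maximal'' formulas, is Proposition~\ref{prop:good} (the class $\fin$ of all finite-index congruences is \good, since it contains the empty-word congruences); and the bridge from bimachines back to functional transducers is Corollary~\ref{cor:bim-trans} with $\var=\fin$. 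Since \MSO is equivalent to $\fin$ by B\"uchi's theorem \cite{Buchi60} and satisfies the paper's assumptions (1)--(4), your self-contained construction and the paper's Theorem~\ref{thm:equiv}, instantiated with the fragment \MSO and the class $\fin$, yield the same statement; what the paper's extra generality buys is that the identical argument then transfers to \FO, \FOd and \Bsig{1} (Theorem~\ref{thm:decidable-logic}), which is its actual goal. The two delicate points you flag are handled soundly: uniqueness of the applicable $\psi_v$ at each position does follow from functionality together with well-definedness (two distinct applicable output words at a single position would yield two distinct images of the same input), and on the converse direction the endpoint bookkeeping is even simpler than you suggest, since a bimachine may just take $\lfinal$ and $\rfinal$ constantly equal to $\epsilon$.
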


More generally, given a logical fragment \F, an \F-transducer over $\Sigma$ is an \MSO-transducer:
\[\trans=\tuple{K,\phi_\dom,\tuple{\psi_v}_{v\in K}}\]
where $K$ is a finite set of words, $\phi_\dom$ is an \F-sentence and for all $v\in K$, $\psi_v$ is an \Fc-sentence.

\begin{rem}
Note that \FOd transductions as we define them are more expressive than transductions using only formulas with two variables, without the $\con$ symbol.
\end{rem}

\subsection{Logic-algebra equivalence}
Our goal is to obtain an equivalence between $\var$-trans\-duc\-tions and \F-transductions (as for instance $\ap$-transductions and \FO-transductions).
In this section we give sufficient conditions under which the equivalence holds.
For that purpose, we introduce an intermediate logical model of transductions, based on pairs of formulas, which is \emph{ad hoc} to bimachines.
Hence, the equivalence between this new formalism and $\var$-transductions is straightforward, and the only remaining task is to give sufficient conditions under which this formalism is equivalent with \F-transductions.
The conditions we give are, \emph{a priori}, not necessary, however we try and give conditions that are as general as possible, so that they can be applied to most known logical fragments of \MSO. 

\subsubsection{Pairs of formulas}
For a logical fragment \F, we define 2-\F\ formulas over $\Sigma$ as follows, where $\phi_1,\phi_2$ denote closed \F-formulas and $\Gamma$ denotes a subset of $\Sigma$:
\[F::=F\vee F \mid {(\phi_1,\phi_2)}_\Gamma\]
For words $u,v$ and a letter $\sigma$, we define $(u\sigma v,|u|+1) \models F$ by induction on 2-formulas:
\[\begin{array}{lcl}
(u\sigma v,|u|+1) \models {(\phi,\phi')}_\Gamma  &\mathrm{if}& u\models \phi \ \mathrm{and}\ v\models \phi' \ \mathrm{and}\ \sigma\in\Gamma\\
(u\sigma v,|u|+1)\models F_1\vee F_2 & \mathrm{if}& (u\sigma v,|u|+1)\models F_1\ \mathrm{or}\ (u\sigma v,|u|+1)\models F_2
\end{array}\]
From this we can define the $\wedge$ operator:
\[\begin{array}{rcl}
{(\phi_1,\phi'_1)}_{\Gamma_1} \wedge {(\phi_2,\phi'_2)}_{\Gamma_2} & :=& {(\phi_1\wedge \phi_2,\phi'_1\wedge \phi'_2)}_{\Gamma_1\cap\Gamma_2}\\
(F_1\vee F_2)\wedge F &:=&(F_1\wedge F)\vee(F_2\wedge F)\\
F\wedge (F_1\vee F_2) &:=&(F\wedge F_1)\vee(F\wedge F_2)
\end{array}\]
Similarly we can define the $\neg$ operator:
\[\begin{array}{rcl}
\neg {(\phi,\phi')}_\Gamma  &:=& {(\top,\top)}_{\Sigma\setminus \Gamma} \vee {(\neg\phi,\top)}_{\Sigma}\vee {(\top,\neg \phi')}_{\Sigma}\\
\neg (F_1\vee F_2) &:=& \neg F_1\wedge\neg F_2
\end{array}\]

\begin{exa}
Let us define the pointed language of Example~\ref{ex:pointed-language} by pairs of formulas:
\[\tuple{\exists x\ \alpha(x),\top}_{\set{\gamma}} \vee \tuple{\top,\exists x\ \beta(x)}_{\set{\gamma}} \]
\end{exa}

From this alternative formalism of pairs of formulas we can define \emph{logical 2-transducers} which are defined exactly as logical transducers except that  \MSOc formulas are replaced by 2-\MSO formulas.
Given a logical fragment \F, we say that 2-\F\ and \Fc\ are \emph{equivalent} if they define the same pointed languages.

\subsubsection{Bimachines and logical transducers}

The logical formalism of pairs of formulas is \emph{ad hoc} to bimachines, thus we obtain the following unsurprising result:
\begin{lem}%
\label{lem:bim-logic}
Let $\var$ be a congruence class equivalent to a fragment \F.
A transduction is definable by a $\var$-bimachine without final outputs if and only if it is definable by a 2-\F-transducer.
\end{lem}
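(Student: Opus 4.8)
The plan is to exploit the fact that both formalisms emit output \emph{positionwise}, which is precisely why final outputs are excluded on the bimachine side: a 2-\F-transducer labels each position $i$ of $u=u'\sigma u''$ by a single word of $K$ determined by the prefix $u'$, the letter $\sigma$ and the suffix $u''$, and a bimachine $\bim=\tuple{\laut,\raut,\bout,\bar\epsilon,\bar\epsilon}$ likewise emits $\bout(l,\sigma,r)$ at position $i$, where $l=\cla{u'}_\laut$ records a property of the prefix and $r=\cla{u''}_\raut$ a property of the suffix. The whole proof then amounts to the observation that a pair $(\phi,\phi')_\Gamma$ and a triple $(l,\sigma,r)$ carry the same information, once we know that the state languages of $\laut$ and $\raut$ are exactly the \F-languages.

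For the direction from $\var$-bimachines to 2-\F-transducers, I would first note that for each left state $l$ the language $\{u\mid \cla u_\laut = l\}$ is recognised by $\laut$ with $l$ as sole final state, hence is a $\var$-language and, by the equivalence between \F and $\var$, is defined by an \F-sentence $\phi_l$. Symmetrically, for each right state $r$ the language $\{v\mid \cla v_\raut = r\}$ equals $\{v\mid r\xrightarrow{v}r_0\}$ in the underlying automaton of $\raut$, so it is recognised by an automaton whose transition congruence is $\approx_\raut\in\var$ and is therefore defined by an \F-sentence $\phi'_r$. Then for each output word $k\in K=\mathrm{range}(\bout)$ I set $\psi_k=\bigvee(\phi_l,\phi'_r)_{\{\sigma\}}$, the disjunction ranging over the finitely many triples with $\bout(l,\sigma,r)=k$, and I take $\phi_\dom$ to define $\sem\laut=\dom(f)$. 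By construction $(u'\sigma u'',|u'|+1)\models\psi_k$ holds iff $\bout(\cla{u'}_\laut,\sigma,\cla{u''}_\raut)=k$, so the resulting 2-\F-transducer realises $\sem\bim$.

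For the converse I would collect the finitely many \F-sentences occurring on the left and right components of the pairs in all the formulas $\psi_v$, together with $\phi_\dom$; each defines an \F-language, hence a $\var$-language. The left automaton is the product of $\var$-automata recognising the prefix-sentences and $\phi_\dom$: since $\var$ is closed under intersection its transition congruence lies in $\var$, and $\cla u_\laut$ determines which prefix-sentences $u$ satisfies. The delicate point — which I expect to be the main obstacle — is to produce a \emph{right} automaton, necessarily backward-deterministic, that is still a $\var$-automaton and whose state determines the suffix-sentences satisfied by a suffix $v$. For a single suffix-sentence defining a $\var$-language $L$ with syntactic morphism $\mu\colon\Sigma^*\to M=\Sigma^*/{\equiv_L}$, I would build the right automaton on $M$, with the backward transition reading $\sigma$ from state $\mu(v)$ landing in $\mu(\sigma)\mu(v)$; this is backward-deterministic, its forward runs satisfy $p\xrightarrow{u}q$ iff $p=\mu(u)q$, so its transition congruence is exactly $\equiv_L$, which lies in $\var$ since $L$ is a $\var$-language. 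Taking the product of these right automata, one per suffix-sentence plus one for $\phi_\dom$, yields by closure under intersection a backward-deterministic $\var$ right automaton $\raut$ recognising $\dom(f)$, whose state $\cla v_\raut$ records the satisfied suffix-sentences. Finally I define $\bout(l,\sigma,r)$ to be the unique $k$ with $(u'\sigma u'',|u'|+1)\models\psi_k$ for representatives $u'$ of $l$ and $u''$ of $r$ — well defined by functionality of the transducer, since each $\psi_k$ depends only on the sentences satisfied by $u'$ and $u''$ and on $\sigma$ — and I set $\lfinal=\rfinal=\bar\epsilon$, producing a $\var$-bimachine without final outputs that realises $f$.
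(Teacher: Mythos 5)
Your proposal is correct and takes essentially the same route as the paper: the forward direction turns each output triple $(l,\sigma,r)$ into a pair of \F-sentences defining the corresponding state classes, and the converse builds the left and right automata as products of $\var$-automata for the sentences occurring in the pairs and reads the output function off the states (with the harmless convention of outputting $\epsilon$ outside the domain). The one place you go beyond the paper is your explicit construction of a backward-deterministic right $\var$-automaton from the syntactic congruence of each suffix language; the paper's proof simply posits such an automaton $\raut_\phi$ for each sentence $\phi$, so you have filled in a detail it leaves implicit rather than diverged from it.
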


\begin{proof}
Let $\bim=\tuple{\laut,\raut,\bout,\bar \epsilon,\bar \epsilon}$ be a $\var$-bimachine without final outputs realizing a transduction $f$.
Let us define a 2-\F-transducer $\trans=\tuple{K,\phi_\dom,\tuple{\psi_v}_{v\in K}}$ realizing $f$.
Since $\laut$ and $\raut$ are $\var$-automata, we know that for any word $w$, there exist \F-formulas $\phi_{{\cla w}_\laut}$ and $\phi_{{\cla w}_\raut}$ which respectively recognize the languages ${\cla w}_\laut$ and ${\cla w}_\raut$.
Then we can define $\phi_\dom=\bigvee_{u\in \dom(f)} \phi_{{\cla u}_\laut}$, since $\laut$ recognizes $\dom(f)$.
Let $K=\set{v\ |\ \exists u,\sigma,w \text{ s.t. } \bout(\cla u_\laut,\sigma,\cla w_\raut)=v}$,
then for $v\in K$, we define:
\[\psi_v=\bigvee_{\bout(\cla u_\laut,\sigma,\cla w_\laut)=v}\tuple{\phi_{{\cla u}_\laut},\phi_{{\cla w}_\raut}}_{\set{\sigma}}\]
By construction, $\trans$ realizes $f$.

Conversely let $\trans=\tuple{K,\phi_\dom,\tuple{\psi_v}_{v\in K}}$ be a 2-\F-transducer realizing a transduction $f$.
We define $\bim=\tuple{\laut,\raut,\bout,\bar \epsilon,\bar \epsilon}$, a $\var$-bimachine without final outputs realizing $f$.
Given an \F-sentence $\phi$, let $\laut_\phi$ and $\raut_\phi$ denote respectively a left and a right $\var$-automaton recognizing $\sem\phi$.
Let $v\in K$, and $\psi_v=\bigvee_{i=1,\ldots,n_v}{(\theta_v^i,\chi_v^i)}_{\Gamma_v^i}$, then we define:

\[\laut=\laut_{\phi_\dom}\times \prod_{ v\in K \ i=1,\ldots,n_v} \laut_{\theta_v^i} \quad \text{and} \quad  \raut=\raut_{\phi_\dom}\times\prod_{ v\in K \ i=1,\ldots,n_v} \raut_{\chi_v^i}\]
Finally, the outputs of $\bim$ are defined by $\bout({\cla u}_\laut,\sigma,{\cla w}_\raut)=v$ if there exists $v\in K$ such that $u\models \theta_v^i$, $\sigma\in \Gamma_v^i$ and $w\models \chi_v^i$ for some $i=1,\ldots,n_v$.
Otherwise, it means that $u\sigma w$ does not belong to the domain of $f$ and we can set $\bout({\cla u}_\laut,\sigma,{\cla w}_\raut)=\epsilon$.
Again, $\bim$ realizes $f$ by construction.
\end{proof}

The previous lemma does not capture the entire class of $\var$-transductions since we restrict ourselves to $\var$-bimachines without final outputs. In order to circumvent this issue, we define the notion of an \good congruence class, a class $\var$ for which the restriction over final output does not reduce the expressiveness of $\var$-bimachines.
The \emph{empty word congruence} over an alphabet $\Sigma$ is the congruence such that for any words $u,v$, we have $u\sim v$ if $u=\epsilon \Leftrightarrow v=\epsilon$. A congruence class which for any given alphabet contains the corresponding empty word congruence, is called \emph{\good}.

\begin{prop}%
\label{prop:good}
Let $\var$ be an \good congruence class.
Then any $\var$-transduction can be realized by a $\var$-bimachine without final outputs.
\end{prop}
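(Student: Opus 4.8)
The plan is to start from an arbitrary $\var$-bimachine realizing $f$ and to \emph{absorb} its two final output functions into the main output function, paying for this by refining each automaton so that it can detect the first and the last position of the input. Formally, by Corollary~\ref{cor:bim-trans} there is a $\var$-bimachine $\bim=\tuple{\laut,\raut,\bout,\lfinal,\rfinal}$ realizing $f$. Under the standing convention on non-empty words I may assume $\dom(f)\subseteq\Sigma^+$; this is exactly what is needed, since the only obstruction to dropping final outputs is $\epsilon\in\dom(f)$ with $f(\epsilon)\neq\epsilon$, which that convention rules out.

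First I would refine both automata. Let $\laut_\epsilon$ and $\raut_\epsilon$ be the left and right automata of the empty word congruence, which belongs to $\var$ \emph{precisely because} $\var$ is \good, and set $\laut'=\laut\times\laut_\epsilon$ and $\raut'=\raut\times\raut_\epsilon$. Since the transition congruence of a product is the intersection of the factors' congruences, and $\var$ is closed under intersection, both $\laut'$ and $\raut'$ are $\var$-automata; lifting $\bout,\lfinal,\rfinal$ so as to ignore the new component yields a $\var$-bimachine still realizing $f$. The gain is that the class $\cla\epsilon_{\laut'}$ is now reached only as the state \emph{preceding} the first letter, and $\cla\epsilon_{\raut'}$ only as the state \emph{following} the last letter.

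Then I would define the new output function. At a transition reading $\sigma$ with left source state $\ell$ and right state $r$, I prepend a left-final contribution exactly when $\ell=\cla\epsilon_{\laut'}$ and append a right-final contribution exactly when $r=\cla\epsilon_{\raut'}$. The crucial point is that the arguments of $\lfinal$ and $\rfinal$ are recoverable \emph{locally}: if $\ell=\cla\epsilon_{\laut'}$ and $r=\cla v_{\raut'}$, then the right-class of the whole word is $\cla{\sigma v}_{\raut'}$, determined by $\sigma$ and $r$ because $\sim_{\raut'}$ is a left congruence; dually, if $r=\cla\epsilon_{\raut'}$ and $\ell=\cla u_{\laut'}$, the left-class of the whole word is $\cla{u\sigma}_{\laut'}$, determined by $\ell$ and $\sigma$ because $\sim_{\laut'}$ is a right congruence. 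I then set both final functions to $\bar\epsilon$.

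The last step would be a routine verification that $\bim''=\tuple{\laut',\raut',\bout'',\bar\epsilon,\bar\epsilon}$ realizes $f$: for an input $w$ of length $n\geq 1$, only the first position satisfies $\ell=\cla\epsilon_{\laut'}$ and only the last satisfies $r=\cla\epsilon_{\raut'}$ (for $n=1$ the single position satisfies both), so the output telescopes to $\lfinal(\cla w_\raut)\,\bout(l_0,w,r_0)\,\rfinal(\cla w_\laut)=\sem{\bim}(w)=f(w)$. As $\laut'$ and $\raut'$ are $\var$-automata, $\bim''$ is the desired $\var$-bimachine without final outputs. I expect the only genuinely delicate point to be this locality argument, guaranteeing that each final output is emitted exactly once and with the correct argument; the \good hypothesis is what makes the ``first'' and ``last'' positions detectable inside $\var$, so that the emission is well-defined.
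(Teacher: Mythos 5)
Your proof is correct and takes essentially the same route as the paper's: both refine the left and right automata by a product with the empty-word-congruence automata (in $\var$ by the \good hypothesis together with closure under intersection), absorb $\lfinal$ and $\rfinal$ into the letter output emitted at the first and last positions, and set both final functions to $\bar\epsilon$. Your explicit locality remark---that the argument to be fed to $\lfinal$ is really $\cla{\sigma v}_{\raut}$, recoverable from $\sigma$ and the current right state because $\sim_\raut$ is a left congruence---is in fact stated more carefully than in the paper's own write-up, which silently abuses notation at exactly that point.
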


\begin{proof}
Let $f:\Sigma^+\rightarrow \Sigma^*$ be a $\var$-transduction, and let $\bim=\tuple{\laut,\raut,\bout,\lfinal,\rfinal}$ be a $\var$-bimachine realizing $f$.
We denote by $\sim_\epsilon$ the empty word congruence over $\Sigma$, and by $\laut_\epsilon$ and $\raut_\epsilon$ the left and right automata associated with $\sim_\epsilon$, respectively.
Since ${\sim_\epsilon}\in \var$ by hypothesis, $\laut_\epsilon$ and $\raut_\epsilon$ are $\var$-automata.
We define $\bim'=\tuple{\laut\times\laut_\epsilon,\raut\times\raut_\epsilon,\bout',\overline \epsilon,\overline \epsilon}$ with
$\bout'((\cla u _\laut,\cla u_\epsilon),\sigma,(\cla v _\raut,\cla v_\epsilon))=\bout(\cla u _\laut,\sigma,\cla v _\raut)$ if $u,v\neq \epsilon$,  $\lfinal(\cla v _\raut)\bout(\cla u _\laut,\sigma,\cla v _\raut)$ if $u=\epsilon$ and $v\neq \epsilon$,  $\bout(\cla u _\laut,\sigma,\cla v _\raut)\rfinal(\cla u _\laut)$ if $u\neq\epsilon$ and $v= \epsilon$, and $\lfinal(\cla v _\raut)\bout(\cla u _\laut,\sigma,\cla v _\raut)\rfinal(\cla u _\laut)$ if $u,v= \epsilon$.
By construction $\bim'$ realizes $f$, and since ${({\approx_\bim}\sqcap{\sim_\epsilon})} \finer {\approx_{\bim'}}$, $\bim'$ is indeed a $\var$-bimachine.
\end{proof}

\begin{cor}
Let $\var$ be an \good congruence class equivalent to a fragment \F.
Then a transduction is a $\var$-transduction if and only if it is definable by a 2-\F-transducer.
\end{cor}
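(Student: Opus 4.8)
The plan is to obtain this corollary as a direct combination of the two preceding results, Proposition~\ref{prop:good} and Lemma~\ref{lem:bim-logic}, without introducing any new constructions. The key observation is that Lemma~\ref{lem:bim-logic} already gives the equivalence between 2-\F-transducers and $\var$-bimachines \emph{without final outputs}, so the only gap to bridge is the restriction on final outputs, and this is precisely what the hypothesis that $\var$ is \good closes via Proposition~\ref{prop:good}.

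For the forward direction, I would start from the assumption that $f$ is a $\var$-transduction. Since $\var$ is \good, Proposition~\ref{prop:good} yields a $\var$-bimachine \emph{without final outputs} realizing $f$. Because $\var$ is equivalent to the fragment \F, Lemma~\ref{lem:bim-logic} then applies and produces a 2-\F-transducer defining $f$. For the converse, I would assume $f$ is definable by a 2-\F-transducer; Lemma~\ref{lem:bim-logic} directly gives a $\var$-bimachine without final outputs realizing $f$, which is in particular a $\var$-bimachine, so by Corollary~\ref{cor:bim-trans} the transduction $f$ is $\var$-rational, \ie a $\var$-transduction. Note that the \good assumption is only needed in the forward direction: the backward direction holds for any class equivalent to \F.

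The only subtle point to verify is that the two invocations line up correctly on the "without final outputs" qualifier. Lemma~\ref{lem:bim-logic} is stated for $\var$-bimachines without final outputs in both directions, so in the forward direction I must feed it exactly the object produced by Proposition~\ref{prop:good} (a $\var$-bimachine without final outputs), and in the backward direction I simply observe that the object it returns is a legitimate $\var$-bimachine to which Corollary~\ref{cor:bim-trans} applies. I do not expect any genuine obstacle here: both halves are short and the real work was already carried out in establishing Proposition~\ref{prop:good} and Lemma~\ref{lem:bim-logic}. The corollary is essentially a bookkeeping statement assembling those two facts together with the equivalence between $\var$-bimachines and $\var$-transducers (Corollary~\ref{cor:bim-trans}).
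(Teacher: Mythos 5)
Your proof is correct and follows essentially the same route as the paper: combine Proposition~\ref{prop:good} with Lemma~\ref{lem:bim-logic}, using Corollary~\ref{cor:bim-trans} to close the backward direction (a step the paper leaves implicit in the phrase ``exactly characterize''). Your explicit attention to matching the ``without final outputs'' qualifier is a fair reading of where the only subtlety lies, and there is no gap.
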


\begin{proof}
According to Proposition~\ref{prop:good}, we have that $\var$-bimachines without final outputs exactly characterize $\var$-transductions.
Thus, from Lemma~\ref{lem:bim-logic} we have the equivalence between $\var$-transductions and 2-\F-transductions.
\end{proof}
From the previous corollary we can immediately deduce:
\begin{cor}%
\label{cor:good-Fc2F}
Let $\var$ be an \good congruence class equivalent to a fragment \F\ such that \Fc\ and 2-\F\ define the same pointed languages.
Then a transduction is a $\var$-transduction if and only if it is an \F-transduction.
\end{cor}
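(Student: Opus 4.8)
The plan is to chain the immediately preceding corollary with the hypothesis, so that the real content reduces to a purely syntactic translation of formulas. First I would invoke the previous corollary: since \var is an \good congruence class equivalent to \F, a transduction is a \var-transduction if and only if it is definable by a 2-\F-transducer. This leaves only one thing to prove, namely that 2-\F-transducers and \F-transducers (whose position formulas are \Fc-sentences) define exactly the same transductions.

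To see this, I would note that both kinds of transducer have the shape $\tuple{K,\phi_\dom,\tuple{\psi_v}_{v\in K}}$, where the domain sentence $\phi_\dom$ is an \F-sentence in either case, and they differ only in whether each position formula $\psi_v$ is an \Fc-sentence or a 2-\F formula. The transduction realized by such a transducer depends only on $K$, on the language $\sem{\phi_\dom}$, and on the pointed languages $\sem{\psi_v}$ for $v\in K$. Hence, to turn an \F-transducer into an equivalent 2-\F-transducer (and conversely), I would keep $K$ and $\phi_\dom$ unchanged and replace each $\psi_v$ by a formula of the other kind defining the same pointed language. The existence of such replacements is precisely the assumption that \Fc and 2-\F define the same pointed languages: for every \Fc-sentence $\psi_v$ there is a 2-\F formula $\psi'_v$ with $\sem{\psi'_v}=\sem{\psi_v}$, and symmetrically. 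Performing these substitutions in both directions yields the desired expressive equivalence, and combining it with the previous corollary gives the statement.

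The only point requiring a moment of care will be that the standing well-definedness and functionality requirements on logical transducers --- that for each input word in the domain and each position exactly one of the $\psi_v$ holds --- are preserved by this formula-by-formula substitution. Since these conditions are phrased entirely in terms of the pointed languages $\sem{\psi_v}$, which are left invariant by construction, well-definedness and functionality transfer automatically; thus no genuine obstacle arises and the corollary follows immediately.
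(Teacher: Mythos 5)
Your proof is correct and matches the paper's approach: the paper derives this corollary as an immediate consequence of the preceding corollary (equivalence of $\var$-transductions and 2-\F-transducers), exactly as you do. Your explicit formula-by-formula substitution argument, including the observation that well-definedness and functionality depend only on the pointed languages $\sem{\psi_v}$, is precisely the content the paper leaves implicit in the phrase ``we can immediately deduce.''
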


\subsubsection{From \Fc\ to 2-\F and back}%
\label{sec:Fc2F}
When \F\ is a fragment equivalent to some congruence class \var, logical transductions defined with 2-\F\ formulas trivially have the same expressive power as \var-bimachines.
In this section we shall give some sufficient properties on a logical fragment \F under which \Fc-formulas and 2-\F-formulas are equivalent, \ie, define the same pointed languages.
For this we introduce a third logical formalism for pointed words:

A pointed word over an alphabet $\Sigma$ can alternatively be seen a logical structure over $\struct_{\Sigma\uplus \dot \Sigma}=\set{{(\sigma(x))}_{\sigma\in\Sigma},{(\dot\sigma(x))}_{\sigma\in\Sigma},x<y}$, \ie, a word over an extended alphabet which contains a pointed copy of each letter.
Note that the number of pointed positions is not necessarily one.
However we will see in the next example that restricting to one pointed position can be enforced by a formula $\phi_\text{pointed}$.

\begin{exa}
  The language defined by the formula in Example~\ref{ex:pointed-language} is:
  \[\set{u\alpha u'\dot\gamma v|\ u,u',v\in \Sigma^*}\cup\set{u \dot\gamma v\beta v'|\ u,v,v'\in \Sigma^*}\]

We can define this language by an \MSO-sentence over the extended alphabet $\Sigma\uplus \dot \Sigma$.
First we define the formula:
\[\phi_{\text{pointed}}:=\tuple{\exists x\ \dot\Sigma(x)} \wedge \tuple{\forall x,y\ \dot\Sigma(x)\wedge \dot\Sigma(y) \rightarrow x=y}\]
The formula $\phi_{\text{pointed}}$ specifies that exactly one position holds a pointed letter, with predicate $\dot\Sigma(x)$ being a shortcut for $\bigvee_{\sigma\in\Sigma}\dot\sigma(x)$.
Finally we obtain the formula:
\[\phi_{\text{pointed}}\wedge \exists c\ \dot\gamma(c)\wedge \tuple{\tuple{\exists x\ x<c \wedge \alpha(x)}\vee \tuple{\exists x\ x>c \wedge \beta(x)}}\]
This is just the formula of Example~\ref{ex:pointed-language} where $\con$ is replaced by the variable $c$ which is existentially quantified, and holds a pointed letter.
\end{exa}

Our goal here is to show that \Fc-formulas and 2-\F-formulas define the same pointed languages under the following assumption (1)--(3). Assumption (4) is used to show that a corresponding congruence class would be \good. 
\begin{enumerate}
\item \Fc-formulas over an alphabet $\Sigma$ and \F-formulas over the extended alphabet $\Sigma\uplus \dot \Sigma$ define the same pointed languages.
\item A language over the alphabet $\Sigma$ is definable by an \F-formula over $\Sigma$ if and only if it is definable by an \F-formula over a larger alphabet $\Sigma\cup\Gamma$.
\item \F-languages are closed under \emph{pointed concatenation}, meaning that for any two \F-languages $L_1,L_2$ over an alphabet $\Sigma$ and a fresh symbol $\sharp$, $L_1\cdot\sharp\cdot L_2$ is an \F-language over $\Sigma\uplus\set{\sharp}$.
\item $\set{\epsilon}$ is an \F-language.
\end{enumerate}
These assumptions may seem quite strong and not necessary, and they are, however they cover several well studied logical fragments of \MSO, as
shown in Section~\ref{sec:decidable-fragments}.
\begin{rem}
It seems that most of the known logical fragments that have access to the linear order and for which there is an equivalent congruence class satisfy properties (1)--(4). 
A non-example is the fragment of first-order logic with successor, \FOsuc, which is not closed under pointed concatenation.
\end{rem}

The following results establish that under assumptions (1)--(3), 
\F, \Fc and 2-\F coincide.
Figure~\ref{fig:logics} summarizes how these assumptions are used to prove it.

\begin{figure}


  \begin{tikzpicture}[baseline=0, inner sep=0, outer sep=0, minimum size=0pt]
  \begin{scope}

    \draw(0,1) node (Fc) {\Fc};
    \draw(2,1) node (F) {\F};
    \draw[white] (Fc) -- (F) node[black,midway]{$=$} node[black,midway,above=1mm] {\small (1)};

    \draw(4,1) node (2F) {2-\F};
    \draw[white] (F) -- (2F)
    node[black,midway,above=1mm] (nsub) {$\subseteq$}
    node[black,midway,below=1mm] (nsup) {$\supseteq$};
    ;

    \draw(3,2.5) node (two) {(2)};
    \draw[white] (two) -- (nsub)
    node[black,midway] {\rotatebox{270}{$\implies$}}
    node[black,midway,right=2mm] {\small Lemma~\ref{lem:Fto2F}};

    \draw(3,-0.5) node (three) {(2)-(3)};
    \draw[white] (three) -- (nsup)
    node[black,midway] {\rotatebox{90}{$\implies$}}
    node[black,midway,right=2mm] {\small Lemma~\ref{lem:2FtoF}};

  \end{scope}
\end{tikzpicture}


  \caption{Logics \F, 2-\F and \Fc coincide under assumptions (1)--(3).}\label{fig:logics} 
\end{figure}

\begin{lem}%
\label{lem:Fto2F}
Let \F\ be a fragment equivalent to some congruence class $\var$.
Under assumption (2), any \F-definable pointed language is 2-\F-definable.
\end{lem}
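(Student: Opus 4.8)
The plan is to pass from a defining formula over the extended alphabet to the algebraic side via the equivalence between \F\ and $\var$, and then to split the recognizing congruence at the pointed position, producing a finite disjunction of pairs. Concretely, let $P$ be a pointed language, encoded (as in the preceding example) as a language of words over $\Sigma\uplus\dot\Sigma$ carrying exactly one dotted letter, and defined there by an \F-sentence $\phi$. Since \F\ is equivalent to $\var$, the language $\sem\phi\subseteq(\Sigma\uplus\dot\Sigma)^*$ is a $\var$-language, hence recognized by a congruence $\sim$ of finite index in $\var(\Sigma\uplus\dot\Sigma)$ (for instance its syntactic congruence, which lies in $\var$ because $\sem\phi$ is a $\var$-language). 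Let $M=(\Sigma\uplus\dot\Sigma)^*/_\sim$ be the associated finite monoid, let $h\colon(\Sigma\uplus\dot\Sigma)^*\to M$ be the quotient morphism, and let $A\subseteq M$ be the set of classes included in $\sem\phi$, so that $w\in\sem\phi$ iff $h(w)\in A$.

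A pointed word $(w,i)$ with $w[i]=\sigma$ is encoded as $u\dot\sigma v$ where $u,v\in\Sigma^*$ are the prefix before and the suffix after position $i$. As $h$ is a morphism, $h(u\dot\sigma v)=h(u)\,h(\dot\sigma)\,h(v)$, so membership $(w,i)\in P$ depends only on the triple $\tuple{h(u),\sigma,h(v)}$: it holds exactly when $h(u)\,h(\dot\sigma)\,h(v)\in A$. I would therefore enumerate the finitely many triples $(m_1,\sigma,m_2)\in M\times\Sigma\times M$ with $m_1\,h(\dot\sigma)\,m_2\in A$, and produce one disjunct of the target 2-\F\ formula for each.

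The key step, and the only use of assumption (2), is to express the prefix and suffix conditions $h(u)=m_1$ and $h(v)=m_2$ by \F-formulas \emph{over} $\Sigma$, as the 2-\F\ syntax demands. For $m\in M$, the language $h^{-1}(m)\subseteq(\Sigma\uplus\dot\Sigma)^*$ is an \F-language over the extended alphabet (each $\sim$-class is recognized by $\sim\in\var$), so it is defined by some \F-sentence $\chi_m$. Then $\set{u\in\Sigma^*\mid h(u)=m}=\set{u\in\Sigma^*\mid u\models\chi_m}$ is a $\Sigma$-language that is \F-definable over the larger alphabet $\Sigma\uplus\dot\Sigma$; by assumption (2) it is \F-definable over $\Sigma$, say by $\theta_m$. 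I then set
$$F\ :=\ \bigvee_{(m_1,\sigma,m_2)\,:\,m_1\,h(\dot\sigma)\,m_2\in A}(\theta_{m_1},\theta_{m_2})_{\set\sigma},$$
which is a finite 2-\F\ formula. Unfolding the semantics of 2-formulas, $(u\dot\sigma v,|u|+1)\models F$ iff some disjunct matches, i.e. $u\models\theta_{m_1}$, $v\models\theta_{m_2}$ and $m_1\,h(\dot\sigma)\,m_2\in A$ for some $m_1,m_2$; by the definition of $\theta_{m_1},\theta_{m_2}$ this is $h(u)\,h(\dot\sigma)\,h(v)\in A$, i.e. $(w,i)\in P$. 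Hence $F$ defines $P$.

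The main obstacle is exactly this alphabet reduction: the recognizing congruence and the class-defining formulas $\chi_m$ live over $\Sigma\uplus\dot\Sigma$, whereas a 2-\F\ formula may only use \F-formulas over $\Sigma$ inside its pairs; assumption (2) is what bridges that gap. The remaining ingredients are routine: the finiteness of $M$ ensures the disjunction is finite, and the morphism property of $h$ makes the split at the pointed position exact.
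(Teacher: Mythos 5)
Your overall strategy is the same as the paper's: use the equivalence of \F and $\var$ to turn the defining \F-sentence over $\Sigma\uplus\dot\Sigma$ into an algebraic recognizer (you take a recognizing congruence and its quotient monoid where the paper takes a $\var$-automaton and enumerates its transitions on dotted letters), split each pointed word at its dotted position into finitely many prefix/suffix cases, and invoke assumption (2) to bring the prefix and suffix formulas down to the alphabet $\Sigma$. The enumeration of triples and the final verification are fine. However, one step is not justified as written: the application of assumption (2). That assumption requires the $\Sigma$-language $P_m=\set{u\in\Sigma^*\mid h(u)=m}$ to be \emph{defined} by an \F-formula over the extended alphabet, i.e. a formula whose models among \emph{all} words of $(\Sigma\uplus\dot\Sigma)^*$ are exactly $P_m$. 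The formula $\chi_m$ you exhibit defines $h^{-1}(m)$, which in general is a strict superset of $P_m$: nothing prevents a $\sim$-class from containing words with dotted letters. The set equality $\set{u\in\Sigma^*\mid h(u)=m}=\set{u\in\Sigma^*\mid u\models\chi_m}$ only says that $\chi_m$ characterizes $P_m$ \emph{among undotted words}; it does not make $\chi_m$ a defining formula for $P_m$ over $\Sigma\uplus\dot\Sigma$, so the hypothesis of (2) is asserted but not established.

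The gap is local and can be closed inside your own framework; the repair is precisely the monoid analogue of the paper's remark that ``the words recognized by $\aut$ only have one pointed letter''. Restrict attention to the triples that actually occur in your disjunction, i.e. those with $m_1\,h(\dot\sigma)\,m_2\in A$. For such a triple, $h^{-1}(m_1)\subseteq\Sigma^*$: if some $x\in h^{-1}(m_1)$ contained a dotted letter, pick any $y\in h^{-1}(m_2)$ (every class is nonempty since $h$ is onto $M$); then $h(x\dot\sigma y)=m_1\,h(\dot\sigma)\,m_2\in A$, so $x\dot\sigma y\in\sem{\phi}$, a word with at least two dotted letters, contradicting that $\sem{\phi}$ is the encoding of a pointed language. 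The same argument applies to $m_2$. Hence, for the $m$'s that matter, $\sem{\chi_m}=P_m$ exactly and assumption (2) applies to $\chi_m$ itself. (Alternatively, one can handle all $m\in M$ at once: $\Sigma^*$ is \F-definable over $\Sigma$, hence over $\Sigma\uplus\dot\Sigma$ by (2), and $\var$-languages are closed under intersection, so $h^{-1}(m)\cap\Sigma^*$ is \F-definable over the extended alphabet, and then (2) yields $\theta_m$.) With either repair your proof is correct and essentially coincides with the paper's.
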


\begin{proof}

Let $\phi$ be an \F-formula recognizing a pointed language over the extended alphabet.
Since \F\ is equivalent to the congruence class \var, let $\aut$ be a $\var(\Sigma\uplus \dot\Sigma)$-automaton, with transition relation $\Delta$, recognizing $L(\phi)$.
Then we define

\[\psi= \bigvee_{(p,\dot \sigma,q)\in \Delta} {(\phi_I^p,\phi_q^F)}_{\set{\sigma}}\]
where $\phi_I^p,\phi_q^F$ are defined respectively as \F-formulas recognizing the language of words going from the initial states to $p$ and from $q$ to the final states. These languages are indeed definable in \F, since they are recognized by an automaton obtained from $\aut$ by changing the initial and final states (which does not modify the transition congruence).
Since the words recognized by $\aut$ only have one pointed letter, the formulas defined are \F-formulas over the regular alphabet, under assumption (2).
\end{proof}

\begin{lem}%
\label{lem:2FtoF}
Let \F\ be a fragment equivalent to some congruence class $\var$.
Under assumption (2)--(3), any 2-\F-definable pointed language is \F-definable. 
\end{lem}

\begin{proof}
Let us show that under assumption (3), a 2-\F\ formula can be transformed into a disjunction of \F-formulas recognizing pointed concatenations of \F-languages.
Let $F=\bigvee_{1\leq i\leq n} {(\phi_i,\phi_i')}_{\Gamma_i}$ be a 2-\F\ formula.
Up to decomposing ${(\phi_i,\phi_i')}_{\Gamma_i}$ into a disjunction, we can assume that $\Gamma_i=\set{\sigma_i}$.
According to assumption (3), there exists an \F-formula $\chi_i$ such that $\sem {\chi_i}=\sem{\phi_i}\dot{\sigma}_i\sem{\phi_i'}$.
Under assumption (2), these can be seen as \F-languages over $\Sigma\uplus \dot\Sigma$.
Since \F-languages are $\var$-languages, they are closed under Boolean operations.
Hence, there exists an \F-formula $\chi$ recognizing the pointed language $\bigcup_{1\leq i\leq n} \sem{\chi_i}$.
\end{proof}

\begin{prop}%
\label{prop:Fc2F}
Let \F\ be a fragment equivalent to some congruence class $\var$.
Under assumptions (1)--(3), the logics \Fc\ and 2-\F\ are equivalent. 
\end{prop}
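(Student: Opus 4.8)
The plan is to assemble the proposition directly from the two preceding lemmas together with assumption (1), exactly as laid out in Figure~\ref{fig:logics}. The goal is to show that \Fc and 2-\F define the same class of pointed languages, and I would obtain this by sandwiching both formalisms against plain \F-formulas over the extended alphabet $\Sigma \uplus \dot\Sigma$.

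First I would invoke assumption (1), which asserts that \Fc-formulas over $\Sigma$ and \F-formulas over $\Sigma \uplus \dot\Sigma$ define exactly the same pointed languages; this lets me replace \Fc by \F-over-the-extended-alphabet throughout. Next, Lemma~\ref{lem:Fto2F} (which relies on assumption (2)) supplies one inclusion: every pointed language definable by an \F-formula over the extended alphabet is 2-\F-definable. Conversely, Lemma~\ref{lem:2FtoF} (which relies on assumptions (2)--(3)) supplies the reverse inclusion: every 2-\F-definable pointed language is \F-definable over the extended alphabet. Chaining these two inclusions shows that \F over the extended alphabet and 2-\F define precisely the same pointed languages, and composing this with the identification provided by assumption (1) yields the claimed equivalence between \Fc and 2-\F.

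Since all the technical content has already been discharged in Lemmas~\ref{lem:Fto2F} and~\ref{lem:2FtoF}, I do not expect any genuine obstacle here: the proof is purely a matter of composing the two inclusions with the identification of assumption (1). The only point that requires care is bookkeeping, namely ensuring that at each step ``pointed language'' is interpreted consistently as a set of pointed words $(w,i)$, so that the equivalences coming from assumption (1) and from the two lemmas can be concatenated without any mismatch in semantics.
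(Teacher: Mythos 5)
Your proof is correct and follows exactly the paper's own argument: combine Lemma~\ref{lem:Fto2F} and Lemma~\ref{lem:2FtoF} (using assumptions (2)--(3)) to identify \F-definability over the extended alphabet with 2-\F-definability, then use assumption (1) to translate between \Fc over $\Sigma$ and \F over $\Sigma\uplus\dot\Sigma$. There is nothing to add; the bookkeeping point you mention is handled implicitly in the paper as well.
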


\begin{proof}
According to Lemmas~\ref{lem:Fto2F} and~\ref{lem:2FtoF}, under assumptions (2) and (3), a pointed language is \F-definable if and only if it is 2-\F-definable.
Then, by assumption (1) we obtain the equivalence.
\end{proof}

We give sufficient conditions under which a congruence class corresponding to a logical fragment is \good.
\begin{prop}%
\label{prop:good-enough}
Let \var be a congruence class equivalent to some fragment \F satisfying properties (1) and (4).
Then \var is \good.
\end{prop}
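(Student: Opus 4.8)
The plan is to exhibit, for every alphabet $\Sigma$, a single \F-definable language whose syntactic congruence is exactly the empty word congruence $\sim_\epsilon$, and then to push that language through the logic-algebra equivalence. The natural candidate is the singleton $\set{\epsilon}$, which is an \F-language by assumption~(4). First I would check that its syntactic congruence is indeed $\sim_\epsilon$: for words $u,v$ one has $u\equiv_{\set{\epsilon}}v$ iff $xuy=\epsilon\Leftrightarrow xvy=\epsilon$ for all $x,y$, and since $xuy=\epsilon$ holds exactly when $x=u=y=\epsilon$, specializing to $x=y=\epsilon$ forces $u=\epsilon\Leftrightarrow v=\epsilon$, a condition that conversely clearly suffices. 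Hence $\equiv_{\set{\epsilon}}$ has the two classes $\set{\epsilon}$ and $\Sigma^+$, so ${\equiv_{\set{\epsilon}}}={\sim_\epsilon}$.

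Next I would transport $\set{\epsilon}$ across the equivalence. Since \F and \var recognize the same languages and $\set{\epsilon}$ is an \F-language by~(4), the language $\set{\epsilon}$ is a \var-language. Recall that a language is a \var-language if and only if its syntactic congruence belongs to \var, which uses that \var is closed under coarser congruences and that the syntactic congruence is the coarsest recognizer. Combining this with the previous computation gives $\sim_\epsilon={\equiv_{\set{\epsilon}}}\in\var(\Sigma)$. As $\Sigma$ is arbitrary, the empty word congruence over every alphabet lies in \var, which is precisely the definition of \var being \good.

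The one delicate point, and where assumption~(1) enters, is the handling of the empty word: logical structures here are non-empty words, so one must make sure that the notion of \F-language invoked in~(4) really matches the algebraic notion of a \var-language over $\Sigma^*$. Assumption~(1), which identifies \Fc over $\Sigma$ with \F over the extended alphabet $\Sigma\uplus\dot\Sigma$, is what lets the logic-algebra correspondence apply uniformly to this borderline language, legitimizing the transfer above. I expect this bookkeeping to be the only genuine obstacle; everything else reduces to the syntactic-congruence characterization of \var-languages.
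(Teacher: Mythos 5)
Your proposal is correct and follows essentially the same route as the paper: assumption (4) yields $\set{\epsilon}$ as an \F-language, the logic-algebra equivalence turns it into a $\var$-language, and closure of $\var$ under coarser congruences (equivalently, the fact that a language is a $\var$-language iff its syntactic congruence is in $\var$) gives ${\sim_\epsilon}\in\var(\Sigma)$; your explicit computation that the syntactic congruence of $\set{\epsilon}$ is exactly the empty word congruence is just the step the paper leaves implicit. The one divergence is the role you assign to assumption (1): the paper invokes (1) precisely to pass from ``$\set{\epsilon}$ is an \F-language'' to ``$\set{\epsilon}$ is an \F-language over \emph{every} alphabet $\Sigma$'' (the per-alphabet reading you take for granted when you write ``as $\Sigma$ is arbitrary''), not to handle the empty-word-versus-nonempty-structure bookkeeping you describe. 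This is a difference of gloss rather than a gap, since your core argument goes through under either reading of (4).
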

\begin{proof}
According to (4), $\set{\epsilon}$ is an \F-language, and according to (1), $\set{\epsilon}$ is an \F-language over any alphabet $\Sigma$.
Thus for any alphabet $\Sigma$ there exists a congruence in $\var(\Sigma)$ recognizing $\set{\epsilon}$. Since $\var$ is closed under taking coarser congruences, it contains all the empty word congruences.
\end{proof}

\begin{thm}%
\label{thm:equiv}
Let $\var$ be a congruence class equivalent to some fragment \F\ satisfying properties (1)--(4). 
Then a transduction is a $\var$-transduction if and only if it is an \F-transduction.
\end{thm}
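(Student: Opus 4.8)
The plan is to assemble the three auxiliary results already established under assumptions (1)-(4), so that the theorem becomes a direct packaging of them. First I would apply Proposition~\ref{prop:good-enough}: since $\var$ is equivalent to \F and properties (1) and (4) hold, this immediately gives that $\var$ is an \good congruence class. Second, I would invoke Proposition~\ref{prop:Fc2F}: under assumptions (1)-(3), the logics \Fc and 2-\F define the same pointed languages. These are exactly the two hypotheses required by Corollary~\ref{cor:good-Fc2F}, namely that $\var$ is \good, equivalent to \F, and that \Fc coincides with 2-\F as pointed languages.

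Applying Corollary~\ref{cor:good-Fc2F} with these facts then yields directly that a transduction is a $\var$-transduction if and only if it is an \F-transduction, which is the statement to prove. The only point worth double-checking is that the chain of equivalences feeding that corollary is sound: the \good property is what permits dropping final outputs (Proposition~\ref{prop:good}), $\var$-bimachines without final outputs coincide with 2-\F-transducers via Lemma~\ref{lem:bim-logic} (using the equivalence of $\var$ and \F), and 2-\F-transducers and \Fc-transducers are the same logical model once 2-\F and \Fc agree as pointed languages.

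I do not expect any genuine obstacle here. All four assumptions feed cleanly into the prerequisites --- (1)-(3) into the logical equivalence between \Fc and 2-\F, and (1) and (4) into the \good property of $\var$ --- and the conceptual work has already been absorbed into Propositions~\ref{prop:Fc2F} and \ref{prop:good-enough}. Thus the final step is merely to cite Corollary~\ref{cor:good-Fc2F}, and the theorem follows with no additional argument.
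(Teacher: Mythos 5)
Your proof is correct and follows exactly the paper's own argument: Proposition~\ref{prop:good-enough} (using (1) and (4)) gives that $\var$ is \good, Proposition~\ref{prop:Fc2F} (using (1)--(3)) gives the equivalence of \Fc and 2-\F, and Corollary~\ref{cor:good-Fc2F} then yields the theorem. Nothing is missing; this is the same decomposition the paper uses.
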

\begin{proof}
From Proposition~\ref{prop:good-enough}, we know that \var is \good. Hence from Corollary~\ref{cor:good-Fc2F} and Proposition~\ref{prop:Fc2F}, we obtain the result.
\end{proof}

\subsection{Decidable fragments}%
\label{sec:decidable-fragments}

\begin{lem}%
\label{lem:fo-fo2}
The fragments \FO, \FOd and \Bsig{1} all satisfy properties (1)--(4). 
\end{lem}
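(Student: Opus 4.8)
The plan is to verify the four properties for each of the three fragments \FO, \FOd and \Bsig{1} separately, exploiting the fact that properties (2) and (4) are uniform and elementary, while the genuine work lies in (1) and (3). For (4), the singleton $\set\epsilon$ is defined in all three fragments by $\neg\exists x\ \top$: this is first-order with a single variable (hence in \FOd), and since $\exists x\ \top$ is a \Sig{1}-sentence its negation lies in \Bsig{1}. For (2), given a language $L\subseteq\Sigma^*$, the ``only if'' direction is obtained by conjoining a defining formula with $\neg\exists x\ \bigvee_{\gamma\in\Gamma}\gamma(x)$ forbidding the new letters (a negated \Sig{1} formula, available in each fragment and using one variable), and the ``if'' direction by substituting $\bot$ for every atom $\gamma(x)$ with $\gamma\in\Gamma$, which clearly preserves each fragment.

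Throughout I encode a pointed word $(w,i)$ over $\Sigma$ as a word over $\Sigma\uplus\dot\Sigma$ with a single dotted position, the admissible inputs being cut out by $\phi_{\text{pointed}}$. For \FO, property (1) is immediate: one renames $\con$ into a fresh variable $c$, quantifies it existentially together with $\dot\Sigma(c)$ and $\phi_{\text{pointed}}$, and translates each letter atom $\sigma(x)$ into $\sigma(x)\vee\dot\sigma(x)$, the converse being analogous; property (3) is the standard relativization, picking the unique $c$ with $\sharp(c)$ and relativizing a defining formula of $L_1$ (resp.\ $L_2$) to the positions $<c$ (resp.\ $>c$). The delicate cases are for \FOd, where the two-variable budget must be respected. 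Here the key observation---which is exactly the source of the extra power of the constant noted in the remark following the definition of \FOd-transducers---is that the unique marked position can be located by a \emph{scoped} witness that is immediately discarded: I translate a \FOdc-atom $\sigma(\con)$ into $\exists x\ \dot\sigma(x)$, the atom $x<\con$ into $\exists y\ (x<y\wedge\dot\Sigma(y))$, $\con<x$ into $\exists y\ (y<x\wedge\dot\Sigma(y))$, and $x=\con$ into $\dot\Sigma(x)$. Each gadget closes its auxiliary variable before returning, so it introduces no clash with the surrounding formula and the translation stays within two variables; the same guarded relativization, with ``before $\sharp$'' read as $\exists y\ (x<y\wedge\sharp(y))$, yields property (3) for \FOd.

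For \Bsig{1} the quantifier structure rules out the relativization of the previous paragraph, and I instead rely on the piecewise (subword) description of \Bsig{1}-languages together with the \emph{uniqueness} of the relevant marker. For property (3), since a word of $L_1\cdot\sharp\cdot L_2$ carries exactly one $\sharp$, its prefix before $\sharp$ contains a subword $s\in\Sigma^*$ iff the whole word contains the subword $s\sharp$, and symmetrically a suffix contains $t$ iff the whole word contains $\sharp t$; as membership in $L_1$ and $L_2$ is a Boolean combination of subword conditions and ``exactly one $\sharp$'' is piecewise testable (contains $\sharp$ but not $\sharp\sharp$), the language $L_1\cdot\sharp\cdot L_2$ is again such a Boolean combination, hence in \Bsig{1}. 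Property (1) is proved in the same spirit: a \Bsigc{1} pointed language is a Boolean combination of \Sig{1} conditions with the constant, each describing the existence of a subword pattern split around the single point $\con$, which over the extended alphabet becomes a subword condition using the unique dotted letter as delimiter; conversely a subword of the extended word mentioning two or more dotted letters is never present and is translated to $\bot$, so the two descriptions coincide.

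The main obstacle is precisely the interaction between the pointing mechanism and the syntactic restrictions of \FOd and \Bsig{1}. For \FOd one must show that the constant, which by design does not consume a variable, can be simulated over the extended alphabet without ever requiring a third variable; this is what the scoped-witness translation achieves, but it requires checking that the auxiliary quantifiers never leak into the surrounding two-variable context. For \Bsig{1} the obstacle is that neither relativization nor naive substitution of the constant preserves the fragment (a negated constant atom would force a universal guard), so the argument must be recast entirely in terms of subwords; the essential ingredient making this work is the uniqueness of the marker ($\sharp$, respectively the dotted letter), which turns ``before/after the marker'' into an ordinary subword condition. The \FO case, by contrast, is routine.
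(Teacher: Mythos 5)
Your proof is correct, and for \FO, for property (2), and for the constant-elimination part of property (1) for \FOd it is essentially the paper's own argument (the same scoped-witness gadgets $\sigma(\con)\mapsto\exists x\,\dot\sigma(x)$, $x<\con\mapsto\exists y\,(x<y\wedge\dot\Sigma(y))$, etc.). You diverge in two places. First, for property (3) you relativize quantifiers directly to the unique occurrence of $\sharp$ via the guard $\exists y\,(x<y\wedge\sharp(y))$, whereas the paper relativizes to the constant, forming the \Fc-sentence $\phi_1^{<\con}\wedge\phi_2^{>\con}\wedge\sharp(\con)\wedge\forall x\,\sharp(x)\rightarrow(x=\con)$, and then invokes property (1) to return to \F over the extended alphabet; the two are equivalent in substance, yours merely short-circuits the detour through (1). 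Second, and more substantially, for \Bsig{1} the paper stays purely syntactic (replace the constant by a fresh variable $c$, guard every quantifier block), while you argue language-theoretically through the description of \Bsig{1}-languages as Boolean combinations of subword conditions, exploiting uniqueness of the marker to turn ``before/after the marker'' into subword conditions with the marker as delimiter. This buys real robustness: the paper's construction for (1) ends with $\phi_{\text{pointed}}\wedge\exists c\,\dot\Sigma(c)\wedge\phi'$, where $\phi'$ is a Boolean combination of \Sig{1}-formulas with free variable $c$; this is not syntactically a \Bsig{1}-sentence, and only becomes one after distributing $\exists c$ over the Boolean combination (put it in disjunctive normal form), a step that is sound precisely because the dotted position is unique and which the paper leaves implicit---your subword argument avoids this issue entirely. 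What the paper's route buys in exchange is uniformity: its relativization template is fragment-agnostic, which is what makes its closing remark (that the argument ``should easily transfer to \Bsig{i}'') plausible, whereas your piecewise argument is specific to level~$1$ of the hierarchy. Two steps you leave implicit are routine but should be stated: the easy direction of (1) for \FOd (substitute $x=\con\wedge\sigma(x)$ for $\dot\sigma(x)$, which costs no variable), and, in the backward direction of (1) for \Bsig{1}, the dot-free subwords, which translate into \Sig{1}-conditions with the constant by requiring each witness position to satisfy $(x<\con)\vee(\con<x)$.
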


\begin{proof}
The fragments trivially satisfy (4).
Let us show that \FOd satisfies properties (1)--(3). The proof for \FO can easily be obtained with the same reasoning. 
Let us re-state the assumptions:
\begin{enumerate}
\item \FOdc-formulas over an alphabet $\Sigma$ and \FOd-formulas over the extended alphabet $\Sigma\uplus \dot \Sigma$ define the same pointed languages.
\item A language over the alphabet $\Sigma$ is definable by an \FOd-formula over $\Sigma$ if and only if it is definable by an \FOd-formula over a larger alphabet $\Sigma\cup\Gamma$.
\item \F-languages are closed under \emph{pointed concatenation}, meaning that for any two \FOd-languages $L_1,L_2$ over an alphabet $\Sigma$ and a fresh symbol $\sharp$, $L_1\cdot\sharp\cdot L_2$ is an \FOd-language over $\Sigma\uplus\set{\sharp}$.
\end{enumerate}
Let us show (1):
Let $\phi$ be an \FOd-formula over the extended alphabet $\Sigma\uplus \dot \Sigma$ recognizing a pointed language.
We syntactically replace any atomic formula $\dot\sigma(x)$ by $x=c\wedge \sigma(x)$.
Thus we obtain an \FOdc-formula recognizing the same pointed language.
Conversely let $\phi$ be an \FOdc-formula. We define as in Example~\ref{ex:pointed-language} the formula
\[
    \phi_{\text{pointed}}:=\tuple{\exists x\ \dot\Sigma(x)} \wedge \tuple{\forall x,y\ \dot\Sigma(x)\wedge \dot\Sigma(y) \rightarrow x=y}
\]
which is in \FOd.
Then we define $\phi'$ which is obtained from $\phi$ by syntactically replacing any atomic formula $\sigma(\con)$ by $\exists x\ \dot\sigma(x)$, $\sigma(x)$ by $\sigma(x)\vee \dot\sigma(x)$, $x<\con$ by $\exists y\ \dot\Sigma(y) \wedge (x<y)$ and $\con<x$ by $\exists y\ \dot\Sigma(y) \wedge (y<x)$.
Finally we obtain the \FOd-formula $\phi_{\text{pointed}}\wedge \phi'$, over the extended alphabet $\Sigma\uplus \dot \Sigma$, recognizing the same pointed language.

Let us show (2):
Let $\phi$ be an \FOd-formula over $\Sigma$ recognizing some language $L$.
Then in particular it is an \FOd-formula over $\Sigma\cup\Gamma$.
Let $\alpha=\neg \exists x\ \bigvee_{\gamma\in \Gamma\setminus\Sigma} \gamma(x)$ be a formula which specifies that no new letter appears in a word.
Then the formula $\phi\wedge \alpha$ is an \FOd-formula over $\Sigma\cup\Gamma$ which recognizes $L$.
Conversely let $\phi$ be an \FOd-formula over $\Sigma\cup\Gamma$ recognizing some language $L \subseteq \Sigma^*$.
Then one only has to syntactically change $\gamma(x)$ by $\bot$ for every predicate $\gamma \in \Gamma\setminus\Sigma$ in $\phi$ to obtain a formula $\phi'$ over $\Sigma$.
One can see that any model of $\phi$ is a model of $\phi'$ and \emph{vice versa}.

Let us show (3):
Given a formula $\phi$, we define $\phi^{<\con}$ which is the formula $\phi$ where all the quantifications are guarded and restrict variables to positions before the position $\con$. Formally, if $\phi=\exists x\ \chi(x)$, we define $\phi^{<\con} =\exists x\ (x<\con)\wedge\chi^{<\con} (x)$.
Let $\phi_1,\phi_2$ be two  \FOd-languages $L_1,L_2$ over an alphabet $\Sigma$ and let $\sharp\notin\Sigma$ be a fresh alphabet symbol.
We define $\Phi=\phi_1^{<\con}\wedge\phi_2^{>\con}\wedge \sharp(\con)\wedge \forall x\ \sharp(x)\rightarrow (x=\con)$, which recognizes the pointed language $L_1\dot\sharp L_2$.
Finally, from (1) there exists an \FOd-formula recognizing the same language.

Now we show that the fragment \Bsig{1} satisfies properties (1)--(3). 
Let us show (1):
Let $\phi$ be a \Bsig{1}-formula over the extended alphabet $\Sigma\uplus \dot \Sigma$ recognizing a pointed language.
We syntactically replace any atomic formula $\dot\sigma(x)$ by $x=c\wedge \sigma(x)$.
Thus we obtain a \Bsigc{1}-formula recognizing the same pointed language.
Conversely let $\phi$ be a \Bsigc{1}-formula. We define the formula  $\phi_{\text{pointed}}:=\tuple{\exists x\ \dot\Sigma(x)} \wedge \tuple{\forall x,y\ \dot\Sigma(x)\wedge \dot\Sigma(y) \rightarrow x=y}$ which is in \Bsig{1}.
Then we define $\phi'$ which is obtained from $\phi$ by syntactically replacing any atomic formula $\sigma(\con)$ by $\dot\sigma(c)$, $\sigma(x)$ by $\sigma(x)\vee \dot\sigma(x)$, $x<\con$ by $(x<c)$ and $\con<x$ by $ c<x$ where $c$ is a variable which does not appear in $\phi$.
Finally we obtain the \Bsig{1}-formula $\phi_{\text{pointed}}\wedge \exists c\ \dot\Sigma(c) \wedge \phi'$, over the extended alphabet $\Sigma\uplus \dot \Sigma$, recognizing the same pointed language.

Let us show (2):
Let $\phi$ be a \Bsig{1}-formula over $\Sigma$ recognizing some language $L$.
Then in particular it is a \Bsig{1}-formula over $\Sigma\cup\Gamma$.
Let $\alpha=\neg \exists x\ \bigvee_{\gamma\in \Gamma\setminus\Sigma} \gamma(x)$ be a formula which specifies that no new letter appears in a word.
Then the formula $\phi\wedge \alpha$ is a \Bsig{1}-formula over $\Sigma\cup\Gamma$ which recognizes $L$.
Conversely let $\phi$ be a \Bsig{1}-formula over $\Sigma\cup\Gamma$ recognizing some language $L \subseteq \Sigma^*$.
Then one only has to syntactically change $\gamma(x)$ by $\bot$ for every predicate $\gamma \in \Gamma\setminus\Sigma$ in $\phi$ to obtain a formula $\phi'$ over $\Sigma$.
One can see that any model of $\phi$ is a model of $\phi'$ and \emph{vice versa}.

Let us show (3):
Given a formula $\phi$, we define $\phi^{<\con}$ which is the formula $\phi$ where all the quantifications are guarded and restrict variables to positions before the position $\con$.
Formally, if $\phi=\exists x_1,\ldots,x_n\ \chi$ where $\chi$ is quantifier-free, we define $\phi^{<\con} =\exists x_1,\ldots,x_n\ (x_1<\con)\wedge\ldots\wedge (x_n<\con)\wedge \chi$.
If $\phi=Bool(\phi_1,\ldots,\phi_n)$ where the $\phi_i$s are \Sig{1}-formulas and $Bool(v_1,\ldots,v_n)$ is some propositional formula, then we define $\phi^{<\con} =Bool(\phi_1^{<\con},\ldots,\phi_n^{<\con})$.
Let $\phi_1,\phi_2$ be two  \Bsig{1}-languages $L_1,L_2$ over an alphabet $\Sigma$ and let $\sharp\notin\Sigma$ be a fresh alphabet symbol.
We define $\Phi=\phi_1^{<\con}\wedge\phi_2^{>\con}\wedge \sharp(\con)\wedge \forall x\ \sharp(x)\rightarrow (x=\con)$, which recognizes the pointed language $L_1\dot\sharp L_2$.
Finally, from (1) there exists a \Bsig{1}-formula recognizing the same language.
\end{proof}

\begin{thm}%
\label{thm:decidable-logic}
Given a transducer realizing a transduction $f$, one can decide if:
\begin{itemize}
\item $f$ is \FO-definable.
\item $f$ is \FOd-definable.
\item $f$ is \Bsig{1}-definable.
\end{itemize}
\end{thm}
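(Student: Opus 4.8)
The plan is to derive Theorem~\ref{thm:decidable-logic} as a direct corollary of the logic-algebra transfer result together with the decidability of $\var$-rationality, once a suitable decidable congruence class is attached to each of the three fragments. The essential observation is that \FO, \FOd\ and \Bsig{1} are each equivalent to a classical, \emph{decidable} congruence class, so that all the real work has already been done in the preceding sections.

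First I would recall the three classical correspondences between the fragments and congruence classes. The fragment \FO\ is equivalent to the class $\ap$ of aperiodic congruences, by Sch\"utzenberger~\cite{Schutzenberger65} and McNaughton--Papert~\cite{McnaughtonP71}; the fragment \FOd\ is equivalent to the class $\da$, by Th\'erien and Wilke~\cite{TherienW98}; and the fragment \Bsig{1} is equivalent to the class $\jtrivial$ of $\mathcal J$-trivial congruences, by Simon's theorem characterising piecewise testable languages. In each case the class is decidable: aperiodicity of a congruence, presented as a morphism to a finite monoid, is tested by searching for an $n$ with $m^n=m^{n+1}$ for every element $m$; membership in $\da$ is tested via the equation $(uvw)^nv(uvw)^n\sim(uvw)^n$ recalled in the examples; and membership in $\jtrivial$ is decidable through Simon's effective characterisation.

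Next I would invoke Lemma~\ref{lem:fo-fo2}, which guarantees that \FO, \FOd\ and \Bsig{1} each satisfy assumptions (1)--(4). Consequently, for each fragment \F\ with associated class $\var\in\set{\ap,\da,\jtrivial}$, Theorem~\ref{thm:equiv} applies and yields that a transduction is \F-definable if and only if it is a $\var$-transduction. Since $\var$ is a decidable congruence class, Theorem~\ref{thm:decide} provides a decision procedure, taking as input a transducer realizing $f$, that tests whether $f$ is $\var$-rational. Composing these two equivalences gives, for each of the three cases, an algorithm deciding \F-definability of $f$, which is exactly the claim.

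The proof is therefore a short corollary and there is no genuine obstacle remaining: the entire burden lies in Theorems~\ref{thm:equiv} and~\ref{thm:decide} and in Lemma~\ref{lem:fo-fo2}. The only point requiring care is confirming that the three fragments correspond to congruence classes whose membership problem is decidable, which is precisely the hypothesis that Theorem~\ref{thm:decide} demands; this is supplied by the three classical theorems above together with their effective, equational descriptions.
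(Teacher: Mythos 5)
Your proposal is correct and follows essentially the same route as the paper's own proof: it invokes the classical correspondences \FO/$\ap$, \FOd/$\da$ and \Bsig{1}/$\jtrivial$, combines Lemma~\ref{lem:fo-fo2} with Theorem~\ref{thm:equiv} to transfer these equivalences to transductions, and concludes by Theorem~\ref{thm:decide} using the decidability of the three congruence classes. The only difference is presentational (you sketch the decidability tests where the paper cites the literature), so there is nothing to add.
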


\begin{proof}
It was shown by~\cite{Schutzenberger61,McnaughtonP71} that \FO\ and $\ap$ are equivalent.
It was also shown in~\cite{TherienW98} that \FOd\ and $\da$ are equivalent.
The equivalence between \Bsig{1} and the class $\jtrivial$ of $\mathcal J$-trivial congruences is due partly to~\cite{Simon75} and can be found in~\cite{DiekertGK08}.
From Theorem~\ref{thm:equiv} and Lemma~\ref{lem:fo-fo2} we have the equivalences between \FO-transductions and $\ap$-transductions, \FOd-transductions and $\da$-transductions, and \Bsig{1}-transductions and $\jtrivial$-transductions respectively.
In the articles mentioned above it is also shown that the equivalences are effective, meaning that the congruence classes $\jtrivial$, $\da$ and $\ap$ are decidable, thus from Theorem~\ref{thm:decide} we obtain the result.
\end{proof}

\begin{rem}
Extensions of the above fragments with additional predicates can easily be shown to satisfy assumptions (1)--(4). 
A table from~\cite{DartoisP15} sums up many results of decidability concerning fragments of \FO with additional predicates.
For all these fragments, our Theorem~\ref{thm:decidable-logic} should carry over. Similarly, the proof for \Bsig{1} should easily transfer to \Bsig{i}, for any positive integer $i$.

\end{rem}


\section*{Conclusion}%
\label{sec:conclusion}
We have shown that the problem of deciding whether a bimachine realizes an aperiodic transduction is \pspace-complete. A question which remains open is whether the problem is still in \pspace if the input is given as a transducer instead of a bimachine.
Our main result is that any rational transduction can be completely characterized by a finite set of minimal bimachines, and we have introduced a logical formalism which allows to lift some known equivalences from rational languages to rational transductions.

One possible direction of investigation would be to formalize in algebraic terms sufficient, and possibly necessary, conditions under which a logic-algebra equivalence for languages can be lifted to transductions.
It would also be interesting to investigate the cases which are precisely not covered by our approach such as the logics \MSOeq or \FOsuc.
Another direction would be to try to lift the present results further to regular functions, \ie, functions realized by two-way transducers.
Indeed a characterization of a regular transduction in terms of minimal congruences would for instance yield a way to decide if a regular function is first-order definable. In~\cite{Bojanczyk14} such a characterization is given, but with the stronger \emph{origin semantics}, where every output position originates from an input position, which means that two transducers which realize the same function but produce outputs at different positions will not be considered equivalent under this semantics. Obtaining a similar characterization in origin-free semantics is a very ambitious objective and for now there is no known canonical way to define a regular function.

\section*{Acknowledgement}

The authors thank Anca Muscholl for her contribution in
early stages of this work.

\bibliographystyle{alpha}

\bibliography{main}

\newpage
\appendix

\end{document}